\newcommand{\eqdef}{\coloneqq}
\def\p{\mathbf{p}}
\def\q{\mathbf{q}}
\def\u{\mathbf{u}}
\newtheorem{theorem}{Theorem}
\newtheorem*{theorem*}{Theorem}
\newtheorem{lemma}{Lemma}
\newtheorem*{lemma*}{Lemma}
\newtheorem{definition}{Definition}
\newtheorem*{definition*}{Definition}
\begin{document} 
\title{Conditional quantum entropy from information-theoretic principles
}

\author{Sarah \surname{Brandsen}}\email{sarah.brandsen@duke.edu}
\address{
Department of Physics,
Duke University, 
Durham, NC, USA 27708}
\address{
Department of Mathematics and Statistics, Institute for Quantum Science and Technology,
University of Calgary, AB, Canada T2N 1N4}
\author{Isabelle Jianing \surname{Geng}}
\address{
Department of Mathematics and Statistics, Institute for Quantum Science and Technology,
University of Calgary, AB, Canada T2N 1N4}
\author{Mark M.~\surname{Wilde}}
\address{
Hearne Institute for Theoretical Physics, Department of Physics and Astronomy, and Center for Computation and Technology,
Louisiana State University, Baton Rouge, Louisiana 70803, USA}
\author{Gilad \surname{Gour}}
\address{
Department of Mathematics and Statistics, Institute for Quantum Science and Technology,
University of Calgary, AB, Canada T2N 1N4}

\begin{abstract}
In this work, we introduce an axiomatic approach for characterising quantum conditional entropy which relies on only three key axioms. More specifically, we demonstrate that monotonicity under conditionally unital and semi-causal channels, additivity, and invariance under isometries are sufficient properties to guarantee negativity of conditional entropy for maximally entangled states. These three axioms are additionally sufficient to demonstrate that conditional entropy is non-negative for separable bipartite states. Finally, we develop an operational approach for characterising entropy via games of chance, and show that this approach yields the same ordering as the axiomatic approach in the case in which the bipartite states are classical. 

% We introduce an axiomatic approach to quantum conditional entropies that relies only on minimal information-theoretic axioms, namely monotonicity under mixing and data-processing as well as additivity for product distributions. We find that these axioms induce sufficient structure to establish continuity in the interior of the probability simplex and meaningful upper and lower bounds, e.g., we find that every relative entropy must lie between the Rényi divergences of order 0 and ∞. We further show simple conditions for positive definiteness of such relative entropies and a characterisation in term of a variant of relative trumping. Our main result is a one-to-one correspondence between entropies and relative entropies.

\end{abstract}
\maketitle

\section{Introduction}

Conditional entropy quantifies the uncertainty of a composite physical system with two correlated subsystems, given access to only one of the subsystems. Previous work on conditional entropy for quantum bipartite states has demonstrated that, unlike classical conditional entropy, quantum conditional entropy can be negative~\cite{Cerf_1997} and achieves its minimum on maximally entangled states. This surprising property carries important connections to the number of bits which can be transmitted in a quantum dense coding protocol. Namely, a state with positive conditional entropy cannot even reach the classical limit, while a state with negative conditional entropy is able to transmit bits beyond the classical limit. Multiple previous works have investigated topics related to conditional entropy~\cite{Liu_2018, Patro_2017, Capel_2018, Gour5, Kuznetsova2010QuantumCE, doi:10.1063/1.5027495, PhysRevA.99.062119, Gour, PhysRevA.104.012417} and have found additional applications of conditional entropy to quantum cryptographic protocols~\cite{Brown_2021} as well as quantum state merging~\cite{Horodecki2005}.

The central role of conditional entropy in quantum information theory motivates the need for an axiomatic approach which defines the set of all quantum conditional entropies. Previous literature has extensively studied axiomatic derivations for other entropies~\cite{e13111945, e10030261}, beginning with approaches deriving the Shannon entropy~\cite{6773024} and progressing to a complete axiomatic approach for classical entropies and relative entropies~\cite{gour2021entropy}. More recently, an axiomatic approach for entropy based on core-concavity was developed and connections were made between such core-concave functions and classical conditional entropy~\cite{9064819}. In this work, we start with a minimal set of assumptions with strong information-theoretic motivation and demonstrate that these simple assumptions are sufficient to recover key properties of quantum conditional entropy such as negativity of conditional entropy for maximally entangled states, non-negativity for separable bipartite states, invariance under unitaries, and simplification under tensor products. 

At the core of our approach, we fully characterise the set of channels which are entropy non-decreasing and require that conditional entropy be monotonic under the action of such channels. More specifically, we require that any entropy non-decreasing channel will map states of maximal entropy to an output state which also has maximal entropy, a condition which is satisfied by \emph{conditionally unital} channels. Likewise, given that conditional entropy can be measured as how much information Bob has about Alice's system given that they share a bipartite state, any entropy non-decreasing channel cannot ``leak'' information from Alice's system to Bob's system. Thus, entropy non-decreasing channels must also satisfy the semi-causality condition outlined in~\cite{Beckman_2001}. The set of all entropy non-decreasing channels which take an input in system $A$ and output a state in system $B$ is then denoted as $\operatorname{CUSC}(\text{A} \rightarrow \text{B})$ where CUSC stands for conditionally unital, semi-causal channels.

This gives rise to a natural majorisation relation where $\sigma_{AB} \precsim \rho_{AB}$ (i.e. $\rho_{AB}$ has less uncertainty than $\sigma_{AB}$) if $\sigma_{AB}$ can be obtained by applying a CUSC channel to $\rho_{AB}$. Any conditional entropy is thus required to be monotonic under this partial ordering. We investigate the special case of classical CUSC channels and demonstrate that the ordering of classical bipartite states based on our axiomatic approach reduces to the previously established ordering for classical bipartite states~\cite{Gour}. 

An alternative, operationally motivated means for characterizing uncertainty is to utilize games of chance. It was found in~\cite{brandsen2021entropy} that games of chance provide an operational interpretation to conditional majorisation for classical bipartite states.  Games of chance are ideal for studying uncertainty, as the probability of winning a game with a given physical system depends solely on the uncertainty of the system's output. Such games of chance thus lead to a natural partial ordering between channels, where state $\rho_{AB}$ is ``less uncertain'' than state $\sigma_{AB}$ if $\rho_{AB}$ performs at least as well as $\rho_{AB}$ for any game of chance. In this work, we extend previous results by providing games of chance capable of characterising uncertainty of quantum bipartite states.

Lastly, we provide an alternative approach for characterising the entropy of a quantum channel, which is operationally motivated by games of chance. Our results reflect that entanglement is a resource~\cite{Chitambar_2019, Sparaciari2020firstlawofgeneral}, as entanglement-preserving unitary channels strictly outperform entanglement-breaking channels. We additionally compute the reward function for key quantum gambling games with several special classes of quantum channels, such as unitary, amplitude damping, depolarizing, measurement, and dephasing channels.

\section{Notation}

We denote quantum channels in uppercase calligraphic letters, where $\mathcal{N} \in \operatorname{CPTP}\left(AB \rightarrow {A'B'}\right)$ denotes a completely positive, trace preserving map (i.e. quantum channel) that takes as an input a bipartite state of systems $A$ and $B$ and outputs a bipartite state of systems $A'$ and $B'$. At times, we use subscripts to indicate the input and output system of a channel such as $\mathcal{N}_{\text{AB}\to\text{A'B'}}$. The dimension of a system is denoted as $|A|$, $|B|$, e.t.c..

Likewise, we denote a classical channel by $\mathcal{T}$, whose corresponding transition matrix is a column stochastic matrix $T = (t_{w|z'} )_{w,z'}$ such that each entry $t_{w|z'}\in[0,1]$ and that $\sum_{z'}t_{w|z'}=1$ for each $z'$.

In the case in which the bipartite state has a trivial component (e.g., $|B| = |B'| = 1$), we omit the letters that refer to the trivial system and denote the channel $\mathcal{N}$ as $\mathcal{N} \in \operatorname{CPTP}\left(A \rightarrow A'\right)$. Throughout this work, the computational basis of a given system with dimension $d$ is denoted as $\{\ket{1},\ldots,\ket{d}\}$. We denote the maximally entangled Bell state for systems $A$ and $B$ (where $|A| = |B|$) as
$$\ket{\phi_{AB}^{+}} \triangleq \frac{1}{\sqrt{|A|}} \sum_{j=1}^{|A|} \ket{jj}\ ,$$ where $\ket{jj}\triangleq\ket{j}\otimes\ket{j}$,
and denote the corresponding unnormalized state as
$$\ket{\Phi_{AB}^{+}} \triangleq \sum_{j=1}^{|A|} \ket{jj}.$$

For a matrix $M$, the sum of its $w$ largest singular values will be denoted as the Ky-fan $w$ norm $\|M\|_{\left(w\right)}$.
We denote with $I_{A}$ the identity operator on system $A$, and denote with $\mathbf{u}_{A} = \frac{1}{|A|} I_{A}$ the maximally mixed state in system $A$. When the systems on which the operators act are clear from the context, we might omit the subscripts indicating the systems.  

The Choi Matrix $J^{\mathcal{N}}_{\text{A} \rightarrow \text{B}}$ for a given channel $\mathcal{N}_{\text{A} \rightarrow {B}}$ is defined by the action of $\mathcal{N}$ on the maximally entangled state as
\begin{align}
    J^{\mathcal{N}}_{A \rightarrow B} = \sum_{i, j =1}^{|A|} \ket{i}\!\!\bra{j} \otimes \mathcal{N}\left(\ket{i}\!\!\bra{j} \right)
\end{align}

Quantum measurements are denoted as $\hat{\Pi} = \{\Pi^{(j)}\}_{j}$ where $\sum_{j} \Pi^{(j)} = I$ and for every $j$, $0 \leq \Pi^{(j)} \leq I$. At times, we use subscript to denote the system on which the measurements act, e.g., $\hat{\Pi}_{A}$ is a measurements on system $A$. Unless otherwise specified, we consider measurements which have rank-one projective elements. Finally, the set of density matrices corresponding to system $A$ is denoted as $\mathcal{D}\left(A\right)$.

\section{Axiomatic Approach to Quantum Conditional Entropy}

We begin by completely characterising the set of bipartite quantum channels which are conditional entropy non-decreasing. A bipartite quantum channel $\mathcal{N}$ is entropy non-decreasing if for any bipartite state $\rho_{AB}$, then $\mathcal{N}(\rho_{AB})$ has at least as much conditional entropy. 

Since conditional entropy measures the uncertainty of system $A$ when one is given access to system $B$, evidently allowing information to leak from system $A$ to system $B$ could decrease conditional entropy. This gives rise to the semi-causality requirement, which is defined as follows

\begin{definition}
A bipartite channel $\mathcal{N} _{AB \rightarrow AB'}$ is semi-causal with respect to system $A$ if any channel $\mathcal{M} _{A \rightarrow A}$ that Alice applies to her system cannot be detected by Bob. Formally, this requirement can be stated as follows:
\begin{align}
    \mathcal{N} _{AB \rightarrow B'} \circ \mathcal{M} _{A \rightarrow A} = \mathcal{N} _{AB \rightarrow B'},
\end{align}
for all $\mathcal{M} \in \operatorname{CPTP}\left(A \rightarrow A\right)$ and where $\mathcal{N} _{AB \rightarrow B'}$ indicates that the channel output in system $A$ is traced out.
\end{definition}
See Figure~\ref{fig_semi-causal} for a depiction of the semi-causality requirement. 

\begin{figure}[h!]
  \begin{overpic}[scale=0.5]{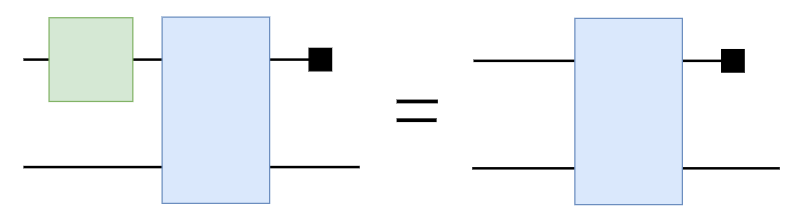}
  \put(20, 44){$\mathcal{M}$}
\put(59, 30){$\mathcal{N}$}
\put(183, 30){$\mathcal{N}$}
\end{overpic}
\caption{Depiction of semi-causal channel $\mathcal{M}$, where the black square represents the discarding channel.}
\label{fig_semi-causal}
\end{figure}
We note that any semi-causal channel can be written as a local channel on $B$ which feeds an output (corresponding to a reference system $R$) into a channel acting only on $A$ and $R$ (see Figure~\ref{fig_semi-casual_channel}). As such, information can only flow from $B$ to $A$, but never leak from $A$ to $B$.
\begin{lemma}
If $\mathcal{N} _{AB \rightarrow AB'}$ is an $A \rightarrow B$ semi-causal channel, then there exists a reference system $R$, a quantum channel $\mathcal{E} \in \operatorname{CPTP}\left(AR \rightarrow A\right)$, and an isometry $\mathcal{F} \in \operatorname{CPTP}\left(B \rightarrow RB'\right)$ such that
\begin{align}
    \mathcal{N} _{AB \rightarrow AB'}\left(\rho_{AB}\right) = 
\operatorname{Tr}_{R}\Big[ \mathcal{E}_{RA \rightarrow A} \otimes \mathcal{F}_{B \rightarrow RB'}\left(\rho_{AB}\right) \Big]
\end{align}
for every state $\rho_{AB}$. 
\end{lemma}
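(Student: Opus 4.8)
The plan is to reconstruct the ``semi-causal implies semi-localizable'' argument of Eggeling, Schlingemann and Werner, whose engine is the observation that semi-causality forces the $B'$-output marginal to depend on the input only through system $B$. Concretely, fix any state $\tau_A$ and let $\mathcal{M}_{A\to A}$ be the replacement channel $X\mapsto\operatorname{Tr}[X]\,\tau_A$, which is a legitimate channel. Applying the semi-causality identity to $\rho_{AB}$ gives $\mathcal{N}_{AB\to B'}(\tau_A\otimes\rho_B)=\mathcal{N}_{AB\to B'}(\rho_{AB})$, where $\rho_B\coloneqq\operatorname{Tr}_A\rho_{AB}$. The left-hand side manifestly depends on $\rho_{AB}$ only through $\rho_B$, and a second application shows independence of the auxiliary state $\tau_A$, so the assignment $\mathcal{G}_{B\to B'}(\omega)\coloneqq\mathcal{N}_{AB\to B'}(\tau_A\otimes\omega)$ defines a channel with $\mathcal{N}_{AB\to B'}=\mathcal{G}_{B\to B'}\circ\operatorname{Tr}_A$. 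This channel $\mathcal{G}$ is the object whose Stinespring dilation will furnish the desired isometry $\mathcal{F}$.

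Next I would fix a Stinespring isometry $W\colon\mathcal{H}_B\to\mathcal{H}_{B'}\otimes\mathcal{H}_R$ of $\mathcal{G}$ and set $\mathcal{F}_{B\to RB'}(\cdot)\coloneqq W(\cdot)W^\dagger$, which is the required isometric channel. The key structural point is then to compare two dilations of the \emph{same} map $\Lambda\coloneqq\mathcal{G}_{B\to B'}\circ\operatorname{Tr}_A=\mathcal{N}_{AB\to B'}$. On one hand, a Stinespring isometry $V\colon\mathcal{H}_{AB}\to\mathcal{H}_{AB'}\otimes\mathcal{H}_E$ of the full channel $\mathcal{N}$ also dilates $\Lambda$, now with environment $AE$, since tracing out $A$ in addition to $E$ yields $\Lambda$. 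On the other hand, $I_A\otimes W$ dilates $\Lambda$ with environment $AR$, because $\operatorname{Tr}_{AR}[(I_A\otimes W)\rho_{AB}(I_A\otimes W)^\dagger]=\mathcal{G}(\rho_B)=\Lambda(\rho_{AB})$. Both isometries share the common output system $B'$.

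Since $V$ and $I_A\otimes W$ are two Stinespring dilations of $\Lambda$, the essential uniqueness of the Stinespring representation produces an isometry $U\colon\mathcal{H}_{AR}\to\mathcal{H}_{AE}$, acting only on the environments, with $V=(I_{B'}\otimes U)(I_A\otimes W)$; after padding $E$ we may assume $\dim(AE)\ge\dim(AR)$ so that $U$ is genuinely an isometry. Defining the channel $\mathcal{E}_{RA\to A}(\cdot)\coloneqq\operatorname{Tr}_E[U(\cdot)U^\dagger]$ and substituting into $\mathcal{N}(\rho_{AB})=\operatorname{Tr}_E[V\rho_{AB}V^\dagger]$ gives
\[
\mathcal{N}(\rho_{AB})=\operatorname{Tr}_E\!\big[(I_{B'}\otimes U)\,\mathcal{F}_{B\to RB'}(\rho_{AB})\,(I_{B'}\otimes U)^\dagger\big]=(\mathcal{E}_{RA\to A}\otimes\operatorname{id}_{B'})\big(\mathcal{F}_{B\to RB'}(\rho_{AB})\big),
\]
which is the claimed factorization, the outer $\operatorname{Tr}_R$ of the statement being absorbed into $\mathcal{E}$ since $\mathcal{E}$ already consumes $R$.

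The main obstacle is the uniqueness step: one must verify that the two presentations really are Stinespring dilations of the \emph{same} map $\Lambda$ with a common output factor $B'$, that the relating isometry $U$ can be taken to act only on the $AR\to AE$ environment factor without disturbing $B'$, and that the dimension bookkeeping — comparing the generally non-minimal dilation $V$ with $I_A\otimes W$ through the minimal dilation, then padding $E$ — yields a bona fide isometry rather than merely a partial isometry. By contrast, the first paragraph's replacement-channel reduction is routine, so essentially all the content sits in correctly invoking and localizing Stinespring uniqueness.
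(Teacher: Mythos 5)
Your proof is correct, but it cannot be compared line-by-line with the paper's, because the paper does not actually prove this lemma: its ``proof'' is a single sentence deferring to the reference~\cite{Piani_2006}, and the underlying result is the Eggeling--Schlingemann--Werner theorem that semi-causal operations are semi-localizable --- precisely the argument you set out to reconstruct. Your reconstruction is sound: semi-causality applied with replacement channels gives $\mathcal{N}_{AB\to B'}=\mathcal{G}_{B\to B'}\circ\operatorname{Tr}_A$; then the Stinespring isometry $V$ of the full channel (environment $AE$) and $I_A\otimes W$ (environment $AR$, with $W$ dilating $\mathcal{G}$) are two dilations of the \emph{same} map $\Lambda=\mathcal{N}_{AB\to B'}$ with common output factor $B'$, so essential uniqueness supplies an intertwiner $U$ acting only on the environments, and $\mathcal{E}_{RA\to A}(\cdot)=\operatorname{Tr}_E\left[U(\cdot)U^{\dagger}\right]$ together with $\mathcal{F}(\cdot)=W(\cdot)W^{\dagger}$ yields the claimed factorization. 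The one step you should write out explicitly is the upgrade from partial isometry to isometry: factoring $I_A\otimes W=(I_{B'}\otimes U_1)V_{\min}$ and $V=(I_{B'}\otimes U_2)V_{\min}$ through the minimal dilation, uniqueness gives $U=U_2U_1^{\dagger}$, which is isometric only on $\operatorname{ran}(U_1)$; after padding $E$ so that $|A||E|\geq|A||R|$, you may extend $U$ isometrically on $\operatorname{ran}(U_1)^{\perp}$, and the intertwining relation $V=(I_{B'}\otimes U)(I_A\otimes W)$ survives this extension precisely because the range of $I_A\otimes W$ is contained in $\mathcal{H}_{B'}\otimes\operatorname{ran}(U_1)$, so the extended part of $U$ is never probed. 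With that detail filled in, your argument is a complete, self-contained, finite-dimensional Schr\"odinger-picture proof of a fact the paper only imports by citation --- a genuine gain in transparency, at the cost of the Stinespring bookkeeping you correctly identify as the crux.
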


\begin{proof}
Refer to~\cite{Piani_2006} for original proof. 
\end{proof}

\begin{figure}
\large
  \begin{overpic}[scale=0.6]{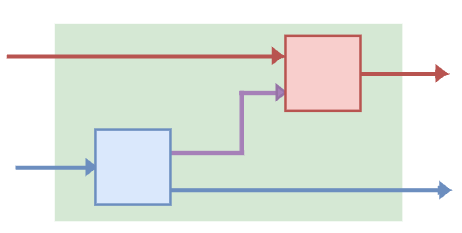}
  \put(62, 80){$\mathcal{N}_{AB\to AB'}$}
\put(43, 22){$\mathcal{F}$}
\put(112, 55){$\mathcal{E}$}
\put(67,32){$R$}
\put(8,25){$B$}
\put(8,65){$A$}
\put(147,62){$A$}
\put(147,20){$B'$}
\end{overpic}
\caption{A semi-causal bipartite channel $\mathcal{N}_{AB \rightarrow AB'}$.}
\label{fig_semi-casual_channel}
\end{figure}

Finally, we reframe the definition of semi-causality in terms of Choi matrices
\begin{lemma}
Let $\mathcal{N} \in \text{CPTP}(AB \rightarrow A B')$. If $\mathcal{N}_{AB \rightarrow \tilde{A}B'}$ is semi-causal, then 
\begin{align}
    J^{\mathcal{N}}_{ABB'} = \mathbf{u}_{A} \otimes J^{\mathcal{N}}_{BB'}
\end{align}
\end{lemma}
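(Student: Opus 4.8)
The plan is to reduce the claim to a direct computation of the Choi matrix of the channel obtained from $\mathcal{N}$ by discarding the $A$-output. First I would observe that, by the definition of the Choi matrix and linearity of the partial trace, $J^{\mathcal{N}}_{ABB'} = \operatorname{Tr}_{\tA} J^{\mathcal{N}}_{AB\tA B'}$ is exactly the Choi matrix of the reduced channel $\hat{\mathcal{N}}_{AB \to B'} \eqdef \operatorname{Tr}_{\tA}\circ\, \mathcal{N}$, so it suffices to understand the structure of $\hat{\mathcal{N}}$ and compute its Choi matrix. Throughout I would keep careful track of the four systems involved: $A$ and $B$ are the reference copies of the input that are entangled with the channel input, while $\tA$ and $B'$ are the outputs.

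Next I would extract the structural consequence of semi-causality. By definition, $A\not\to B'$ semi-causality means $\hat{\mathcal{N}} \circ (\mathcal{M}_{A\to A}\otimes \operatorname{id}_B) = \hat{\mathcal{N}}$ for every channel $\mathcal{M}$. Choosing $\mathcal{M}$ to be the replacer channel $\rho_A \mapsto \operatorname{Tr}[\rho_A]\,\sigma_A$ for a fixed state $\sigma_A$ sends any input $\rho_{AB}$ to $\sigma_A \otimes \operatorname{Tr}_A[\rho_{AB}]$, so the identity forces $\hat{\mathcal{N}}(\rho_{AB})$ to depend on $\rho_{AB}$ only through its marginal $\operatorname{Tr}_A[\rho_{AB}]$. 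Hence there is a channel $\mathcal{N}''_{B\to B'}$ with $\hat{\mathcal{N}} = \mathcal{N}''_{B\to B'}\circ \operatorname{Tr}_A$; the same conclusion follows from the structural decomposition of the preceding Lemma upon discarding the $\tA$ output and using that $\mathcal{E}$ is trace preserving.

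Finally I would compute the Choi matrix of $\mathcal{N}''\circ \operatorname{Tr}_A$ directly. Feeding $\ket{i}\!\bra{j}_A \otimes \ket{k}\!\bra{l}_B$ into the definition, the partial trace over $A$ produces a factor $\delta_{ij}$, which annihilates every term off-diagonal in the $A$-reference and leaves the surviving diagonal terms independent of $i$; this yields
\begin{align}
J^{\mathcal{N}}_{ABB'} = \Big(\sum_i \ket{i}\!\bra{i}_A\Big)\otimes J^{\mathcal{N}''}_{BB'} = I_A \otimes J^{\mathcal{N}''}_{BB'}.
\end{align}
Taking the partial trace over $A$ then identifies the marginal $J^{\mathcal{N}}_{BB'} = \operatorname{Tr}_A J^{\mathcal{N}}_{ABB'} = |A|\, J^{\mathcal{N}''}_{BB'}$, and substituting $I_A = |A|\,\mathbf{u}_A$ gives $J^{\mathcal{N}}_{ABB'} = \mathbf{u}_A \otimes J^{\mathcal{N}}_{BB'}$, as claimed.

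The only genuinely non-computational step is the passage from ``invariance under all $\mathcal{M}$'' to factorization through $\operatorname{Tr}_A$, so I expect the main obstacle to be conceptual bookkeeping rather than hard analysis: being careful that discarding $\tA$ commutes with forming the Choi matrix, and that the replacer-channel argument legitimately collapses the dependence on the $A$-input. Everything after that is a one-line Choi-matrix computation together with the normalization $I_A = |A|\,\mathbf{u}_A$.
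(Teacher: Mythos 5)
Your proof is correct and takes essentially the same route as the paper: the paper also exploits semi-causality by inserting a replacer channel on $A$ (specifically the completely randomizing channel $\mathcal{R}_{A\to A}$ with output $\mathbf{u}_A$) into the Choi construction, where the resulting $\delta_{ij}$ collapses the $A$-reference to a multiple of the identity, and then identifies the remaining factor as $J^{\mathcal{N}}_{BB'}$ by tracing out $A$. Your intermediate repackaging of this as a factorization $\hat{\mathcal{N}} = \mathcal{N}''_{B\to B'}\circ\operatorname{Tr}_A$ with a generic replacer state $\sigma_A$ is a cosmetic variant of the same argument.
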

\begin{proof}
Let $\mathcal{R}_{A' \rightarrow A}$ be the completely randomising channel which maps any input to the maximally mixed state $\mathbf{u}_{A}$. Since $\mathcal{N}$ is semi-causal, then 
\begin{align}
\text{Tr}_{\tilde{A}}\left[\mathcal{N}_{AB \rightarrow \tilde{A}B'} \circ \mathcal{R}_{A \rightarrow A} \right]= \text{Tr}_{\tilde{A}} \left[\mathcal{N}_{AB \rightarrow \tilde{A} B'}\right]
\end{align}
and $J^{\mathcal{N} \circ \mathcal{R}}_{AB B'} = J^{\mathcal{N}}_{ABB'}$. We note that the Choi matrix for $ \mathcal{N}_{AB \rightarrow B'} \circ \mathcal{R}$ is then equivalent to the Choi matrix for $\mathcal{N}_{AB \rightarrow B'}$ and find
\begin{align}
    J^{\mathcal{N}}_{ABB'}&= \sum_{i, j=1}^{|A|} \sum_{k, l=1}^{|B|} \ket{ik}\!\!\bra{jl}_{AB} \otimes  \mathcal{N}_{AB \rightarrow B'} \left( \mathcal{R}\left(\ket{i}\!\!\bra{j}\right) \otimes \ket{k}\!\!\bra{l}\right) \\
    &= \mathbf{u}_{A} \otimes \sum_{k,l=1}^{|B|} \ket{k}\!\!\bra{l}_{B} \otimes  \mathcal{N}_{AB \rightarrow B'} \left( I_{A} \otimes \ket{k}\!\!\bra{l}\right) \\
    &= \mathbf{u}_{A} \otimes M \\
\end{align}
where $M = \sum_{k,l=1}^{|B|} \ket{k}\!\!\bra{l}_{B} \otimes  \mathcal{N}_{AB \rightarrow B'} \left( I_{A} \otimes \ket{k}\!\!\bra{l}\right)$. Upon taking the partial trace of $A$ over both sides, then $M = J^{\mathcal{N}}_{BB'}$ and the theorem follows. 

\end{proof}

In addition to semi-causality, any entropy non-decreasing channel must map states of maximal entropy (i.e. states of the form $\mathbf{u}_{A} \otimes \rho_{B}$) to other states of maximal entropy. We define channels satisfying this condition to be conditionally unital 
\begin{definition}
A bipartite channel $\mathcal{N} _{AB \rightarrow \text{AB'}}$ is a conditional unital channel if for all $\rho_{B} \in \mathcal{D}\left(B\right)$, there exists a state $\sigma_{B'} \in \mathcal{D}\left(B'\right)$ such that
\begin{align}
    \mathcal{N}\left(\mathbf{u}_{A} \otimes \rho_{B}\right) = \mathbf{u}_{A} \otimes \sigma_{B'}.
\end{align}
\end{definition}

We can also reframe the definition of a conditional unital channel in terms of Choi matrices: 
\begin{lemma}
$\mathcal{N} \in \operatorname{CPTP}\left(AB \rightarrow \tilde{A}B'\right)$ is conditional unital if and only if its Choi matrix $J^{\mathcal{N}_{AB\tilde{A}B'}}$ satisfies 
\begin{align}
    J^{\mathcal{N}}_{B \tilde{A} B'} = J^{\mathcal{N}}_{BB'} \otimes \mathbf{u}_{\tilde{A}}.
\end{align}
\end{lemma}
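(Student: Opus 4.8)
The plan is to prove both implications through the channel--Choi correspondence, namely the realization formula $\mathcal{N}(\omega_{AB}) = \operatorname{Tr}_{AB}[(\omega_{AB}^{T}\otimes I_{\tA B'})\,J^{\mathcal{N}}_{AB\tA B'}]$, where $T$ denotes transpose in the computational basis. The single computational fact that drives everything is that weighting the partial trace over $A$ by the maximally mixed state collapses it onto the reduced Choi matrix: since $\mathbf{u}_A=\tfrac{1}{|A|}I_A$, we have $\operatorname{Tr}_{A}[(\mathbf{u}_A\otimes I)\,J^{\mathcal{N}}_{AB\tA B'}] = \tfrac{1}{|A|}\,J^{\mathcal{N}}_{B\tA B'}$. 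I would isolate this identity at the outset and reuse it in both directions.

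For the ``only if'' direction I would start from $J^{\mathcal{N}}_{B\tA B'}=\operatorname{Tr}_A J^{\mathcal{N}}_{AB\tA B'}$ and carry out the $A$-trace explicitly; forcing the reference index of $A$ to be diagonal produces $I_A$ on the input slot, giving $J^{\mathcal{N}}_{B\tA B'}=|A|\sum_{k,l}\op{k}{l}_B\otimes\mathcal{N}(\mathbf{u}_A\otimes\op{k}{l}_B)$. The one subtlety to flag is that the conditional-unital hypothesis is stated only for density matrices $\rho_B$, whereas the matrix units $\op{k}{l}_B$ with $k\neq l$ are not states. I would first upgrade the hypothesis to the linear statement $\mathcal{N}(\mathbf{u}_A\otimes X_B)=\mathbf{u}_{\tA}\otimes\mathcal{M}(X_B)$ for every operator $X_B\in\mathcal{L}(B)$, where $\mathcal{M}(X_B)\coloneqq\operatorname{Tr}_{\tA}\mathcal{N}(\mathbf{u}_A\otimes X_B)$, by noting that both sides are linear in $X_B$ and agree on the density matrices, which span $\mathcal{L}(B)$. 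Substituting this form makes $\mathbf{u}_{\tA}$ a common factor of every summand, so it pulls out, and the remaining sum is precisely $J^{\mathcal{N}}_{BB'}=\operatorname{Tr}_{\tA}J^{\mathcal{N}}_{B\tA B'}$, yielding $J^{\mathcal{N}}_{B\tA B'}=J^{\mathcal{N}}_{BB'}\otimes\mathbf{u}_{\tA}$.

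For the ``if'' direction I would apply the realization formula to the input $\mathbf{u}_A\otimes\rho_B$. Using $(\mathbf{u}_A\otimes\rho_B)^T=\mathbf{u}_A\otimes\rho_B^T$ together with the collapse identity to perform the $A$-trace first gives $\mathcal{N}(\mathbf{u}_A\otimes\rho_B)=\tfrac{1}{|A|}\operatorname{Tr}_B[(\rho_B^T\otimes I_{\tA B'})\,J^{\mathcal{N}}_{B\tA B'}]$. Inserting the hypothesized factorization $J^{\mathcal{N}}_{B\tA B'}=J^{\mathcal{N}}_{BB'}\otimes\mathbf{u}_{\tA}$ lets $\mathbf{u}_{\tA}$ pass through the $B$-trace untouched, so the output is $\mathbf{u}_{\tA}\otimes\sigma_{B'}$ with $\sigma_{B'}\coloneqq\tfrac{1}{|A|}\operatorname{Tr}_B[(\rho_B^T\otimes I_{B'})J^{\mathcal{N}}_{BB'}]$. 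Because the left-hand side is a genuine output state and $\mathbf{u}_{\tA}$ is positive definite of unit trace, $\sigma_{B'}$ is automatically a density matrix; recalling $\mathbf{u}_{\tA}=\mathbf{u}_A$ under $|\tA|=|A|$, this is exactly the conditional-unital condition.

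The main obstacle is bookkeeping rather than anything conceptual: one must track the placement of the factors $B,\tA,B'$ carefully through each reordering, keep the transpose and the $\tfrac{1}{|A|}$ normalization consistent, and not skip the linear-extension step in the forward direction, which is the only point where the ``for all states'' quantifier must be converted into a statement about matrix units.
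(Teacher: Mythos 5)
Your proof is correct, and its core step differs genuinely from the paper's. Both arguments share the same opening move (the realization formula $\mathcal{N}(\omega_{AB})=\operatorname{Tr}_{AB}[(\omega_{AB}^{T}\otimes I_{\tilde A B'})J^{\mathcal{N}}_{AB\tilde A B'}]$ together with the collapse $\operatorname{Tr}_{A}[(\mathbf{u}_A\otimes I)J^{\mathcal{N}}_{AB\tilde A B'}]=\tfrac{1}{|A|}J^{\mathcal{N}}_{B\tilde A B'}$), but from there the paper proceeds by Hilbert--Schmidt duality: it pairs $J^{\mathcal{N}}_{B\tilde A B'}$ against operators $\rho_B^{T}\otimes\eta_{\tilde A}\otimes\omega_{B'}$ with $\eta_{\tilde A}$ traceless Hermitian, shows the pairing vanishes, and then invokes the fact that $\mathbf{u}_{\tilde A}$ is (up to scale) the unique operator orthogonal to all traceless Hermitian operators to force the factorization $J^{\mathcal{N}}_{B\tilde A B'}=J^{\mathcal{N}}_{BB'}\otimes\mathbf{u}_{\tilde A}$. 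You instead expand $\operatorname{Tr}_A J^{\mathcal{N}}_{AB\tilde A B'}$ in matrix units and substitute the linearly extended hypothesis $\mathcal{N}(\mathbf{u}_A\otimes X_B)=\mathbf{u}_{\tilde A}\otimes\mathcal{M}(X_B)$ directly, which is more elementary and has two incidental advantages: first, you make explicit the extension from density matrices to arbitrary operators (the paper uses the same extension silently when it passes from states $\rho_B$ to an arbitrary orthonormal Hermitian basis $\{\rho_B^{(i)}\}_i$, and your flagging of the matrix units $\op{k}{l}_B$ with $k\neq l$ not being states is exactly the right point to be careful about); second, you prove the ``if'' direction explicitly via the realization formula, including the observation that positivity and normalization of $\sigma_{B'}$ come for free from the output being a state and $\mathbf{u}_{\tilde A}$ being positive definite of unit trace, whereas the paper's written proof covers only the ``only if'' direction and leaves the converse implicit. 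What the paper's duality route buys in exchange is a basis-independent argument that never introduces the auxiliary map $\mathcal{M}$; what yours buys is a self-contained computation in which every step is an identity check.
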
 

\begin{proof}
Recall that the definition of a conditional unital channel can be written as
\begin{align}
    & \mathcal{N}\left(\mathbf{u} _{A} \otimes \rho_{B}\right) \notag \\
    &= \operatorname{Tr}_{AB}\Big[ J^{\mathcal{N}}_ {AB\tilde{A}B'} \left( \mathbf{u}_{A} \otimes \left(\rho_{B}\right)^{T} \otimes I_{\tilde{A}B'} \right)\Big] \\
    &= \frac{1}{|A|} \operatorname{Tr}_{B}\Big[ J^{\mathcal{N}}_{B\tilde{A}B'} \left( \left(\rho_{B}\right)^{T} \otimes I_{\tilde{A}B'} \right)\Big] \\
    &\triangleq \mathbf{u}_{A} \otimes \sigma_{B'}.
\end{align}
We now multiply both sides by $|A|\eta_{\tilde{A}} \otimes \omega_{B'}$ where $\eta_{\tilde{A}}$ is traceless and Hermitian and $\omega_{B'}$ is a Hermitian matrix. It follows that
\begin{align}
 0 &=  \operatorname{Tr}\Big[ \left(I_{A} \otimes \sigma_{B'}\right)\left(\eta_{\tilde{A}} \otimes \omega_{B'}\right)\Big]  \\
    & = \operatorname{Tr}_{\tilde{A}B'}\Big[\operatorname{Tr}_{B}\Big[ J^{\mathcal{N}}_{B\tilde{A}B'} \left( \rho_{B}^{T} \otimes I_{\tilde{A}B'} \right)\Big] \left( \eta_{\tilde{A}} \otimes \omega_{B'} \right) \Big] \\
  & = \operatorname{Tr}\Big[ J^{\mathcal{N}}_{B\tilde{A}B'} \left( \rho_{B}^{T} \otimes  \eta_{\tilde{A}} \otimes \omega_{B'} \right) \Big] .
\end{align}
From this, it follows that for any orthonormal basis $\{\rho_{B}^{(i)}\}_i$ of Hermitian operators, any orthonormal basis $\{\eta_{\tilde{A}}^{\left(j\right)} \}_j$ of traceless Hermitian operators, and any orthonormal basis $\{\omega_{B'}^{(k)}\}_k$ of Hermitian operators, then 
\begin{align}
    \mathcal{S} = \Big\{ \rho_{B}^{(i)} \otimes \eta_{\tilde{A}}^{\left(j\right)} \otimes \omega_{B'}^{\left(k\right)}\Big\}_{i, j, k}
\end{align}
spans the orthogonal complement to $J^{\mathcal{N}}_{B \tilde{A} B'}$. 

We note that $\mathbf{u}_{\tilde{A}}$ is the only matrix (up to a scalar) which is orthogonal to all traceless Hermitian matrices $\{\eta_{\left(\tilde{A}\right)}^{\left(j\right)} \}$. Thus, $J^{\mathcal{N}}_{B\tilde{A}B'}$ is spanned by $\{\rho_{B}^{\left(i\right)} \otimes \mathbf{u}_{\tilde{A}} \otimes \omega_{B'}^{\left(k\right)} \}$, so 
\begin{align}
    J^{\mathcal{N}}_{B \tilde{A} B'} &= \sum_{i, j, k} c_{i, j, k} \rho_{B}^{\left(k\right)} \otimes \omega_{B'}^{\left(k\right)} \otimes \mathbf{u}_{\tilde{A}} \\
    &= J^{\mathcal{N}}_{BB'} \otimes \mathbf{u}_{\tilde{A}}
\end{align}
This concludes the proof.
\end{proof}

Having motivated the need for an entropy non-decreasing channel to be semi-causal and conditional unital, we denote the set of all conditional unital, $A\nrightarrow B$ semi-causal channels in $\operatorname{CPTP}\left(AB \rightarrow AB'\right)$ as $\operatorname{CUSC}\left(AB \rightarrow AB'\right)$. Then bipartite state $\rho_{AB}$ is less uncertain than state $\sigma_{AB}$ if one can obtain $\sigma_{AB}$ using $\rho_{AB}$ and a CUSC channel. This gives rise to the following majorisation relation

% We now define quantum conditional majorization. 

\begin{definition}
\label{majorisation}
Consider two bipartite states $\rho_{AB}$ and $\sigma _{AB'}$. Then $\rho_{AB}$ conditionally majorizes $\sigma _{AB'}$ with respect to system $A$ if there exists $\mathcal{N} \in \operatorname{CUSC}\left(AB \rightarrow AB'\right)$ such that 
\begin{align}
    \sigma _{AB'} = \mathcal{N} _{AB \rightarrow AB'}\left(\rho_{AB}\right).
\end{align}
We denote conditional majorization with respect to system $A$ as $\sigma_{AB'} \precsim \rho_{AB}$. 
\end{definition}

We are now ready to define conditional entropy based on the pre-order induced by conditional majorization. 
\begin{definition}
A nonzero function $H : \bigcup_{A, B} \mathcal{D}\left(AB\right) \rightarrow \mathbb{R}$ is a quantum conditional entropy if it satisfies the following monotonicity, additivity, and normalisation constraints
\begin{enumerate}
    \item For all $\sigma_{AB'}$, $\rho_{AB}$, if $\sigma_{AB'} \precsim \rho_{AB}$, then $H\left(A|B\right)_{\rho} \leq H\left(A|B'\right)_{\sigma}$.
    \item $H\left(AA'|BB'\right)_{\rho \otimes \tau} = H\left(A|B\right)_{\rho} + H\left(A'|B'\right)_{\tau}$ for all states $\rho_{AB}$ and $\tau_{A'B'}$.
    \item For any isometry $\mathcal{U} \in \operatorname{CPTP}\left(A \rightarrow A'\right)$ and any $\rho_{AB}$,
    \begin{align}
        H\left(A|B\right)_{\rho_{AB}} = H\left(A'|B\right)_{\mathcal{U}\left(\rho_{AB}\right)}.
    \end{align}
\end{enumerate}
\end{definition}
If a given function $H$ satisfies the above conditions, then $\alpha H$ is also a conditional entropy for any $\alpha > 0$. To eliminate this extra degree of freedom due to scaling, we consider only normalised conditional entropy functions such that when $|A|=2$ and $|B|=1$ (i.e.the bipartite state is a single qubit), then
    \begin{align}
        H\left(A|B\right)_{\frac{I}{2}} = H_{S}\!\left(\frac{I}{2}\right) = 1.
    \end{align}

The above three axioms are sufficient to guarantee several key properties of conditional entropy such as negative entropy of the maximally entangled state, simplification of entropy on product states, and non-negative entropy for separable states. We introduce these results in the following lemmas.

\begin{lemma}
Conditional entropy simplifies on product states. That is, for any product state $w_{A} \otimes \tau_{B}$, 
\begin{equation}
    H\left(A|B\right)_{ w_A\otimes\tau_B}= H\left(A|B\right)_{ w_A\otimes\u_B}.
\end{equation}
\end{lemma}

\begin{proof}
Suppose that $\rho_{AB}=w_A\otimes\tau_B$ is a product state. Let $\mathcal{E}_{B\to B}\in\operatorname{CPTP}\left(B\to B\right)$ be the completely randomizing channel on system $B$. Then it follows that $\mathcal{I}_{A\to A}\otimes\mathcal{E}_{B\to B}\in\operatorname{CUSC}\left(AB\to AB\right)$ and that
\begin{equation}
    \mathcal{I}_{A\to A}\otimes\mathcal{E}_{B\to B}\left(w_A\otimes\tau_B\right)=w_{A}\otimes\u_{B} .
\end{equation}
Moreover, let $\mathcal{R}_{\tau}^B$ be the replacement channel on system~$B$, which takes any quantum state to $\tau_B$. Then 
\begin{equation}
     \mathcal{I}_{A\to A}\otimes\mathcal{R}_{B\to B}\left(w_A\otimes\u_B\right)=w_{A}\otimes\tau_{B}\ .
\end{equation}
Thus, 
\begin{equation}
    w_A\otimes\tau_B\succsim_{A}w_A\otimes\u_B\quad\text{and}\quad w_A\otimes\u_B\succsim_{A}w_A\otimes\tau_B
\end{equation}
which implies, by the monotonicity property of the conditional entropy, that
\begin{equation}
    H\left(A|B\right)_{ w_A\otimes\tau_B}= H\left(A|B\right)_{ w_A\otimes\u_B},
\end{equation}
which is to say that $H\left(A|B\right)_{ w_A\otimes\tau_B}$ only depends on $w_A$. Moreover, the function $w_A\mapsto H\left(A|B\right)_{ w_A\otimes\tau_B}$ satisfies the three axioms of entropy and therefore can be considered as an entropy of $w_A$.
\end{proof}

The same statement also holds when system $|B|$ is trivial, i.e., when $|B|=1$. 

Unlike entropy, quantum conditional entropy can be negative. In the following theorem, we find a lower bound for the conditional entropy and show that only entangled states can have negative conditional entropy.
\begin{theorem}
Let $H$ be any conditional entropy, and let $A$ and $B$ be two Hilbert spaces. Then for any $\rho_{AB} \in \mathcal{D}\left(AB\right)$ the conditional entropy is lower bounded as,
\begin{equation}
 -\log\min\left(|A|,|B|\right) \leq  H\left(A|B\right)_{\rho},
\end{equation}
The lower bound is achieved when $\rho_{AB}=\phi^{+}_{AB}$ is the maximally entangled state. Additionally, if $\rho_{AB}$ is a separable (non-entangled) state, then $ H\left(A|B\right)_{\rho} \geq 0$.
\end{theorem}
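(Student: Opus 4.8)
The plan is to prove all three statements by extracting the two extremal elements of the conditional-majorization pre-order. The maximally entangled state should be the \emph{greatest} element (least uncertain), so that the monotonicity axiom forces it to carry the smallest conditional entropy, while a pure product state provides a convenient zero-entropy element sitting below every separable state.

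First I would establish the extremality of $\phi^+$: for every $\rho_{AB'}$ with $|A|=d$ I claim $\phi^+_{A\hat B}\succsim_A\rho_{AB'}$, where $|\hat B|=d$. I would exhibit a channel $\mathcal{N}\in\operatorname{CUSC}(A\hat B\to AB')$ with $\mathcal{N}(\phi^+_{A\hat B})=\rho_{AB'}$ by a steering/teleportation construction. Using the factorization of semi-causal channels proved above, let $\mathcal{F}_{\hat B\to RB'}$ be the map in which Bob locally prepares a copy of the target $\rho_{AB'}$ on registers $(\mathrm{tel},B')$ and performs a Bell measurement between $\hat B$ and the $\mathrm{tel}$ register, recording the classical outcome in $R$; let $\mathcal{E}_{RA\to A}$ apply the corresponding Pauli correction on $A$ controlled by $R$. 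This imprints $\rho_{AB'}$ on $AB'$. The composite is semi-causal by construction, and it is conditionally unital because the corrections are unitaries, so $\mathcal{N}(\u_A\otimes\rho_{\hat B})=\u_A\otimes\sigma_{B'}$. Since $\rho\precsim_A\phi^+$, monotonicity yields $\H(A|B')_\rho\ge\H(A|\hat B)_{\phi^+}$ for all such $\rho$. Using the isometries in the definition of conditional majorization to embed a $d$-dimensional maximally entangled reference, and treating $|A|\le|B|$ and $|B|<|A|$ separately, this gives $\H(A|B)_\rho\ge\H(\phi^+_{\min(|A|,|B|)})$ in general.

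Next I would evaluate $\H(\phi^+_d)$. Because $\phi^+_d\otimes\phi^+_{d'}$ is again maximally entangled on $dd'$ dimensions, additivity makes $f(d):=\H(\phi^+_d)$ satisfy $f(dd')=f(d)+f(d')$, while the extremality above shows $f$ is non-increasing in $d$; hence $f(d)=c\log d$ for a single constant $c\le 0$. (Tracing out and re-preparing Bob's system is a CUSC map giving $\phi^+_{AB}\succsim_A\u_A\otimes\tau_B$, so the reduction-to-entropy lemma also yields $f(d)\le\log d$.) The delicate remaining point, which I expect to be the \emph{main obstacle}, is to pin $c=-1$, equivalently $\H(\phi^+_2)=-1$: this seems to require a dedicated reversible CUSC interconversion which, combined with additivity and the normalization $\H(\u_2)=1$, forces the negative constant. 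Once $c=-1$ is established, $\H(\phi^+_d)=-\log d$, matching the bound of the previous paragraph at $\rho=\phi^+_{\min}$ and thereby proving both the lower bound and its attainment.

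Finally, for the separable statement I would show that every separable $\rho_{AB}=\sum_i p_i\,\alpha^i_A\otimes\beta^i_B$ is conditionally majorized by the pure product state $\op{a}{a}_A\otimes\tau_B$, whose conditional entropy is $\H(A)_{\op{a}{a}}=0$. The realizing channel $\mathcal{N}\in\operatorname{CUSC}$ sends $\op{a}{a}_A\otimes\tau_B\mapsto\rho_{AB}$ as follows: $\mathcal{F}_{B\to RB'}$ has Bob draw a flag $i$ with probability $p_i$, store it in $R$, and prepare $\beta^i$ on $B'$; then $\mathcal{E}_{RA\to A}$ applies, controlled on $R=i$, a \emph{unital} channel $\mathcal{E}_i$ with $\mathcal{E}_i(\op{a}{a})=\alpha^i$, which exists because the pure state majorizes every $\alpha^i$. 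Each $\mathcal{E}_i$ being unital gives $\mathcal{E}_i(\u_A)=\u_A$, so the composite is conditionally unital; it is semi-causal by construction. Thus $\op{a}{a}_A\otimes\tau_B\succsim_A\rho$, and monotonicity gives $\H(A|B)_\rho\ge 0$. Beyond anchoring the constant in the value computation, the principal technical burden throughout is verifying that each constructed map genuinely lies in $\operatorname{CUSC}$, in particular realizing the classical weights and Pauli corrections within the isometric form of the semi-causality lemma.
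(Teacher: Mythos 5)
Your teleportation argument for the extremality of $\phi^+_{AB}$, and your channel realizing any separable state from a pure product state, are both correct and essentially the paper's own constructions (the paper writes the separable state as a mixture of product unitaries acting on $\op{0}{0}_A\otimes\op{0}{0}_B$, which is the same idea as your Bob-controlled unital corrections). Likewise your reduction of the remaining work to the single identity $\H(A|B)_{\phi^+_d}=-\log d$, via $f(dd')=f(d)+f(d')$ and monotonicity of $f$, is valid. But you stop exactly there: you concede that pinning the constant $c=-1$ ``seems to require a dedicated reversible CUSC interconversion,'' which you do not exhibit. This is not a peripheral detail --- it is the core of the theorem. Without it you obtain only $\H(A|B)_\rho \ge c_{\H}\log\min\left(|A|,|B|\right)$ for some unknown, entropy-dependent constant $c_{\H}\le 0$, which yields neither the universal lower bound $-\log\min\left(|A|,|B|\right)$ nor its attainment at $\phi^+$.

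The missing idea is a padding trick: one should not try to interconvert $\phi^+_d$ with a fixed-entropy state directly, but first tensor Alice's side with a maximally mixed state $\u_{A_2}$, $|A_2|=d$. Additivity (together with $\H(\u_d)=\log d$, which follows from additivity, monotonicity, and normalization) gives $\H(AA_2|B)_{\phi^+\otimes\u_{A_2}}=\H(A|B)_{\phi^+}+\log d$, and the padded state \emph{is} CUSC-interconvertible with a pure state. In one direction the paper uses the channel with Choi matrix $\sum_{j=1}^{d^2}\phi^{(j)}_{AB}\otimes I_{A_2}\otimes\psi^{(j)}$, where $\{\phi^{(j)}\}$ is a maximally entangled basis with $\phi^{(1)}=\phi^+$ and $\{\psi^{(j)}\}$ is any orthonormal basis: this Bell-measures $AB$ and prepares the pure state $\psi^{(j)}$ labeled by the outcome; it is trace preserving and conditionally unital because the $\phi^{(j)}$ and $\psi^{(j)}$ each resolve the identity, and it is trivially semi-causal because the output $B$ system is trivial. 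It maps $\phi^+_{AB}\otimes\u_{A_2}$ to $\psi^{(1)}$, whence $\H(AA_2|B)_{\phi^+\otimes\u_{A_2}}\le 0$. In the reverse direction, the measure-and-prepare channel that measures $\{\psi^{(j)}\}$ and prepares $\phi^{(j)}_{AB}\otimes\u_{A_2}$ is semi-causal because every prepared state has the same $B$-marginal $\u_B$, and conditionally unital because the prepared states average to $\u_{AA_2}\otimes\u_B$; it maps $\psi^{(1)}$ back to $\phi^+_{AB}\otimes\u_{A_2}$, whence $\H(AA_2|B)_{\phi^+\otimes\u_{A_2}}\ge 0$ (this is the step the paper invokes when ``combining everything''). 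Together these force $\H(A|B)_{\phi^+}=-\log d$ exactly. So the reversible interconversion you were hoping for does exist, but only after the $\u_{A_2}$ padding, and constructing it via this Bell-measurement (dense-coding-like) channel is precisely the step your proposal leaves open.
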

\begin{proof}
Given that $\phi^{+}_{AB}$ can be used to teleport an arbitrary quantum state, it follows that there exists some channel $\mathcal{N} \in \operatorname{CUSC}\left(AB \rightarrow AB\right)$ such that $\rho_{AB} = \mathcal{N}\left(\phi^{+}_{AB}\right)$. It immediately follows that for any $\rho \in \mathcal{D}\left(AB\right)$, then 
\begin{align}
   H\left(A | B\right)_{\phi^{+}_{AB}} \leq H\left(A | B\right)_{\rho_{AB}}
\end{align}
Thus, it is sufficient to demonstrate that $H\left(A | B\right)_{\phi^{+}}$ is lower bounded by $-\text{log}\left(\min\{|A|, |B|\}\right)$. To this aim, we define $d \triangleq |A| = |B|$ and consider the state $\phi^{+}_{AB} \otimes \mathbf{u}_{A_{2}}$ where $|A_{2}| = d$. Then
\begin{align}
    H\left(AA_{2}|B\right)_{\phi^{+}_{AB} \otimes \mathbf{u}_{A_{2}}} &=   H\left(AA_{2}|B B_{2}\right)_{\phi^{+}_{AB} \otimes \mathbf{u} _{A_{2}B_{2}}} \\
    & = H\left(A | B\right)_{\phi^{+}} + H\left(A_{2} | B_{2}\right)_{\mathbf{u} _{A_{2} B_{2}}} \\
    & = H\left(A | B\right)_{\phi^{+}} + H\left(\mathbf{u}_{A_{2}}\right) \\
    &= H\left(A | B\right)_{\phi^{+}} + \text{log}\left(d\right),
\end{align}
where the first line follows by introducing a trivial system $B_{2}$ such that $|B_{2}|=1$ and the second follows from conditional entropy.

We now demonstrate that $H\left(AA_{2}|B\right)_{\phi^{+}_{AB} \otimes \mathbf{u}_{A_{2}}} \leq 0$ by demonstrating the existence of a channel $\mathcal{N} \in \operatorname{CUSC}\left(AA_{2}B \rightarrow AA_{2}\right)$ such that $\mathcal{N}\left(\phi^{+} \otimes \mathbf{u}\right) = \ket{\psi}\!\!\bra{\psi}_{AA_{2}}$ for some pure state $\ket{\psi} \in \mathcal{H}\left(AA_{2}\right)$. Consider the Choi matrix 
\begin{align}
    J^{\mathcal{N}}_{AA_{2} B \tilde{A} \tilde{A}_{2}} = \sum_{j=1}^{d^{2}} \phi_{AB}^{\left(j\right)} \otimes I _{A_{2}} \otimes \psi_{\tilde{A} \tilde{A}_{2}}^{\left(j\right)}
\end{align}
where $\{\ket{\phi}_{AB}^{\left(j\right)}\}|_{j=1}^{d^{2}}$ is an orthonormal basis of $AB$ consisting of maximally entangled states and where we set $\phi_{1} = \phi^{+}$. Likewise, $\{ \ket{\psi_{j}}_{\tilde{A} \tilde{A}_{2}} \}|_{j=1}^{d^{2}}$ is an orthonormal basis of $AA_{2}$. 

Evidently, $\mathcal{N}$ is trace-preserving and conditional unital (as $J^{\mathcal{N}}_{B \tilde{A} \tilde{A}_{2}} = I_{B \tilde{A} \tilde{A}_{2}} = J^{\mathcal{N}}_{B} \otimes \mathbf{u}_{\tilde{A} \tilde{A}_{2}}$). Additionally, $\mathcal{N}$ is trivial semi-causal as subsystem $B$ is traced out. From this, it follows that $\mathcal{N} \in \operatorname{CUSC}\left(AA_{2}B \rightarrow \tilde{A} \tilde{A}_{2}\right)$ and therefore 
\begin{align}
H\left(AA_{2}|B\right)_{\phi^{1}_{AB} \otimes \mathbf{u}_{A_{2}}} 
 & \leq H\left(\tilde{A} \tilde{A}_{2} | \tilde{B}\right)_{\mathcal{N}\left(\phi^{1}_{AB} \otimes \mathbf{u}_{A_{2}}\right)} \\
 &= H\left( \mathcal{N}\left(\phi^{1}_{AB} \otimes \mathbf{u}_{A_{2}}\right) \right)
\end{align}
where $\tilde{B}$ is a trivial subsystem. Note that 
\begin{align}
    & \mathcal{N}\left(\phi^{+}_{AB} \otimes \mathbf{u}_{A_{2}}\right) \\
    &= \operatorname{Tr}_{AA_{2}B} \left[ J^{\mathcal{N}}_{AA_{2}B\tilde{A}\tilde{A}_{2}} \left( \phi^{+}_{AB} \otimes \mathbf{u}_{A_{2}} \otimes I_{\tilde{A} \tilde{A}_{2}} \right) \right] \\
    &= \operatorname{Tr}_{AA_{2}B} \!\left[ \left(\sum_{j=1}^{d^{2}} \phi_{AB}^{\left(j\right)} \otimes I _{A_{2}} \otimes \psi_{\tilde{A} \tilde{A}_{2}}^{\left(j\right)} \right)  \phi^{+}_{AB} \otimes \mathbf{u}_{A_{2}} \otimes I_{\tilde{A} \tilde{A}_{2}}  \right] \\
     &= \operatorname{Tr}_{AA_{2}B} \left[ \sum_{j=1}^{d^{2}} \phi_{AB}^{\left(j\right)} \phi^{+}_{AB} \otimes \mathbf{u}_{A_{2}} \otimes \psi_{\tilde{A} \tilde{A}_{2}}^{\left(j\right)} \right] \\
     &= \operatorname{Tr}_{AA_{2}B} \Big[\phi^{\left(1\right)}_{AB} \otimes \mathbf{u}_{A_{2}} \otimes \psi_{\tilde{A} \tilde{A}_{2}}^{\left(1\right)} \Big] \\
     &= \phi_{\tilde{A}\tilde{A}_{2}}^{\left(1\right)},
\end{align}
which is evidently a pure state. Hence, 
\begin{align}
    H\left(AA_{2}|B\right)_{\phi^{1}_{AB} \otimes \mathbf{u}_{A_{2}}} & \leq  H\left(\ket{\psi}\!\!\bra{\psi}_{\tilde{A}\tilde{A}_{2}}\right) \\
    &= 0.
\end{align}
Upon combining everything, we see that 
\begin{align}
    0 & \leq H\left(A |B\right)_{\phi^{+}_{AB} \otimes \mathbf{u}_{2}} \\
    & \leq H\left(A|B\right)_{\phi^{+}_{AB}} + \text{log}|A| \\
    & \leq - \text{log}\left(d\right) ,
\end{align}
from which it follows that $H\left(A|B\right)_{\phi^{+}} = - \text{log}|A|$ and therefore $-\text{log}\left( \{|A|, |B|\} \right)$ lower bounds any density matrix $\rho_{AB}$. 

Finally, we demonstrate that if $\rho_{AB} \in \mathcal{D}\left(AB\right)$ is a separable state, then $H\left(A|B\right)_{\rho} \geq 0$. This follows from noting that any separable state $\rho_{AB}$ may be written as 
\begin{align}
    \rho_{AB} = \sum_{j=1} p_{j} \psi_{A}^{\left(j\right)} \otimes \phi_{B}^{\left(j\right)},
\end{align}
where $\{\psi_{A}^{\left(j\right)} \}_j$ and $\{\phi_{B}^{\left(j\right)}\}_j$ are sets of pure states. It follows that the CUSC channel 
\begin{align}
   \mathcal{N} _{AB \rightarrow AB} \triangleq \sum_{j=1}^{n} p_{j} \mathcal{U}_{A \rightarrow A}^{\left(j\right)} \otimes \mathcal{V}_{B \rightarrow B}^{\left(j\right)}
\end{align}
satisfies 
\begin{align}
    \mathcal{N}\left(\ket{0}\!\!\bra{0}_{A} \otimes \ket{0}\!\!\bra{0}_{B}\right) &= \rho_{AB}
\end{align}
for a choice of unitaries $\{\mathcal{U}_{A}^{\left(j\right)}\}$ such that $\mathcal{U}^{j} \ket{0} = \ket{\phi_{A}^{\left(j\right)}}$ and a choice of unitaries $\{\mathcal{V}_{B}^{\left(j\right)} \}$ such that $\mathcal{V}^{\left(j\right)} \ket{0} = \ket{\psi_{B}^{\left(j\right)}}$. Then 
\begin{align}
    0 & = H\left(A | B\right)_{\ket{0}\!\!\bra{0} _{A} \otimes \ket{0}\!\!\bra{0}_{B}} \\
    & \leq H\left(A | B\right)_{\mathcal{N}\left(\ket{0}\!\!\bra{0} _{A} \otimes \ket{0}\!\!\bra{0}_{B}\right)} \\
      & \leq  H\left(A | B\right)_{\rho_{AB}}.
\end{align}
This concludes the proof.
\end{proof}

Although conditional entropy is positive for all separable states, this does not mean that the conditional entropy is only positive for separable states. In fact, it is possible for some entangled states have positive entropy. The following lemma provides a simple criterion for finding states with positive conditional entropy.

\begin{lemma}
Let $H$ be a conditional entropy, and let $\rho\in\mathcal{D}\left(AB\right)$ with $d\eqdef|A|=|B|$ be a density matrix whose eigenvalues are less than or equal to $\frac{1}{d}$, and whose marginal state on system $B$ is a maximally mixed state, i.e., $\rho^B=\u_B$. Then, the conditional entropy on $\rho$ is non-negative, i.e.,
\begin{equation}
    H\left(A|B\right)_{\rho}\geq 0\ .
\end{equation}
\end{lemma}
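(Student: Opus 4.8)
The plan is to first restate the hypotheses in operator form. The condition that every eigenvalue of $\rho$ is at most $\tfrac{1}{d}$ is equivalent to the single operator inequality $\rho \le \tfrac{1}{d} I_{AB} = d\,\u_{AB}$, while the marginal hypothesis reads $\rho^{B} = \u_{B}$. By the monotonicity axiom it suffices to realize $\rho$ as the image of a \emph{separable} state under a CUSC channel: if $\rho = \mathcal{N}(\tau)$ with $\mathcal{N} \in \operatorname{CUSC}$ and $\tau$ separable, then $\tau \succsim_{A} \rho$ gives $\H(A|B)_{\rho} \ge \H(A|B)_{\tau} \ge 0$, the last inequality being the separable-state bound of the preceding Theorem. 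In fact I would aim for a source with $\H = 0$ exactly, so that the bound is tight on the extreme cases.

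The key observation guiding the choice of source is that conditional unitality forbids a CUSC channel from producing \emph{any} $A$–$B'$ correlation out of the maximally mixed input $\u_{A}$ (the output must be the product $\u_{A} \otimes \sigma_{B'}$). Hence whatever entanglement $\rho$ carries must be ``seeded'' by Alice's input deviating from $\u_{A}$. I would therefore take the source $\tau = \phi^{+}_{AA_{0}} \otimes \beta_{B}$, in which Alice holds a local maximally entangled pair on $AA_{0}$ and Bob holds a suitable $\beta_{B}$. This $\tau$ is pure and product across the Alice:Bob cut, hence separable, and by the reduction lemma together with additivity its conditional entropy is $\H(AA_{0}|B)_{\tau} = \H(AA_{0})_{\phi^{+}} = 0$, since pure states have zero entropy.

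Next I would construct $\mathcal{N}$ as an entanglement-swapping / reference-controlled gadget fitting the semicausal decomposition of Lemma~1: Bob's local map feeds a reference $R$, correlated with $B'$, into the box $\mathcal{E}_{RA \to A}$, which applies an $R$-controlled unitary to Alice's local pair, converting $\phi^{+}_{AA_{0}}$ into $\rho$ on the output systems $\tilde{A}B'$ (after an isometric relabelling that accounts for $|AA_{0}| \ge |\tilde A|$). Concretely I would specify $\mathcal{N}$ through its Choi operator $J^{\mathcal{N}}$, in the spirit of the explicit construction used in the proof of the Theorem, and then verify the two Choi characterizations established above, namely $J^{\mathcal{N}}_{ABB'} = \u_{A} \otimes J^{\mathcal{N}}_{BB'}$ (semicausality) and $J^{\mathcal{N}}_{B\tilde{A}B'} = J^{\mathcal{N}}_{BB'} \otimes \u_{\tilde{A}}$ (conditional unitality), alongside complete positivity and trace preservation.

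The hard part will be step three: arranging the gadget so that it is simultaneously trace-preserving, conditionally unital, and outputs \emph{exactly} $\rho$. This is precisely where the spectral hypothesis enters, and I expect it to be the crux rather than a routine check. The inequality $\rho \le \tfrac{1}{d} I_{AB}$ should be exactly the bound that keeps the associated Choi operator positive and that forces the $\u_{A}$ input to yield an uncorrelated product output, so that $\mathcal{N}$ is genuinely CUSC. The sharpness of the threshold is corroborated by $\phi^{+}_{AB}$ itself, whose largest eigenvalue $1 > \tfrac{1}{d}$ places it just outside the hypothesis and which indeed satisfies $\H(A|B)_{\phi^{+}} = -\log d < 0$; thus the construction must fail exactly when an eigenvalue exceeds $\tfrac{1}{d}$. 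Once $\mathcal{N}$ is in hand, monotonicity closes the argument: $\H(A|B)_{\rho} \ge \H(AA_{0}|B)_{\tau} = 0$.
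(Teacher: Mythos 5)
Your overall strategy is the same as the paper's: start from a pure input on Alice's side (which has zero conditional entropy), push it through a CUSC channel whose Bob-side isometry internally prepares a maximally entangled state $\phi^{+}_{RB'}$ and feeds $R$ into Alice's local map, then invoke monotonicity. The paper does exactly this, taking the input to be $|1\rangle\!\langle 1|_{\tilde{A}}$ with a trivial $B$ input, so that semi-causality is automatic and conditional unitality reduces to the single constraint $\mathcal{E}_{\tilde{A}R\to A}\left(\u_{\tilde{A}}\otimes\phi^{+}_{RB'}\right)=\u_{AB'}$.

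However, your proposal stops precisely where the proof begins. Everything up to your ``the hard part will be step three'' is framing; the lemma's entire content is the existence of a map $\mathcal{E}$ that is simultaneously completely positive, trace preserving, conditionally unital, and sends the pure input exactly to $\rho$. You acknowledge this is the crux and correctly guess where the hypotheses should enter, but you never exhibit the object, and no argument is given that one exists. The paper does: renaming $R$ as $\tilde{B}$, it defines $\mathcal{E}$ through the Choi operator
\begin{equation}
J^{\mathcal{E}}_{\tilde{A}\tilde{B}A}=|1\rangle\!\langle 1|_{\tilde{A}}\otimes d\rho_{A\tilde{B}}+\sum_{x=2}^{d}|x\rangle\!\langle x|_{\tilde{A}}\otimes\frac{I_{A\tilde{B}}-d\rho_{A\tilde{B}}}{d-1}\ ,
\end{equation}
and then checks three things: positivity holds if and only if $I_{A\tilde{B}}\geq d\rho_{A\tilde{B}}$, which is exactly the eigenvalue hypothesis; the marginal constraints $J^{\mathcal{E}}_{\tilde{A}\tilde{B}}=I_{\tilde{A}\tilde{B}}$ and $J^{\mathcal{E}}_{\tilde{B}A}=I_{\tilde{B}A}$ (trace preservation and conditional unitality) hold because $\rho^{B}=\u_{B}$; and on input $|1\rangle\!\langle 1|_{\tilde{A}}$ the first block is selected, so that the transpose trick against $\phi^{+}_{\tilde{B}B}$ returns $\frac{1}{d}\cdot d\rho_{AB}=\rho_{AB}$. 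Without this (or an equivalent) construction, your argument is a plan, not a proof. A secondary point: your ``entanglement seed'' $\phi^{+}_{AA_{0}}$ on Alice's side is unnecessary and somewhat misleading. Conditional unitality only forces Alice's input to differ from $\u_{A}$; purity of a single-system input suffices, and in both your gadget and the paper's the output $A$:$B'$ entanglement is created entirely inside the channel (Bob's isometry preparing $\phi^{+}_{RB'}$ and feeding $R$ to Alice's map), not by any entanglement Alice inputs.
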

\begin{proof}

Suppose $\mathcal{N}_{\tilde{A}B\to AB'}$ is a $\tilde{A}\nrightarrow B$ semi-causal channel as shown in Fig.~\ref{fig_semi-casual_channel} and suppose $|B|=1$. For a semi-causal channel $\mathcal{N}_{\tilde{A}B\to AB'}$ such that 
\begin{equation}
    \mathcal{N}_{\tilde{A}B\to AB'}=\mathcal{E}_{R\tilde{A}\to A}\circ\mathcal{F}_{B\to RB'}\ ,
\end{equation}
$\mathcal{F}$ is required to be an isometry. When $|B|=1$, $\mathcal{F}$ is always an isometry. In this case, the channel $\mathcal{F}$ could be viewed as a bipartite state on system $RB'$. Let this state be the maximally entangled state $\phi_{RB'}^{+}$ and let $|R|=|B'|$. Then the resulting channel $\mathcal{N}_{\tilde{A}\to AB'}$ takes the form 
\begin{equation}
    \mathcal{N}_{\tilde{A}\to AB'}\left(\omega _{A}\right)=\mathcal{E}_{\tilde{A}\to AB'}\left(\omega _{A}\otimes \phi_{RB'}^{+}\right)\quad\forall\omega\in\mathcal{D}\left(\tilde{A}\right)\ .
\end{equation}
Note that the channel $\mathcal{N}_{\tilde{A}\to AB'}$ is not necessarily conditional unital. It can be shown that channel $\mathcal{N}_{\tilde{A}\to AB'}$ is conditional unital iff the channel $\mathcal{E}_{\tilde{A}\to AB'}$  satisfies
\begin{equation}\label{eeee}
    \mathcal{E}_{\tilde{A}\to AB'}\left(\u _{A}\otimes \phi_{RB'}^{+}\right)=\u _{AB'}\ .
\end{equation}
Let $\mathcal{E}_{\tilde{A}\to AB'}$ be a quantum channel that satisfies \eqref{eeee}. Let $\sigma _{AB'}$ be the quantum state such that
\begin{equation}
    \sigma _{AB'}=\mathcal{N}_{\tilde{A}\to AB'}\left(|1\rangle\!\langle1|_{\tilde{A}}\right)\ ,
\end{equation}
where $|1\rangle$ is a pure state in $\tilde{A}$. We then have
\begin{align}
    H\left(A|B'\right)_{\sigma}&=H\left(A|B'\right)_{\mathcal{N}\left(|1\rangle\!\langle1|\right)}\\
    &\geq H\left(|1\rangle\!\langle 1|_{\tilde{A}}\right)   \\
    &=0   .
\end{align}
To simplify the notation, we rename system $B'$ as $B$. Since $|R|=|B|=|\tilde{B}|$, we rename $R$ as $\tilde{B}$. Then, the channel $\mathcal{E}_{\tilde{A}R\to A}$ becomes $\mathcal{E}_{\tilde{A}B\to A}$, and its Choi matrix is denoted by $J^{\mathcal{E}}_{\tilde{A}\tilde{B}A}$. By \eqref{eeee}, we have
\begin{equation}\label{iiii}
    J^{\mathcal{E}}_{\tilde{A}\tilde{B}}=I _{A\tilde{B}}\quad \text{and}\quad   J^{\mathcal{E}}_{\tilde{B}A}=I_{\tilde{B}A}\ .
\end{equation}
Now, let $\rho_{AB}$ be as in the lemma, and define the channel $\mathcal{E}_{\tilde{A}B\to A}$ via its Choi matrix
\begin{equation}
    J^{\mathcal{E}}_{\tilde{A}\tilde{B}A} \eqdef|1\rangle\!\langle 1|_{\tilde{A}}\otimes \left(d\rho _{A\tilde{B}}\right)+\sum_{x=2}^d |x\rangle\!\langle x|_{\tilde{A}}\otimes\frac{I _{A\tilde{B}}-d\rho _{A\tilde{B}}}{d-1}\ .
\end{equation}
Note that the matrix above is positive semidefinite since $I _{A\tilde{B}}\geq d\rho _{A\tilde{B}}$ and it satisfies the two conditions in \eqref{iiii} since $\rho_{B}=\u_B$. With this choice of $\mathcal{E}$ we get 
\begin{align}
    &\sigma _{AB}=\mathcal{E}_{\tilde{A}R\to A}\left(|1\rangle\!\langle 1|_{\tilde{A}}\otimes\phi_{RB'}^{+}\right)\\
    &=\operatorname{Tr}_{\tilde{A}\tilde{B}}\left[\left(J^{\mathcal{E}}_{\tilde{A}\tilde{B}A} \otimes I^B\right)\left(|1\rangle\!\langle 1|_{\tilde{A}}\otimes\left(\phi_{\tilde{B}B}^{+}\right)^{T_{\tilde{B}}}\otimes I_A\right)\right]   \\
    &=\operatorname{Tr}_{\tilde{B}}\left[\left(\rho _{A\tilde{B}}\otimes I_{B}\right)\left(\Phi_{\tilde{B}B}\right)^{T_{\tilde{B}}}\otimes I _{A}\right]   \\
    &=\rho_{AB}   ,
\end{align}
where $T_{\tilde{B}}$ is the partial transposition map on system $\tilde{B}$. This completes the proof that the conditional entropy of $\sigma _{AB}$ is non-negative.
\end{proof}

\section{Quantum conditional majorization reduces to classical conditional majorization}
In~\cite{Gour}, a pre-order $\succsim_{C}$ between two joint probability distributions (i.e. classical states) denotes \emph{classical conditional majorization}. In this section, we demonstrate that the notion of quantum conditional majorization defined in the previous section is a generalization of classical conditional majorization. To this aim, we demonstrate that when we restrict the underlying bipartite quantum states to classical states, the quantum conditional majorization is equivalent to the classical conditional majorization. 

Suppose the system $A\eqdef X$ is classical, we call a channel $\mathcal{M}\in$CPTP$(XB\to XB')$ \emph{conditionally doubly stochastic} (CDS in short) if it has the following form 
\begin{equation}
    \mathcal{M}_{YB\to YB'}=\sum_{j}\mathcal{D}_{X\to X}^{(j)}\otimes\mathcal{F}_{B\to B'}^{(j)}
\end{equation}
where each $\mathcal{D}{(j)}$ is a classical doubly stochastic channel and each $\mathcal{F}_{B\to B'}^{(j)}$ is a completely positive (CP) map such that $\sum_j \mathcal{F}_{B\to B'}^{(j)}$ is CPTP.

\begin{lemma}\label{lem_red}
Suppose all the systems involved are classical with $A\eqdef X$, $B\eqdef Y$, and $B'\eqdef Y'$. Then we have
\begin{equation}
    \operatorname{CDS}(XY\to XY')\subseteq \operatorname{CUSC}(XY\to XY')
\end{equation}
\end{lemma}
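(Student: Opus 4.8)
The plan is to verify directly that an arbitrary $\mathcal{M}\in\operatorname{CDS}(XY\to XY')$ meets the two defining requirements of a CUSC channel --- conditional unitality and $X\not\to Y'$ semi-causality --- while complete positivity and trace preservation come for free, since $\operatorname{CDS}$ is by definition a subset of $\operatorname{CPTP}(XY\to XY')$. Writing $\mathcal{M}=\sum_j \mathcal{D}^{(j)}_{X\to X}\otimes\mathcal{F}^{(j)}_{Y\to Y'}$, the two checks will exploit the two complementary features encoded in ``doubly stochastic'': unitality (row sums equal one) will deliver conditional unitality, and trace preservation (column sums equal one) will deliver semi-causality.

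For conditional unitality I would feed in an arbitrary state $\mathbf{u}_X\otimes\rho_Y$ and compute
\begin{align}
\mathcal{M}(\mathbf{u}_X\otimes\rho_Y)=\sum_j \mathcal{D}^{(j)}(\mathbf{u}_X)\otimes\mathcal{F}^{(j)}(\rho_Y).
\end{align}
Each classical doubly stochastic channel fixes the uniform distribution, $\mathcal{D}^{(j)}(\mathbf{u}_X)=\mathbf{u}_X$, so the $X$-factor pulls out of the sum, yielding
\begin{align}
\mathcal{M}(\mathbf{u}_X\otimes\rho_Y)=\mathbf{u}_X\otimes\Big(\sum_j\mathcal{F}^{(j)}(\rho_Y)\Big).
\end{align}
Since $\sum_j\mathcal{F}^{(j)}$ is CPTP, the operator $\sigma_{Y'}\eqdef\sum_j\mathcal{F}^{(j)}(\rho_Y)$ is a legitimate density matrix, which is exactly the conditional unitality requirement.

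For semi-causality I would trace out the output $X$ system and show the resulting map depends only on the $Y$-marginal of the input. Because each $\mathcal{D}^{(j)}$ is trace preserving, $\operatorname{Tr}_X\circ\mathcal{D}^{(j)}=\operatorname{Tr}_X$, so for any $\rho_{XY}$
\begin{align}
\mathcal{M}_{XY\to Y'}(\rho_{XY})=\sum_j\mathcal{F}^{(j)}(\rho_Y),
\end{align}
with $\rho_Y=\operatorname{Tr}_X[\rho_{XY}]$. Any channel $\mathcal{C}_{X\to X}$ that Alice applies leaves this marginal unchanged, $\operatorname{Tr}_X[(\mathcal{C}_X\otimes\mathcal{I}_Y)(\rho_{XY})]=\rho_Y$, whence $\mathcal{M}_{XY\to Y'}\circ\mathcal{C}_{X\to X}=\mathcal{M}_{XY\to Y'}$, establishing $X\not\to Y'$ semi-causality. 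Combining the two verifications places $\mathcal{M}$ in $\operatorname{CUSC}(XY\to XY')$, giving the claimed inclusion.

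I do not anticipate a genuine obstacle: this is a short structural check rather than a hard estimate. The only points requiring mild care are the bookkeeping of the partial trace against the tensor-sum decomposition --- in particular recognizing that the two halves of ``doubly stochastic'' (unital versus stochastic) are precisely what the two CUSC conditions separately demand --- and noting that it is $\sum_j\mathcal{F}^{(j)}$ being CPTP, rather than each $\mathcal{F}^{(j)}$ individually, that guarantees a valid output state $\sigma_{Y'}$.
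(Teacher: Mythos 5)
Your proposal is correct and follows essentially the same route as the paper's proof: decompose $\mathcal{M}=\sum_j \mathcal{D}^{(j)}\otimes\mathcal{F}^{(j)}$, use $\mathcal{D}^{(j)}(\mathbf{u}_X)=\mathbf{u}_X$ (unitality half of double stochasticity) to get conditional unitality, and use trace preservation of the $\mathcal{D}^{(j)}$ to get $X\not\to Y'$ semi-causality. Your phrasing of semi-causality (the traced-out map factors through the $Y$-marginal, which is invariant under local channels on $X$) is just a mild repackaging of the paper's computation $\operatorname{Tr}_X\bigl[\mathcal{D}^{(j)}\circ\mathcal{T}\bigr]=\operatorname{Tr}_X\bigl[\mathcal{D}^{(j)}\bigr]$, so the two arguments coincide.
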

\begin{proof}
Recall that if $\mathcal{M} \in \text{CDS}(XY \rightarrow XY')$, there must exist a set of classical doubly stochastic channels $\{\mathcal{D}^{(j)}\}$ and a quantum instrument $\{\mathcal{F}_{Y \rightarrow Y'}^{(j)}\}$ s.t.  
\begin{equation}
    \mathcal{M}_{AB\to AB'}=\sum_{j}\mathcal{D}_{A\to A}^{(j)}\otimes\mathcal{F}_{B\to B'}^{(j)}
\end{equation}
Then for any input of the form $\mathbf{u}_{A} \otimes \sigma_{B}$, it follows that 
\begin{align}
    \mathcal{M}_{AB\to AB'}(\mathbf{u}_{A} \otimes \sigma_{B})&=\sum_{j}\mathcal{D}_{A\to A}^{(j)}(\mathbf{u}_{A}) \otimes\mathcal{F}_{B\to B'}^{(j)}(\sigma_{B}) \\
    &= \mathbf{u}_{A} \otimes \sum_{j}\mathcal{F}_{B\to B'}^{(j)}(\sigma_{B})
\end{align}
and therefore $\mathcal{M}$ is conditionally unital. (Note that the third line of the above follows from the fact that $\mathcal{D}^{(j)}$ is doubly stochastic, so $\mathcal{D}^{(j)}(\mathbf{u}_{A}) = \mathbf{u}_{A}$ for all $j$.)

Finally, we show that $\mathcal{M}$ is semi-causal as follows:
\begin{align}
   \text{Tr}_{A}&\left[ \mathcal{M} \circ \mathcal{T}_{A \rightarrow A} \right] = \text{Tr}_{A}\left[ \sum_{j}\mathcal{D}_{A\to A}^{(j)} \circ \mathcal{T}_{A \rightarrow A} \otimes\mathcal{F}_{B\to B'}^{(j)} \right] \\
    &=   \sum_{j} \text{Tr}_{A}\left[\mathcal{D}_{A\to A}^{(j)} \circ \mathcal{T}_{A \rightarrow A} \right] \otimes\mathcal{F}_{B\to B'}^{(j)}  \\
     &=   \sum_{j} \text{Tr}_{A}\left[\mathcal{D}_{A\to A}^{(j)} \right] \otimes\mathcal{F}_{B\to B'}^{(j)}  \\
     &= \text{Tr}_{A}\left[ \mathcal{M}_{AB \rightarrow AB'} \right]
\end{align}
Given that $\mathcal{M}$ is conditionally unital and semi-causal, then the theorem statement follows.
% Assume that the reference system $R:=W$ is classical, in which case a CUSC channel can be represented as
% \begin{align}
%     \mathcal{N}_{AB\to AB'}&=\mathcal{E}_{WA\to A}\circ\mathcal{F}_{B\to WB'}\\
%     &=\sum_{w=1}^{m}\mathcal{E}_{A\to A}^{(w)}\otimes\mathcal{F}_{B\to B'}^{(w)}\ .
% \end{align}
% Since $\mathcal{N}_{AB\to AB'}$ is conditional unital
\end{proof}

\begin{theorem}
Let $\rho_{XY}$ and $\sigma_{XY'}$, be classical bipartite states such that $A=X$, $B=Y$, and $B'=Y'$ are all classical systems. Then $\sigma_{XY'} \precsim \rho_{XY}$ if and only if $\sigma_{XY'} \precsim_{C} \rho_{XY}$ where $\precsim_{C}$ is the classical conditional majorization defined in~\cite{Gour}. 
\end{theorem}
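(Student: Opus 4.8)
The plan is to prove the two inclusions of achievable sets separately, using Lemma~\ref{lem_red} for one direction and a ``classicalization $+$ semi-causal decomposition'' argument for the other. Throughout I take $\precsim_C$ to be the pre-order generated by the CDS channels, i.e.\ $\sigma_{XY'}\precsim_C\rho_{XY}$ exactly when some $\mathcal{M}\in\operatorname{CDS}(XY\to XY')$ maps $\rho$ to $\sigma$ (up to the classical embedding used to match alphabet sizes). As in the quantum definition, this embedding lets me assume without loss of generality that $|X|=|X'|$ by padding the smaller alphabet with zero-probability symbols, since both pre-orders are invariant under such padding. For the easy direction ($\precsim_C\Rightarrow\precsim$): if $\sigma=\mathcal{M}(\rho)$ with $\mathcal{M}\in\operatorname{CDS}$, then Lemma~\ref{lem_red} gives $\mathcal{M}\in\operatorname{CUSC}$, so the very same channel witnesses $\sigma\precsim\rho$; the dimension-matching embedding is a classical isometry and is covered by the isometries in Definition~\ref{majorisation}. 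This direction requires no further work.

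For the hard direction ($\precsim\Rightarrow\precsim_C$), suppose $\sigma=\mathcal{N}(\rho)$ for some $\mathcal{N}\in\operatorname{CUSC}(XY\to XY')$. First I would \emph{classicalize} $\mathcal{N}$: letting $\Delta$ denote the completely dephasing channel in the computational basis, set $\mathcal{N}'\eqdef\Delta_{X'Y'}\circ\mathcal{N}\circ\Delta_{XY}$. Since $\rho$ and $\sigma$ are diagonal we still have $\mathcal{N}'(\rho)=\sigma$, and $\mathcal{N}'$ maps diagonal inputs to diagonal outputs, so it is an honest classical channel with transition matrix $p(x',y'\mid x,y)$. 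A short check---using $\Delta_X(\u_X)=\u_X$ together with $\operatorname{Tr}_X\circ\Delta_X=\operatorname{Tr}_X$---shows that pre- and post-composition with $\Delta$ preserves both conditional unitality and semi-causality, so $\mathcal{N}'\in\operatorname{CUSC}$ and I may assume $\mathcal{N}$ is classical.

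Next I would apply the semi-causal decomposition lemma (following~\cite{Piani_2006}), writing $\mathcal{N}$ with a classical reference $R$ as Bob's processing $\mathcal{F}_{Y\to RY'}$ followed by Alice's map $\mathcal{E}_{RX\to X'}$, so that $p(x',y'\mid x,y)=\sum_r p(r,y'\mid y)\,p(x'\mid x,r)$. Semi-causality is automatic from this form. Translating conditional unitality (via the Choi-matrix characterization) into this language yields the constraint that, for every $y$, the $p(\cdot\mid y)$-weighted average of the column sums $a_r(x')\eqdef\sum_x p(x'\mid x,r)$ equals the all-ones vector; that is, the $X$-action of $\mathcal{E}$ is doubly stochastic \emph{on average over} $r$. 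If each branch map $p(\cdot\mid\cdot,r)$ were itself doubly stochastic, then taking CDS branch index $j=r$ with $\mathcal{D}^{(j)}=p(\cdot\mid\cdot,r)$ and $\mathcal{F}^{(j)}=p(r,\cdot\mid\cdot)$ would reproduce $\mathcal{N}$ exactly and display it as a CDS channel, finishing the proof.

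The main obstacle is precisely that conditional unitality forces double stochasticity of the branch maps only \emph{on average}, not branch by branch, so a classical CUSC channel need not literally be CDS: the $X'$--$Y'$ correlations it creates must be rebuilt from honestly doubly stochastic branches. I expect the resolution to exploit that only the output state $\sigma$ must be matched, not the whole channel. Using Birkhoff--von Neumann to expand each stochastic branch map, combined with the averaged double-stochasticity constraint, one can hope to repackage Bob's instrument so that every surviving branch carries a genuine doubly stochastic $X$-map while leaving $\sigma=\mathcal{M}(\rho)$ unchanged; equivalently, one verifies the explicit majorization condition characterizing $\precsim_C$ in~\cite{Gour} directly from the semi-causal and conditional-unital data. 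This repackaging is the technical heart of the theorem, whereas the classicalization step and the easy direction are routine.
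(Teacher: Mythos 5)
Your easy direction (via Lemma~\ref{lem_red}) and your setup for the hard direction---the semi-causal decomposition $\mathcal{N}=\sum_j\mathcal{E}^{(j)}\otimes\mathcal{F}^{(j)}$ and the observation that conditional unitality only forces double stochasticity of Alice's branch maps \emph{on average}---coincide with the paper's proof; the explicit dephasing/classicalization step is a harmless rigorization that the paper leaves implicit. However, the proposal stops exactly where the proof has to be made. The step you describe as ``one can hope to repackage Bob's instrument'' and ``I expect the resolution to exploit\ldots'' is the entire content of the hard direction, and it is left unproven. Worse, the tool you suggest, Birkhoff--von Neumann, does not apply here: the branch maps $E^{(j)}$ are only column stochastic, not doubly stochastic, so they admit no decomposition into permutation matrices, and nothing in the proposal indicates how the averaged constraint would be converted into genuinely doubly stochastic objects. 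This is a genuine gap, not a routine omission.

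For comparison, the paper closes the gap with a short construction whose key idea is a different \emph{grouping} of terms. Writing $Q=\sum_j E^{(j)}PR^{(j)}$ in components, $q_{xw}=\sum_j\sum_y (E^{(j)}\p_y)_x\,r^{(j)}_{yw}$, one feeds in the inputs $\u_X\otimes\op{y_0}{y_0}$ permitted by conditional unitality. Summing over $x$ identifies $q_w=t_{y_0w}\eqdef\sum_j r^{(j)}_{y_0w}$ (a row-stochastic matrix $T$), and substituting back gives, for every $y$ and $w$, the vector identity
\begin{equation}
    t_{yw}\,\u=\sum_j r^{(j)}_{yw}E^{(j)}\u\ .
\end{equation}
Hence the mixtures $D_{(y,w)}\eqdef\sum_j \bigl(r^{(j)}_{yw}/t_{yw}\bigr)E^{(j)}$---convex combinations of column-stochastic matrices that fix $\u$---are doubly stochastic, even though no individual $E^{(j)}$ need be. The output then rearranges as $\q_w=\sum_y t_{yw}D_{(y,w)}\p_y$ for all $w$, which is precisely the characterization of $\precsim_C$ in Lemma 3 of~\cite{Gour}. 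In other words, rather than repackaging the instrument so that each branch indexed by $j$ becomes doubly stochastic, one groups by the pair $(y,w)$ and mixes the $E^{(j)}$ with weights $r^{(j)}_{yw}/t_{yw}$; double stochasticity then falls out of conditional unitality with no modification of the channel at all. This is the missing idea your proposal needed.
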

\begin{proof}
% Let $\rho\in\mathcal{D}(XY)$ and $\sigma\in\mathcal{D}(XY')$ where $\mathcal{D}(A)$ denotes the set of all density matrices compatible with system $A$. First, we recall that lemma 2 of ~\cite{brandsen2021entropy} states that $\sigma_{XY'} \precsim_{C} \rho_{XY}$ if and only if $\sigma_{XY'} = \mathcal{N}(\rho_{XY})$ for some $\mathcal{N}$ of the form $\sum_{z} \mathcal{V}_{X}^{(z)} \otimes \mathcal{S}_{Y}^{(z)}$ for some classical isometry (i.e. permutations) channels $\{\mathcal{V}_{X}^{(z)}\}$ and some instrument $\mathcal{S}_{Y}^{(z)}$

It then immediately follows from lemma~\ref{lem_red} that $\rho_{XY}\succsim \sigma_{XY'}$ as in definition 4. 

To show the other direction, we suppose that $\rho_{XY}\succsim \sigma_{XY'}$ as in definition 4. That is, there exists $\mathcal{N}\in\operatorname{CUSC}(XY\to XY')$ such that
\begin{equation}
    \sigma_{XY'}=\mathcal{N}_{XY\to XY'}(\rho_{XY})\ .
\end{equation}
Since $\mathcal{N}$ is semi-causal, it has the form 
\begin{equation}
    \mathcal{N}_{XY\to XY'}=\sum_{j=1}^{k}\mathcal{E}_{X\to X}^{(j)}\otimes\mathcal{F}_{Y\to Y'}^{(j)}\ ,
\end{equation}
where for each $j\in [K]$, $\mathcal{E}^{(j)}\in\text{CPTP}(X\to X)$ and $\mathcal{F}^{(j)}\in \text{CP}(Y\to Y')$ with $\sum_j\mathcal{F}_j\in \text{CPTP}(Y\to Y')$. Hence,
\begin{equation}
  \sigma_{XY'}=\sum_{j=1}^k \mathcal{E}_{X\to X}^{(j)}\otimes\mathcal{F}_{Y\to Y'}^{(j)}(\rho_{XY})\ . 
\end{equation}
Since $\rho,\sigma$ are classical, we have the identification $\rho_{XY}\cong P$ and $\sigma_{XY'}\cong Q$. Thus, the above equation can be expressed as
\begin{equation}
    Q=\sum_{j=1}^k E^{(j)}PR^{(j)}\ ,
\end{equation}
where for each $j\in[k]$, $E^{(j)}$ is the transition matrix of $\mathcal{E}^{(j)}$, and where $R^{(j)}$ is the transpose of transition matrix of $\mathcal{F}^{(j)}$. Hence, each $E^{(j)}$ is column stochastic and $R\eqdef\sum_{j=1}^kR^{(j)}$ is row stochastic. Let $\{\p_{y}\}_{y\in[n]}$ be the columns of $P$, and observe that in components form, the above relation can be expressed as 
\begin{equation}\label{8.222}
    q_{xw}=\sum_{j=1}^k\sum_{y=1}^n (E^{(j)}\p_y)_x r_{yw}^{(j)}\ .
\end{equation}
So far we only used the constraint that $\mathcal{N}$ is semi-causal. The other constraint that $\mathcal{N}$ is conditionally unital implies that if $p_{xy}=\frac{1}{m}p_y$ then $q_{yw}=\frac{1}{m}q_{w}$. Therefore, for simplicity we take $p_{xy}=\frac{1}{m}\delta_{y}{y_0}$ for some fixed $y_0\in[n]$. Then the above equation becomes 
\begin{equation}\label{8.223}
    \frac{1}{m}q_{w}=\sum_{j}(E^{(j)}\u)_x r^{(j)}_{y_0 w}\ .
\end{equation}
Summing over $x$ on both sides of the equation, and using the fact that each $E^{(j)}$ is column stochastic implying that $E^{(j)}\u$ is a probability vector, we have
\begin{equation}
    q_w=\sum_j r_{y_0 w}^{(j)}\ .
\end{equation}
Note that we assumed that $y_0$ was fixed, and therefore as a result the equation above does not imply that $q_w$ is independent of $y_0$ if we allow $y_0$ to vary. In other words, each value of $y_0$ corresponds to a different $P$ matrix and therefore possibly resulting in a different $Q$ matrix. For simplicity, we rename $y_0$ as $y$ and denote 
\begin{equation}
    q_{w}=t_{yw}\eqdef\sum_{j}r^{(j)}_{yw}\ .
\end{equation}
By definition, the matrix $T=(t_{yw})$ is row-stochastic, and with the above notations , we can express \eqref{8.223} as
\begin{equation}
    \u=\sum_{j}\frac{r_{yw}^{(j)}}{t_{yw}}E^{(j)}\u\quad y\in[n],\ w\in[\ell]\ .
\end{equation}
Therefore, the matrices
\begin{equation}
    D_{(y,w)}\eqdef\sum_j \frac{r_{yw}^{(j)}}{t_{yw}}E^{(j)}\quad y\in[n],\ w\in[\ell]
\end{equation}
are all doubly stochastic. To see this, observe that for each $y$ and $w$, the set $\{\frac{r_{yw}^{(j)}}{t_{yw}}\}$ forms a probability distribution over $j$, and the fact that each $E^{(j)}$ is column stochastic implying that $D_{(y,w)}$ is column stochastic. Moreover, the relation $\u=D_{(y,w)}u$ implies that $D_{(y,w)}$ are row stochastic. Finally, for general probability matrices $P=[\p_1,\ldots,\p_n]$ and $Q=[\q_1,\ldots,\q_{n'}]$, we conclude that \eqref{8.222} is equivalent to 
\begin{equation}
    \q_w=\sum_{y=1}^nt_{yw}D_{(y,w)}\p_y\quad\forall w\in[n']\ .
\end{equation}
The above relation is precisely the condition given in lemma 3 of~\cite{Gour}. Therefore, we conclude that $\rho_{XY}\succsim \sigma_{XY'}$ implies classical conditional majorization. 
\end{proof}

\section{Connection to Games of Chance}

\subsection{Games of Chance for Classical Bipartite States}

In~\cite{brandsen2021entropy}, games of chance were utilized to find an ordering between classical bipartite states, and it has been demonstrated that such games of chance are an alternate, operational approach for characterising classical majorisation. We aim to likewise construct a set of quantum games of chance which provide an operational meaning for majorisation of quantum bipartite states.

Firstly, we briefly review the games of chance introduced in~\cite{brandsen2021entropy} for classical bipartite states.  We first note that any classical bipartite state may be represented as a probability distribution of the form $\mathbf{p} = \{p_{yz}\}$ where the corresponding state in quantum notation is $\rho_{AB} = \sum_{y,z} p_{y,z} \ket{y}\!\!\bra{y} \otimes \ket{z}\!\!\bra{z}$. Each game is fixed by a classical channel $\mathcal{T}$ with transition probabilities $\{t_{w|z'}\}$, as well as the bipartite state $\mathbf{p}$ which is being used for the game. 

The player is only allowed to access subsystem $B$ (corresponding to the value of $z$), and the goal is for the player to correctly guess the value of $y$. The family of gambling games includes all possible gambling games that incorporate a correlated source. Given access to $z$, the player chooses $z' = f(z)$ and the host subsequently selects $w$ from the conditional distribution $\mathcal{T}$. In general, the player will choose $f(z)$ based on their knowledge of $z$, as well as the fixed distributions $\{p_{yz}\}$ and $\{t_{w|z'}\}$. Finally, once $w$ is selected, the player communicates a set $\mathcal{S}$ of $w$ guesses to the host and wins if and only if $y \in \mathcal{S}$. The game is depicted in Fig.~\ref{classical1}.  

\begin{figure}[h]
\begin{overpic}[scale=.56]{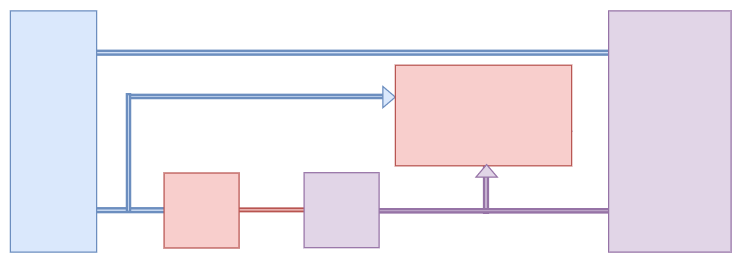}
\put(112,17){$\mathcal{T}$}
\put(105,80){$y$}
\put(65,17){$f$}
\put(90,23){$z'$}
\put(139.5,61){Choose set}
\put(210,57){Win if}
\put(210,45){$y \in \mathcal{S}$}
\put(139.5,51){$\mathcal{S}$ with $w$}
\put(139.5,40){elements}
\put(35,25){$z$}
\put(142,23){$w$}
\put(7, 50){$\{p_{yz}\}$}
\end{overpic}
 \caption{\linespread{1}\selectfont{\small A classical gambling game with a correlated source. The player is provided with the value $z$. Based on this value, the player chooses $z$ (or the function $f$) and sends it to the host. The host then chooses the $w$ game based on a (possibly incomplete) distribution matrix $T=(t_{w|z})$. The player will then form a set $\mathcal{S}$ containing $w$ guesses based on $w$ and $z$, and will win the game if $y \in \mathcal{S}$.}} 
  \label{classical1}
\end{figure}

 The majorisation relation corresponding to classical games of chance is denoted as $\precsim_{C}$ and is defined s.t. $\sigma_{AB} \precsim \rho_{AB'}$ if and only if  $R_{\mathcal{T}}(\mathbf{q}) \leq R_{\mathcal{T}}(\mathbf{p})$
for all conditional distributions $\mathcal{T}$ where $R_{\mathcal{T}}(\mathbf{p})$ is the expected reward for playing the gambling game characterised by $\mathcal{T}$ with state $\mathbf{p}$. It has previously been demonstrated~\cite{brandsen2021entropy} that the conditional majorisation relation introduced in definition \ref{majorisation} is equivalent to the partial ordering induced by games of chance.

\subsection{Extension of gambling games to quantum bipartite states}

In this section, we build on the framework for classical games of chance to develop the most general games of chance for quantum bipartite states (and respectively quantum channels). The only restriction on such games is that the ordering induced by these families of games of chance must reasonably correspond to a measure of uncertainty. For example, states which are equivalent up to a unitary must have equivalent performance in these games of chance. 

First, we provide a straightforward extension of the classical gambling games, and demonstrate that this direct extension is insufficient to capture necessary quantum effects such quantum entanglement being a resource. Next, we provide a broader set of games which still reduces to the same ordering on classical states, but which leads to negative entropy for maximally entangled states. 

To generalize the classical gambling game, we first transform Bob's observation of the classical output $z$ into a POVM implemented by Bob on system $B$. Evidently, $y$ can be written in quantum notation as $\ket{y}\!\!\bra{y}$ such that a measurement in the computational basis allows perfect observation of $y$.

Likewise, Bob's choice of $w$ guesses is equivalent to sending an ordered list of all potential outcomes and winning if $y$ is in the first $w$ elements of the list. In turn, this can be transformed into performing an ordered measurement on system $A$ with the output being transmitted to the host. For example, if their first guess is $2$, the second guess is $1$, and the third guess is $3$, then the they would implement the ordered measurement $\left\{ \ket{2}\!\!\bra{2}, \ket{1}\!\!\bra{1}, \ket{3}\!\!\bra{3} \right\}$ on Alice's system. This game is depicted in Fig.~\ref{firstgame}. 
\begin{figure}[h!]
  \begin{overpic}[scale=0.555]{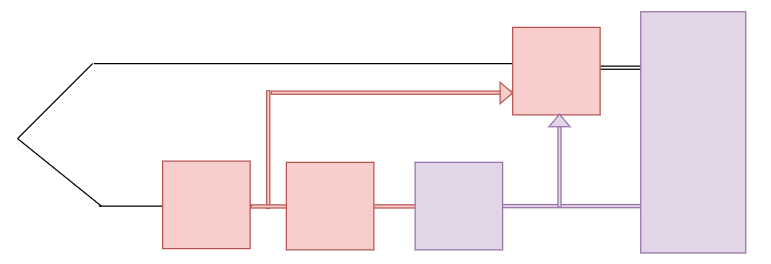}
  \put(-2, 51){$\rho_{AB}$}
 \put(27, 69){$A$}
  \put(27, 10){$B$}
\put(179, 63){$\hat{\Pi}^{*}_{z, w}$}
\put(220, 50){Win if}
\put(220, 35){$y \leq w$}
\put(149, 18){$\mathcal{T}$}
\put(108, 18){$f$}
\put(65, 18){$\hat{\Pi}$}
\put(89, 12){$z$}
\put(82, 45){$z$}
\put(129, 23){$z'$}
\put(175, 23){$w$}
\end{overpic}
\caption{Direct extension of classical gambling games for bipartite states. The player selects any rank-one measurement $\hat{\Pi}$ to implement on system $B$ and observes the outcome $z$. Given $z$, the player selects input $z'=f(z)$ to $\mathcal{T}$, and the hosts draws $w$ according to $\{t_{w|z'}\}$. The player then implements any rank one measurement $\hat{\Pi}^{*}_{z,w}$ given knowledge of $z$ and $w$, and wins if the output $y\leq w$.}
\label{firstgame}
\end{figure}
However, this direct extension is insufficient to distinguish between the un-entangled product state $\sigma_{AB} = \ket{0}\!\!\bra{0}_{A} \otimes \ket{0}\!\!\bra{0}_{B}$ and the entangled product state $\rho_{AB} = \phi_{AB}^{+}$. More specifically, any game can be won with certainty when played with $\ket{0}\!\!\bra{0}_{A} \otimes \ket{0}\!\!\bra{0}_{B}$. Regardless of the player's choice of $\hat{\Pi}$ or $f$, and regardless of $w$, the player can simply implement the measurement $\hat{\Pi}^{*}_{z,w} \triangleq \big\{\ket{0}\!\!\bra{0}, \ket{1}\!\!\bra{1} \big\}$ on the first system. The first outcome (corresponding to measurement element $\ket{0}\!\!\bra{0}$) will be obtained with certainty, and so $y=1$ will be less than or equal to any $w \in \{1,2,3,4\}$ leading to a win for the player. 

To break this equivalency, we now introduce a more sophisticated game which allows an adversarial player to ``scramble'' the state of system $A$ with some probability $p$.  We define a conditional quantum gambling game $G=(\mathcal{T},\mathbf{p})$ by the classical channel $\mathcal{T}$ with transition matrix is $T = (t_{w|z'} )_{w,z'}$, and the probability $p$ of adversarial involvement. The specific steps of the game are as follows: 
\begin{enumerate}
    \item Alice and Bob are given a classical description of the state they share as the entries of the corresponding density matrix $\rho_{AB}$.
    \item The host generates $x \in \{0, 1\}$ from the probability distribution $\mathbf{p} = \left(p, 1-p\right)$, where $p\in[0,1]$. If $x = 0$, then the adversary is allowed to interfere by implementing a measurement $\hat{\Pi}^{*}$ on system $A$. 
    \item Bob is told which measurement the adversary implemented (if any), but is not told the result. He is then allowed to perform any measurement $\hat{\Pi}' = \{\Pi^{' (z)}_{B}\}$ on system $B$ and is told the measurement result $z$.
    \item Bob can choose any $z' = f\left(z\right)$ to input to the classical channel $\mathcal{T}$ with corresponding transition matrix $T = (t_{w|z'} )_{w,z'}$. The host draws $w$ from $\{t_{w|z'}\}$
    \item Given $z$ and $w$, Alice implements a final rank-one projective measurement $\hat{\Pi}(z,w) = \{\Pi^{(k)}(z,w)\}|_{k}$ on system $A$. The rank-one requirement follows from noting that if Alice were allowed to choose a trivial measurement such as $\{I, 0, \ldots, 0\}$, she would win any game with certainty. 
    \item Alice and Bob win if the outcome $y$ is less than $w$.
\end{enumerate}

\begin{figure}[h!]
  \begin{overpic}[scale=0.5]{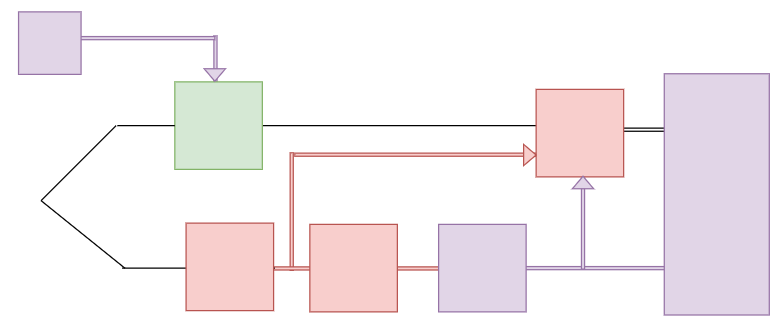}
 \put(-7, 40){$\rho_{AB}$}
 \put(20, 55){$A$}
  \put(20, 17){$B$}
  \scriptsize
\put(162, 57){$\hat{\Pi}(z, w)$}
\normalsize
\put(202, 50){Win if}
\put(202, 38){$y \leq w$}
\put(140, 17){$\mathcal{T}$}
\put(103, 15){$f$}
\put(64, 17){$\hat{\Pi}'$}
\put(85, 11){$z$}
\put(80, 40){$z$}
\put(122, 22){$z'$}
\put(185, 22){$w$}
\put(12, 85){$\mathbf{p}$}
\put(61, 58.5){$\hat{\Pi}^{*}$}
\normalsize
\put(45, 89){$x$}
\end{overpic}
\caption{Alternative extension of classical gambling games for bipartite states. With probability $p$, $x=0$ and the adversarial player is allowed to implement an adversarial measurement $\hat{\Pi}_{A}^{*}$ on Alice's system at the beginning of the game. The player selects any rank-one measurement $\hat{\Pi}$ to implement on system $B$ and observes the outcome $z$. Given $z$, the player selects input $z'=f(z)$ to $\mathcal{T}$, and the hosts draws $w$ according to $\{t_{w|z'}\}$. The player then implements any rank one measurement $\hat{\Pi}(z,w)$ given knowledge of $z$ and $w$, and wins if the output $y\leq w$.}
\label{firstgame}
\end{figure}

In Appendix~\ref{appendixC}, we demonstrate that the ordering induced by this extended quantum gambling game on classical bipartite states reduces to classical conditional majorisation. In the following, we will describe the reward function for the above games of chance and demonstrate that the induced partial ordering between quantum bipartite states distinguishes between entangled and unentangled pure states. Additionally, in the case where system $B$ is trivial, the ordering reduces to standard majorisation.

First, we denote by $\mathcal{R}_{\mathcal{T}, p}$ the total expected reward for playing the game corresponding to $\mathcal{T}$ and probability distribution $\mathbf{p} = \{p, 1-p\}$. Evidently, since $X$ is drawn before the players are allowed to make any moves, then 
\begin{align}
    \mathcal{R}_{\mathcal{T}, p} = p \times \mathcal{R}_{\mathcal{T}, 0} + (1-p) \times \mathcal{R}_{\mathcal{T}, 1}
\end{align}
In other words, the total expected reward is a convex combination of the total expected reward when the adversary is allowed a measurement and the total expected reward when the adversary is not allowed a measurement.

Suppose that Alice and Bob have a fixed strategy- namely, that Bob implements measurement $\hat{\Pi}^{'}$, selects $z' = f(z)$, and Alice implements the optimal measurement $\hat{\Pi}(z,w)$. We first consider the case where $p = 0$, i.e. there is no adversary. Then no action is taken prior to Bob's first measurement $\hat{\Pi}'$ on subsystem $B$, and the probability of observing a measurement result corresponding to  $\Pi_{\text{B}}^{'}(z)$ is
\begin{align}
 P\left(z \big| \hat{\Pi}_{\text{B}}^{'}\right)
 & = \text{Tr} \left[ \rho_{AB} \Pi_{\text{B}}^{\prime (z)} \right] .
\end{align}
The corresponding post-measurement state is
%The post-measurement state given that $\hat{\Pi}(x)$ is implemented on system $A$, and then $\hat{\Pi}'$ is implemented on system $B$ with outcome $z$ is
\begin{align}
    \rho_{z} \left(\hat{\Pi}_{\text{B}}', \rho_{AB} \right) \triangleq   \frac{ \Pi_{\text{B}}^{\prime (z)}  \rho_{AB} \Pi_{\text{B}}^{'(z)}}{ P\left(z \big| \hat{\Pi}_{\text{B}}^{'}\right)}.
\end{align}
In \emph{any} game, Alice will always choose $\hat{\Pi}_{A}(z,w)$ to maximize the reward. Namely, she will measure in the eigenbasis of $\rho_{z}(\hat{\Pi}_{B}^{'}, \rho_{AB})$ with the first measurement element corresponding to the eigenvector of $\rho_{z}(\hat{\Pi}_{B}^{'}, \rho_{AB})$ with the largest eigenvalue and so on. Then the success probability  for this fixed strategy is
\begin{align}
 & \sum_{z} P(z \big| \hat{\Pi}') \sum_{k=1}^{w} t_{w|f(z)} \left\|\Pi_{\text{A}}^{(k)}(z,w) \text{Tr}_{B}\left[ \rho_{z}\!\left(\hat{\Pi}_{\text{B}}', \rho_{AB} \right) \right]\right\| \\
  &= \sum_{z} P(z \big| \hat{\Pi}') \sum_{w} t_{w|f(z)} \left\| \text{Tr}_{B}\left[ \frac{\Pi_{\text{B}}^{\prime (z)}  \rho_{AB} \Pi_{\text{B}}^{'(z)}}{P\left(z \Big| \hat{\Pi}_{B}'\right)} \right]\right\|_{(w)} \\
    &= \sum_{z} \sum_{w} t_{w|f(z)} \left\| \text{Tr}_{B}\left[\Pi_{\text{B}}^{\prime (z)}  \rho_{AB} \Pi_{\text{B}}^{'(z)}\right]\right\|_{(w)} \\
\end{align}
 Naturally, Bob will wish to choose the remaining game components (that is, $\hat{\Pi}_{B}^{'}$ and $f$) to maximize their reward, s.t. the final reward function for the game $G = \{\mathcal{T}, p = 0\}$ may then be expressed as
\begin{align}
    R_{G}\left(\rho_{AB}\right) &\triangleq 
     \underset{\hat{\Pi}^{'},  f }{\text{max}} \left( \sum_{z, w} t_{w|f(z)} \left\|  \text{Tr}_{B}\left[\Pi_{\text{B}}^{\prime (z)}  \rho_{AB} \Pi_{\text{B}}^{'(z)}\right]\right\|_{(w)} \right) 
\end{align}
Now suppose the adversarial player is allowed to interfere with probability p. If the adversary chooses measurement $\hat{\Pi}_{A}^{*} = \{\Pi_{A}^{*(j)}\}|_{j}$, then the starting state $\rho_{AB}$ will be transformed to $\sum_{j} \Pi_{A}^{*(j)} \rho_{AB} \Pi_{A}^{*(j)}$. Thus, following an adversarial measurement, Alice and Bob will effectively be playing the quantum gambling game with the transformed state $\sum_{j} \Pi_{A}^{*(j)} \rho_{AB} \Pi_{A}^{*(j)}$ and will have an expected reward of $R_{\mathcal{T}, 0}\left(\sum_{j} \Pi_{A}^{*(j)} \rho_{AB} \Pi_{A}^{*(j)} \right)$. 

The adversary will wish to choose the ``worst case'' $\hat{\Pi}_{A}^{*}$ to minimize Alice and Bob's reward, leading to the following reward function for a general quantum bipartite gambling game

\begin{align}
    R_{\mathcal{T}, p}\left(\rho_{AB}\right) &\triangleq p \times  \underset{\hat{\Pi}_{A}^{*}}{\text{min}}\left(  R_{\mathcal{T}, 0}\left(\sum_{j} \Pi_{A}^{*(j)} \rho_{AB} \Pi_{A}^{*(j)} \right)  \right) \\
    & + (1-p) R_{\mathcal{T}, 0}\left(\rho_{AB}\right) 
\end{align}

Finally, we are ready to introduce conditional majorisation for quantum bipartite states based on games of chance.
\begin{definition}
The state $\rho_{AB}$ majorizes $\sigma_{AB'}$ with respect to subsystem $A$, based on quantum conditional gambling games, denoted as $\sigma_{AB'} \precsim_{\left(A\right)} \rho_{AB}$, if
\begin{align}
    R_{\mathcal{T}, p}\left(\sigma_{AB'}\right) \leq R_{\mathcal{T}, p}\left(\rho_{AB}, U\right)
\end{align}
for all $0 \leq p \leq 1$, $\mathcal{T} = \{t_{w'|z'}\}$, $w' \in \{1, \ldots , |A| \}$, and $z' \in \{1, \ldots , \max\left(|B|, |B'|\right)\}$.
\end{definition}

It follows immediately from the definition of the reward function that it is sufficient to restrict to the set of games for which $p \in \{0, 1\}$, namely, games in which the adversary either always interferes or never interferes. In the remainder of this work we restrict to this set unless otherwise specified.

One key property of this partial ordering is that maximally entangled pure states outperform tensor product pure states. 
\begin{lemma}
Consider the case in which $d= |A| = |A'| = |B|$ and $d \geq 2$. Then any tensor product state $\sigma_{AB'} = \sigma_{A} \otimes \sigma_{B'}$ is majorized by $\phi_{AB}^{(+)}$, namely,
\begin{align}
    \sigma_{AB'} \prec_{\left(A\right)} \phi_{AB}^{(+)}.
\end{align}
\end{lemma}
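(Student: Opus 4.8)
The plan is to invoke the reduction noted just above the lemma statement, namely that it suffices to verify $R_{\mathcal{T},p}(\sigma_{AB'})\le R_{\mathcal{T},p}(\phi^{(+)}_{AB})$ for the two extreme cases $p=0$ (no adversary) and $p=1$ (adversary always acts), for every channel $\mathcal{T}$; arbitrary $p$ then follows from the affine dependence $R_{\mathcal{T},p}=(1-p)R_{\mathcal{T},0}+pR_{\mathcal{T},1}$. The organizing idea is that $\phi^{(+)}_{AB}$ attains the largest conceivable reward, namely $1$, in \emph{both} cases, so no state can outperform it. First I would record the universal upper bound $R_{\mathcal{T},0}(\rho)\le 1$ valid for every bipartite state: since each summand obeys $\big\|\operatorname{Tr}_B[\Pi^{\prime(z)}_{B}\rho\,\Pi^{\prime(z)}_{B}]\big\|_{(w)}\le \operatorname{Tr}[\Pi^{\prime(z)}_{B}\rho\,\Pi^{\prime(z)}_{B}]=P(z)$, the column-stochasticity $\sum_w t_{w|z'}=1$ together with $\sum_z P(z)=1$ collapses the weighted sum to at most $1$.

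Next I would show that $\phi^{(+)}_{AB}$ saturates this bound for $p=0$. Measuring system $B$ with any rank-one element $\op{b_z}{b_z}$ leaves the unnormalised reduced operator $\operatorname{Tr}_B[\,\cdots]=\tfrac1d\op{b_z^*}{b_z^*}$, which is rank one and hence has Ky-fan norm $\|\cdot\|_{(w)}=1/d$ for every $w\ge 1$; summing against $T$ yields exactly $1$. For $p=1$ the key point, which I expect to be the main obstacle, is that a rank-one projective measurement of Alice's half of $\phi^{(+)}_{AB}$ in an orthonormal basis $\{|a_j\rangle\}$ produces the perfectly correlated classical state $\tfrac1d\sum_j \op{a_j}{a_j}_A\otimes\op{a_j^*}{a_j^*}_B$. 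Because Bob is told which measurement the adversary performed, he can measure $B$ in the conjugate basis $\{|a_j^*\rangle\}$, learn $j$ exactly, and Alice then identifies her state on the first guess; thus $R_{\mathcal{T},0}$ of the scrambled state is again $1$, and taking the minimum over adversarial measurements leaves $R_{\mathcal{T},1}(\phi^{(+)}_{AB})=1$. Hence $R_{\mathcal{T},p}(\phi^{(+)}_{AB})=1$ for all $p$ and all $\mathcal{T}$, which already gives $\sigma_{AB'}\precsim_{(A)}\phi^{(+)}_{AB}$.

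To upgrade this to the strict relation $\prec_{(A)}$, I would exhibit one game that $\phi^{(+)}_{AB}$ wins strictly more often. For a product state, Bob's measurement of $B'$ is uncorrelated with $A$, so $R_{\mathcal{T},0}(\sigma_A\otimes\sigma_{B'})=\max_{z'}\sum_w t_{w|z'}\|\sigma_A\|_{(w)}$; under $p=1$ the adversary can pinch $\sigma_A$ to the maximally mixed state $\u_A$ (any density matrix is unitarily equivalent to one with constant diagonal, by Schur--Horn), which simultaneously minimises every Ky-fan norm at $\|\u_A\|_{(w)}=w/d$. Choosing $\mathcal{T}$ to force a single guess ($t_{1|z'}=1$) then gives $R_{\mathcal{T},1}(\sigma_A\otimes\sigma_{B'})=1/d<1=R_{\mathcal{T},1}(\phi^{(+)}_{AB})$ whenever $d\ge 2$, which establishes the strict ordering. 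The genuine difficulty is the $p=1$ analysis of $\phi^{(+)}_{AB}$: one must argue that no adversarial measurement can degrade the maximally entangled state below perfect winnability, which relies on the classical perfect correlation surviving the measurement and, crucially, on Bob's knowledge of the adversary's chosen basis so that he can decode in the conjugate basis.
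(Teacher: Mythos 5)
Your proposal is correct and takes essentially the same approach as the paper: you show $\phi^{+}_{AB}$ attains reward $1$ in every game (even after an adversarial measurement, since Bob can measure in the adversary's basis, suitably conjugated, leaving Alice a pure conditional state), and then exhibit the $p=1$, single-guess game in which pinching $\sigma_A$ to $\mathbf{u}_A$ caps the product state's reward at $1/d$, which are exactly the paper's two steps. One minor remark: the paper lets the adversary use general projective measurements $\Pi^{*(k)}=\sum_{j\in\mathcal{S}_k}\ket{\psi_j}\!\!\bra{\psi_j}$ rather than only rank-one ones, but your conjugate-basis strategy extends verbatim to that case (coarser projectors disturb $\phi^{+}_{AB}$ even less), and your explicit handling of the complex conjugation is in fact cleaner than the paper's.
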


\begin{proof}
First, we demonstrate that $R_{\mathcal{T}, p} \left(\phi_{AB}^{(+)}\right) = 1$ for all $\mathcal{T}$ and all $p$. To do this, it is sufficient to show that 
\begin{align}
 R_{\mathcal{T}, 0} \left( \sum_{j} \Pi_{A}^{*(j)} \rho_{AB} \Pi_{A}^{*(j)} \right)  =1
\end{align}
for all $\mathcal{T}$ and all potential adversarial projective measurements $\Pi_{A}^{*}$. Since the adversarial measurement is projective, there exists some basis $\{ \ket{\psi_{j}}\}|_{j=1}^{n}$ and some partition of that basis $\{\mathcal{S}_{k}\}$ s.t. 
\begin{align}
    \Pi_{A}^{*(k)} = \sum_{j \in \mathcal{S}_{k}} \ket{\psi_{j}}\!\!\bra{\psi_{j}}
\end{align}
Note that 
\begin{align}
    & \sum_{j} \Pi_{A}^{*(j)} \phi_{AB}^{(+)} \Pi_{A}^{*(j)} = \frac{1}{d} \sum_{j, k, \ell} \Pi_{A}^{*(j)} \ket{k}\!\!\bra{\ell}_{A} \Pi_{A}^{*(j)} \otimes \ket{k}\!\!\bra{\ell}_{B}   \\
    & = \frac{1}{d} \sum_{z_{1}, z_{2} \in \mathcal{S}_{j}} \sum_{k, \ell} \ket{\psi_{z_{1}}}\!\!\braket{\psi_{z_{1}}| k}\!\!\braket{\ell|\psi_{z_{2}}}\!\!\bra{\psi_{z_{2}}} \otimes \ket{k}\!\!\bra{\ell}_{B}   \\
    & = \frac{1}{d} \sum_{z_{1}, z_{2} \in \mathcal{S}_{j}} \ket{\psi_{z_{1}}}\!\!\bra{\psi_{z_{2}}} \otimes  \sum_{k, \ell} \braket{\psi_{z_{1}}| k}\!\!\braket{\ell | \psi_{z_{2}}} \ket{k}\!\!\bra{\ell}_{B}   \\
    & = \frac{1}{d}  \sum_{z_{1}, z_{2} \in \mathcal{S}_{j}} \ket{\psi_{z_{1}}}\!\!\bra{\psi_{z_{2}}}_{A} \otimes  \ket{\psi_{z_{1}}}\!\!\bra{\psi_{z_{2}}}_{B}   \\
\end{align}
Suppose that Bob's strategy is then to implement the measurement $\hat{\Pi}_{B}' = \{\ket{\psi_{z}}\!\!\bra{\psi_{z}} \}|_{z}$ on system $B$. Then it follows that 
\begin{align}
    & \rho_{z}\left( \hat{\Pi}_{B}^{'}, \sum_{j} \Pi_{A}^{*(j)} \phi_{AB}^{(+)} \left(\Pi_{A}^{*(j)}\right)^{\dag} \right)  \\
    &= \frac{\sum_{z_{1}, z_{2} \in \mathcal{S}_{j}} \ket{\psi_{z_{1}}}\!\!\bra{\psi_{z_{2}}}_{A} \otimes  \ket{\psi_{z}}\!\!\braket{\psi_{z}|\psi_{z_{1}}}\!\!\braket{\psi_{z_{2}}|\psi_{z}} \!\!\bra{\psi_{z}}}{d\times  P\left(z \Big| \hat{\Pi}_{B}^{'}\right)} \\
    &= \frac{ \ket{\psi_{z}}\!\!\bra{\psi_{z}} \otimes  \ket{\psi_{z}}\!\!\bra{\psi_{z}}}{d\times  \frac{1}{d}} = \ket{\psi_{z}}\!\!\bra{\psi_{z}} \otimes \ket{\psi_{z}}\!\!\bra{\psi_{z}}
\end{align}

Finally, we have 
\begin{align}
     & R_{\mathcal{T}, p} \left( \sum_{j} \Pi_{A}^{*(j)} \phi_{AB}^{(+)} \Pi_{A}^{*(j)} \right) \leq R_{\mathcal{T}, 0} \left( \sum_{j} \Pi_{A}^{*(j)} \phi_{AB}^{(+)} \Pi_{A}^{*(j)} \right) \\
     & \leq   \sum_{z} P(z \big| \hat{\Pi}') \left\| \text{Tr}_{B}\left[\rho_{z} \! \left(\hat{\Pi}_{\text{B}}', \sum_{j} \Pi_{A}^{*(j)} \phi_{AB}^{(+)} \Pi_{A}^{*(j)} \right)\right] \right\|_{(1)} \\
      & \leq   \sum_{z=1}^{d} \frac{1}{d} \left\| \ket{\psi_{z}}\!\!\bra{\psi_{z}}_{A} \right\|_{(1)}  = 1
\end{align}
To prove the theorem statement, we now need to demonstrate that there exists some game $G$ such that $R_{G}\left(\sigma_{AB'}\right)<1$. Consider the game for which $p = 1$ and set $\mathcal{T} = \{\delta_{w=1|z'}\}$ such that  $w=1$ with certainty. If $\hat{\Pi}_{A}^{*}$  is a rank-one measurement in the basis which is mutually unbiased with respect to the eigenbasis of $\sigma_{A}$, then $\sum_{j} \Pi_{A}^{*(j)} \sigma_{AB'} \Pi_{A}^{*(j)} = \mathbf{u}_{A} \otimes \sigma_{B}$ and we have 
\begin{align}
\small
    & R_{\mathcal{T}, 1} \left( \sum_{j} \Pi_{A}^{*(j)} \sigma_{AB} \Pi_{A}^{*(j)} \right) \geq R_{\mathcal{T}, 0} \left( \sum_{j} \Pi_{A}^{*(j)} \sigma_{AB} \Pi_{A}^{*(j)} \right) \\
     &= \max_{\hat{\Pi}_{B}^{'}}\sum_{z} P(z \big| \hat{\Pi}') \left\| \text{Tr}\left[\rho_{z} \left(\hat{\Pi}_{\text{B}}', \mathbf{u}_{A} \otimes \sigma_{B}  \right)\right]\right\|_{(1)} \\
     &= \max_{\hat{\Pi}_{B}^{'}} \sum_{z} \text{Tr}\left[\Pi_{B}^{'(z)} \sigma_{B} \right] \left\| \mathbf{u}_{A} \right\|_{(1)} = \frac{1}{d}\\
\end{align}
Given that $\frac{1}{d} <1$, the statement of the lemma follows. 
\end{proof}
Finally, we demonstrate that when system $B$ is trivial, the ordering reduces to standard majorisation between quantum states. 
\begin{lemma}
Let $\vec{\lambda}(\rho)$ denote the vector whose elements consist of the eigenvalues of $\rho$. Then for any states $\rho_{A}$ and $\sigma_{A}$, $\sigma_{A} \precsim_{(A)} \rho_{A}$ if and only if \begin{align}
    \vec{\lambda}(\sigma_{A}) \precsim \vec{\lambda}(\rho_{A})
\end{align}
where in the above, $\precsim$ represents standard majorisation.
\end{lemma}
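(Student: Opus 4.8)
The plan is to reduce the reward function to an explicit, state-dependent functional of the eigenvalues of $\rho_A$ when $B$ is trivial, and then invoke the classical characterization of majorization in terms of Ky-Fan norms.

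First I would specialize the reward to the case $|B|=1$. With no system $B$ to measure, Bob's measurement $\hat{\Pi}'$ and his choice $z'=f(z)$ become vacuous, so the channel $\mathcal{T}$ reduces to a single probability distribution $\{t_w\}$ over $w\in\{1,\dots,d\}$ with $d=|A|$. Alice's optimal strategy is to measure in the eigenbasis of $\rho_A$, ordering outcomes by decreasing eigenvalue; she wins precisely when the realized outcome lies among the $w$ most probable ones. Since the Ky-Fan $w$-norm of a density matrix is the sum of its $w$ largest eigenvalues, the $p=0$ reward becomes
\[
R_{\mathcal{T},0}(\rho_A)=\sum_{w=1}^{d}t_w\,\|\rho_A\|_{(w)}.
\]

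Next I would dispatch the adversarial term. An adversarial projective measurement $\hat{\Pi}_A^*$ sends $\rho_A$ to the pinched state $\sum_j\Pi_A^{*(j)}\rho_A\Pi_A^{*(j)}$, whose eigenvalue vector is majorized by $\vec{\lambda}(\rho_A)$. Because each $\|\cdot\|_{(w)}$ is monotone non-decreasing under majorization, so is $R_{\mathcal{T},0}$, and the adversary therefore minimizes the reward by driving the state as far down the majorization order as possible. Measuring in a basis mutually unbiased to the eigenbasis of $\rho_A$ produces $\u_A$, which is majorized by every state, so
\[
\min_{\hat{\Pi}_A^*}R_{\mathcal{T},0}\!\left(\sum_j\Pi_A^{*(j)}\rho_A\Pi_A^{*(j)}\right)=R_{\mathcal{T},0}(\u_A)=\sum_{w}t_w\frac{w}{d},
\]
a constant independent of $\rho_A$. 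Substituting into the full reward gives
\[
R_{\mathcal{T},p}(\rho_A)=p\sum_{w}t_w\frac{w}{d}+(1-p)\sum_{w}t_w\|\rho_A\|_{(w)},
\]
so all dependence on the state sits in the second term.

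Finally I would extract the equivalence. Requiring $R_{\mathcal{T},p}(\sigma_A)\le R_{\mathcal{T},p}(\rho_A)$ for all $p,\mathcal{T}$ cancels the constant first term and leaves $(1-p)\big[\sum_w t_w\|\sigma_A\|_{(w)}-\sum_w t_w\|\rho_A\|_{(w)}\big]\le 0$; the case $p=1$ is vacuous, while any $p<1$ yields $\sum_w t_w\|\sigma_A\|_{(w)}\le\sum_w t_w\|\rho_A\|_{(w)}$. Testing against point masses $t_w=\delta_{w,w_0}$ shows this holds for all $\{t_w\}$ iff $\|\sigma_A\|_{(w)}\le\|\rho_A\|_{(w)}$ for every $w$; since $\|\rho_A\|_{(w)}=\sum_{k=1}^{w}\lambda_k^{\downarrow}(\rho_A)$ and both traces equal $1$, this is exactly $\vec{\lambda}(\sigma_A)\precsim\vec{\lambda}(\rho_A)$. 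The main obstacle is the adversarial step: one must verify that every pinched state is majorized by $\rho_A$, that $\u_A$ is itself reachable by a mutually unbiased measurement, and that the majorization-monotonicity of the Ky-Fan functional therefore pins the minimum to the state-independent value $R_{\mathcal{T},0}(\u_A)$. Once the adversarial contribution collapses to this constant, the remainder is the standard Ky-Fan characterization of majorization.
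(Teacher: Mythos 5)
Your proof is correct and follows essentially the same route as the paper's: reduce the $p=0$ reward to Ky-Fan sums $\sum_w t_w\|\cdot\|_{(w)}$, test point-mass (fixed-$w$) games to obtain the Ky-Fan characterization of majorization, and neutralize the $p>0$ games by observing that the adversary drives the state to $\mathbf{u}_{A}$, making the adversarial contribution state-independent. Your handling of that last step is actually more complete than the paper's—you justify via the pinching inequality and majorization-monotonicity of Ky-Fan sums that the adversarial minimum equals $R_{\mathcal{T},0}(\mathbf{u}_{A})$, whereas the paper simply asserts the mutually unbiased measurement is the adversary's optimal choice.
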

\begin{proof}
Both $\rho_{A}$ and $\sigma_{A}$ can be viewed as bipartite states with a trivial $B$ subsystem (i.e. $|B|=1$.) Evidently, the expected reward for a given $\mathcal{T}$ and games where $p=0$ is then
\begin{align}
    R_{\mathcal{T}, 0}(\rho_{A}) &= \max_{z'}\left( \sum_{w} t_{w|z'} \left\| \rho_{A} \right\|_{(w)} \right) \\
    R_{\mathcal{T}, 0}(\sigma_{A}) &= \max_{z'}\left( \sum_{w} t_{w|z'} \left\| \sigma_{A} \right\|_{(w)} \right) \\
\end{align}
and by considering games where $w$ is fixed, the ``only if'' direction of the theorem statement follows.
It remains to show that $\vec{\lambda}(\sigma_{A}) \precsim \vec{\lambda}(\rho_{A})$ is a sufficient condition when games of chance with $p =1$ are included. In this case, the adversarial measurement $\hat{\Pi}_{A}^{*}$ will be in a basis which is mutually unbiased with respect to $\sigma_{A}$ (or respectively $\rho_{A}$). Thus, for both games, the state after the adversarial measurement is $\mathbf{u}_{A}$. Then
\begin{align}
    R_{\mathcal{T}, 1}(\rho_{A}) =  R_{\mathcal{T}, 0}(\mathbf{u}_{A}) =  R_{\mathcal{T}, 1}(\sigma_{A})
\end{align}
and so the ordering is unchanged by including games where $p>0$. The theorem statement then follows.
\end{proof}

\section{Quantum Channel Gambling Games} 

In previous work involving gambling games for classical channels~\cite{brandsen2021entropy}, the output of the classical channel was observed directly by the host. In the quantum case, where outputs are no longer necessarily orthogonal, the observation process is now represented through the projective quantum measurement $\{\Pi_{x} \}_{x=1}^{n}$. One natural extension of the classical gambling game introduced in~\cite{brandsen2021entropy} is to determine each game with a tripartite state $\rho_{ABX}$. 

The corresponding game setup is then as follows:
\begin{enumerate}
    \item The player is given knowledge of state $\rho_{ABX}$ and chooses a pre-processing channel $\mathcal{E}$ on subsystem $A$
   
    \item Channel $\mathcal{N}$ is implemented on subsystem $A$. From here on, we denote this as $\mathcal{N}\left(\mathcal{E}\left(\rho_{ABX}\right)\right)$ with the understanding that the channel action is only on system $A$.
    
    \item The player chooses a rank-one measurement on system $AB$. The player's optimal choice will always be to measure in the ordered eigenbasis of $\mathcal{N}\left(\mathcal{E}\right)$.
    
    \item The measurement outcome $y$ is obtained and the player wins with a reward of $1$ if $y \leq x$ and loses with a reward of $0$ else
\end{enumerate}
We note that the rank-one restriction above is necessary in order to prevent the player from cheating with a trivial measurement (e.g., $\{I, 0, 0, \ldots \}$) which would allow the player to win any game with any channel, thus preventing any meaningful ordering between quantum channels. This setup is depicted in Figure 1. 

\begin{figure}[h!]
  \begin{overpic}[scale=0.62]{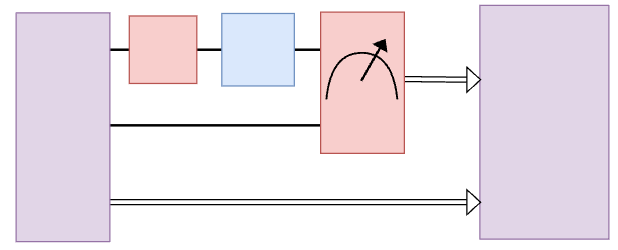}
  \put(100, 22){$x$}
  \put(12, 45){$\rho_{ABX}$}
  \put(162, 68){$y$}
  \put(58, 72){$\mathcal{E}$}
  \put(93, 72){$\mathcal{N}$}
  \put(183, 65){Win if}
  \put(186, 55){ $y \leq x$}
  \put(183, 35){Lose else}
  \large
\end{overpic}
\caption{Depiction of ``quantum gambling game''. Red corresponds to the player's choice while purple is fixed by the game.}
\end{figure}

The resulting expected reward for channel $\mathcal{N}$ and distribution $\mathbf{p}$ becomes 
\begin{align}
    R_{\rho_{ABX}}\left(\mathcal{N}\right) &= \max_{\mathcal{E}} \left( \sum_{x} p_{x} \Big\Vert \mathcal{N}\left(\mathcal{E}\left(\rho_{AB}^{\left(x\right)}\right) \Big\Vert_{\left(x\right)} \right)\right) 
\end{align}
where $\big\Vert \cdot \big\Vert_{\left(x\right)}$ is the Ky-Fan $x$-norm. 

Based on these games of chance, we can define a majorization relation between quantum channels. 
\begin{definition} 
We denote the case where channel $\mathcal{M}_{C \rightarrow D}$ is majorized by channel $\mathcal{N}_{C' \rightarrow D'}$ according to quantum games of chance as $\mathcal{M} \precsim_{q} \mathcal{N}$. Equivalently, 
\begin{align}
    \mathcal{M} \precsim_{q} \mathcal{N} \leftrightarrow R_{\rho_{ABX}} \left(\mathcal{M}\right) \leq R_{\rho_{ABX}}\left(\mathcal{N}\right) \ \ \text{for all} \ \  \rho_{ABX}
\end{align}
where it is sufficient to bound the dimensions of systems $A, B$ by $\max \left( |C|, |C'| \right)$. Likewise, the dimension of system $X$ can be bounded by $\max\left(|C|, |C'|\right) \times \min\left(|D|, |D'|\right)$. 
\end{definition}

The bound for the dimension of $X$ follows by noting that $\mathcal{N} \circ \mathcal{E}\left(\rho_{AB}\right)$ has at most $|D'| \times |B|$ nonzero eigenvalues and likewise $\mathcal{M} \circ \mathcal{E}\left(\rho_{AB}\right)$ has at most $|D| \times |B|$ nonzero eigenvalues. Since $|B| \leq \max\left(|C|, |C'|\right)$, the bound on $X$ immediately follows. 

Finally, we define the entropy of a quantum channel to be a function which is monotonic under channel ordering, namely
\begin{definition}
A non-zero function, $$H : \underset{A, B}{\bigcup} \mathrm{CPTP}(A \rightarrow B) \rightarrow \mathbb{R}$$where the union is over all finite quantum systems $A$ and $B$, is a quantum channel entropy if it satisfies the following two conditions: 
\begin{enumerate}
    \item It is monotonic under channel majorization; i.e. given quantum channels $\mathcal{N}$ and $\mathcal{M}$, then
\begin{equation}
    H(\mathcal{M}) \geq H(\mathcal{N}) \ \ \text{if} \ \ \mathcal{M} \precsim \mathcal{N}
    \end{equation}
\item It is additive under tensor products; i.e. 
\begin{equation}
H(\mathcal{N} \otimes \mathcal{M})=H(\mathcal{N})+H(\mathcal{M})
\end{equation}
for all quantum channels $\mathcal{N}$ and $\mathcal{M}$.
\end{enumerate}
\end{definition}

\section{Operational Interpretation and Results for Special Classes of Channels}

We begin by showing that when $\mathcal{M}$ and $\mathcal{N}$ correspond to states or classical channels, respectively, the quantum channel majorization defined by gambling games reduces to the standard vector majorization and the classical channel majorization introduced by our previous work~\cite{brandsen2021entropy} respectively.  

\begin{lemma} If $\mathcal{N}$ and $\mathcal{M}$ are replacement channels which output states $\rho_{\mathcal{N}}$ and $\rho_{\mathcal{M}}$ respectively, then
\begin{align}
   \mathcal{M} \precsim_{q} \mathcal{N} \ \ \ \text{if and only if} \ \ \ \vec{\lambda}\left(\rho_{\mathcal{M}} \right) \precsim \vec{\lambda}\left(\rho_{\mathcal{M}} \right)
\end{align}
where $\vec{\lambda}\left(\rho\right)$ denotes the ordered eigenvalues of $\rho$.
\end{lemma}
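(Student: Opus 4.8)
The plan is to reduce the channel reward function to an expression in the output eigenvalues and then invoke the stability of majorization under tensoring. The key preliminary observation is that for a replacement channel acting on system $A$, both the pre-processing channel $\mathcal{E}$ and the input on $A$ are irrelevant. Writing $\rho_{ABX}=\sum_x p_x\,\rho_{AB}^{(x)}\otimes\op{x}{x}_X$ and using that $\mathcal{E}$ is trace preserving and acts only on $A$, I get $\mathcal{N}(\mathcal{E}(\rho_{AB}^{(x)}))=\rho_{\mathcal{N}}\otimes\operatorname{Tr}_A[\rho_{AB}^{(x)}]=\rho_{\mathcal{N}}\otimes\rho_B^{(x)}$, independent of $\mathcal{E}$. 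Hence the maximization over $\mathcal{E}$ is vacuous and
\begin{equation}
    R_{\rho_{ABX}}(\mathcal{N})=\sum_x p_x\,\big\|\rho_{\mathcal{N}}\otimes\rho_B^{(x)}\big\|_{(x)},
\end{equation}
and similarly for $\mathcal{M}$. This reduces the whole statement to comparing Ky-Fan norms of tensor products.

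For the ``only if'' direction I would specialize to the cheapest possible test states. Taking $B$ trivial, $X$ of sufficiently large dimension, and $\rho_{AX}=\omega_A\otimes\op{k}{k}_X$ a point mass on a single value $k$, the reward collapses to $R_{\rho_{AX}}(\mathcal{N})=\|\rho_{\mathcal{N}}\|_{(k)}=\sum_{i=1}^k\lambda_i^{\downarrow}(\rho_{\mathcal{N}})$. Then $\mathcal{M}\precsim_q\mathcal{N}$ forces $\|\rho_{\mathcal{M}}\|_{(k)}\le\|\rho_{\mathcal{N}}\|_{(k)}$ for every $k$; since $\rho_{\mathcal{M}}$ and $\rho_{\mathcal{N}}$ are both density matrices, their full eigenvalue sums both equal $1$, so this chain of partial-sum inequalities is precisely $\vec{\lambda}(\rho_{\mathcal{M}})\precsim\vec{\lambda}(\rho_{\mathcal{N}})$.

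The substantive (``if'') direction is to show that majorization of the output spectra propagates through the spectator factor $\rho_B^{(x)}$. The clean route is to use that $\vec{\lambda}(\rho_{\mathcal{M}})\precsim\vec{\lambda}(\rho_{\mathcal{N}})$ is equivalent to the existence of a doubly stochastic matrix $D$ with $\vec{\lambda}(\rho_{\mathcal{M}})=D\,\vec{\lambda}(\rho_{\mathcal{N}})$, where I pad the shorter spectrum with zeros so the dimensions match (this leaves every Ky-Fan norm unchanged, since appended zeros are smallest and never enter a top-$k$ sum). Tensoring with $\vec{\lambda}(\rho_B^{(x)})$ and noting that $D\otimes I$ is again doubly stochastic yields $\vec{\lambda}(\rho_{\mathcal{M}})\otimes\vec{\lambda}(\rho_B^{(x)})\precsim\vec{\lambda}(\rho_{\mathcal{N}})\otimes\vec{\lambda}(\rho_B^{(x)})$. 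Because the eigenvalues of $\rho\otimes\tau$ are exactly the products of the eigenvalues, majorization of these product vectors gives $\|\rho_{\mathcal{M}}\otimes\rho_B^{(x)}\|_{(x)}\le\|\rho_{\mathcal{N}}\otimes\rho_B^{(x)}\|_{(x)}$ for every $x$. Multiplying by $p_x\ge0$ and summing gives $R_{\rho_{ABX}}(\mathcal{M})\le R_{\rho_{ABX}}(\mathcal{N})$ for all $\rho_{ABX}$, i.e.\ $\mathcal{M}\precsim_q\mathcal{N}$.

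The main obstacle I anticipate is exactly this last step: the reward involves the Ky-Fan norm of a \emph{tensor product}, not of the bare output state, so one cannot argue termwise on individual eigenvalues. The doubly-stochastic representation $D\otimes I$ is what renders the factor $\rho_B^{(x)}$ harmless, and the only points requiring genuine care are verifying that $D\otimes I$ is doubly stochastic and that zero-padding does not alter the Ky-Fan norms appearing in the reward. The dimensional bookkeeping for the range of $x$ is routine, since both the reward and majorization are defined in every finite dimension and one may take $X$ large enough to reach the full rank of the output states.
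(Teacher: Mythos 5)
Your proposal is correct, and for the substantive (``if'') direction it takes a genuinely different route from the paper. Both arguments share the same first step: for a replacement channel the pre-processing $\mathcal{E}$ is irrelevant, so the reward collapses to $\sum_x p_x \bigl\| \rho_{\mathcal{N}} \otimes \operatorname{Tr}_A[\rho_{AB}^{(x)}] \bigr\|_{(x)}$. From there the paper proceeds by bookkeeping on the ordered products of eigenvalues: it introduces the ordering function $f(j,k,x)$ and the index sets $S(j)$, rewrites the quantum reward as $\sum_j \lambda_j^{(M)} C_j$ with $C_j>0$, invokes a structural lemma from its earlier work (that any such positive combination equals $\alpha R_{\tilde p}(\mathcal{M})$ for some classical gambling game $\tilde p$), and then applies the classical equivalence between gambling-game ordering and vector majorization established in that prior work. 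You instead argue directly: writing $\vec{\lambda}(\rho_{\mathcal{M}})=D\,\vec{\lambda}(\rho_{\mathcal{N}})$ with $D$ doubly stochastic (after zero-padding), noting that $D\otimes I$ is again doubly stochastic, and using that the spectrum of a tensor product is the tensor product of spectra, you obtain $\vec{\lambda}(\rho_{\mathcal{M}}\otimes\rho_B^{(x)})\precsim\vec{\lambda}(\rho_{\mathcal{N}}\otimes\rho_B^{(x)})$, hence the needed Ky-Fan inequalities term by term. This is self-contained and more elementary: it avoids the reliance on the external classical-games result and sidesteps the $S(j)$ bookkeeping (whose final chain of inequalities the paper states rather loosely). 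What the paper's route buys is reuse of already-proved machinery and a tighter conceptual link between the quantum and classical gambling frameworks. Your ``only if'' direction (point-mass games on $X$ with trivial $B$, plus equality of traces) is the explicit version of what the paper dismisses as trivial; note only that the definition of $\precsim_q$ bounds $|X|$ by $\max(|C|,|C'|)\times\min(|D|,|D'|)$, so you may test partial sums only up to $\min(|D|,|D'|)$ rather than the full rank, but this suffices: if $|D|\le|D'|$ the test at $k=\operatorname{rank}(\rho_{\mathcal{M}})$ forces $\operatorname{rank}(\rho_{\mathcal{N}})\le\operatorname{rank}(\rho_{\mathcal{M}})$ and the remaining inequalities are automatic, and symmetrically otherwise.
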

\begin{proof} 
See Appendix~\ref{appendixA}. 
\end{proof}

Previous work introduced classical gambling games parameterized by a correlated source $T = \left(t_{wz}\right)$ where $t_{wz}$ represents the joint probability $\operatorname{Pr}\left(W=w, Z=z\right)$. The expected reward for playing a $T$-game with classical channel $\mathcal{N}$ with transition probability matrix elements $\{p_{y|x} \}$ was likewise defined as
\begin{align}
    \operatorname{Prob}_{T}\left(\mathcal{N}\right) & \triangleq \sum_{z = 1} \max_{x} t_{w, z} \sum_{y=1}^{w} p_{y|x}^{\downarrow}
\end{align}
where $p^{\downarrow}$ indicates that $\{p_{y|x}\}|_{y}$ is ordered such that  $p_{1|x} \geq p_{2|x} \geq \ldots  p_{n|x}$ for all $x$. 

We now demonstrate that when $\mathcal{M}$ and $\mathcal{N}$ are both classical, the ordering induced by quantum gambling games is equivalent to the ordering induced by classical gambling games. 

\begin{theorem}
If $\mathcal{N}$ and $\mathcal{M}$ are both classical channels, then 
\begin{align}
   & R_{\rho_{ABX}}\left(\mathcal{M}\right)  \leq R_{\rho_{ABX}}\left(\mathcal{N}\right) \ \ \forall \ \rho_{ABX}
   \end{align}
if and only if
\begin{align}
\operatorname{Prob}_{T}\left(\mathcal{M}\right) \leq \operatorname{Prob}_{T}\left(\mathcal{N}\right) \ \ \forall \ \ T = \left(t_{wz}\right).
\end{align}
\end{theorem}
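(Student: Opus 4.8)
The plan is to prove both implications by reducing the quantum channel reward $R_{\rho_{ABX}}(\mathcal{N})$, evaluated on a classical channel $\mathcal{N}$, to classical-game rewards $\operatorname{Prob}_T(\mathcal{N})$ whose source $T$ can be chosen \emph{independently of the channel}. Concretely, I would establish two facts: (i) every classical source $T$ (and it suffices to handle the deterministic extreme points) is realized by some classical state $\rho_{ABX}$ in the sense that $R_{\rho_{ABX}}(\mathcal{N}) = \operatorname{Prob}_T(\mathcal{N})$ simultaneously for all classical $\mathcal{N}$; and (ii) for every state $\rho_{ABX}$, the quantity $R_{\rho_{ABX}}(\mathcal{N})$ equals a maximum over a channel-independent family of classical-game rewards. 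Given (i) and (ii), both directions follow: using (i), the hypothesis $R_{\rho}(\mathcal{M}) \le R_{\rho}(\mathcal{N})$ for all $\rho$ specializes to $\operatorname{Prob}_T(\mathcal{M}) \le \operatorname{Prob}_T(\mathcal{N})$ for all $T$; conversely, using (ii) together with the monotonicity of the maximum, the classical ordering lifts to $R_\rho(\mathcal{M}) = \max_s \operatorname{Prob}_{T_s}(\mathcal{M}) \le \max_s \operatorname{Prob}_{T_s}(\mathcal{N}) = R_\rho(\mathcal{N})$.

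For the forward direction I would build the state explicitly. I would encode the guessing budget $w$ into the register $X$ and the side information $z$ into $B$, perfectly correlated through the source, and place the relevant channel-input data in $A$ so that the player's pre-processing $\mathcal{E}$ can be taken to be a classical relabeling. The Ky-Fan-$x$ norm of the (diagonal) post-channel state on $AB$ then reduces to the per-$z$ top-$w$ guessing sum $\sum_{y=1}^{w} p^\downarrow_{y|x}$ appearing in $\operatorname{Prob}_T$. It is cleanest to first treat sources for which $w$ is a deterministic function of $z$, where conditioning on $X=x$ pins down $z$ and the joint $AB$ measurement collapses to per-$z$ guessing with the intended budget; since $\operatorname{Prob}_T(\mathcal{N})$ is linear in $T$ (so the induced ordering is determined by the deterministic sources), the general case follows by convexity. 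Crucially, the construction of $\rho$ from $T$ does not reference $\mathcal{N}$, which is exactly what fact (i) demands.

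For the backward direction I would exploit that a classical channel dephases system $A$ in the computational basis. Consequently only the $A$-diagonal blocks of $\mathcal{E}(\rho^{(x)}_{AB})$ feed the channel, so the pre-processing may be replaced without loss by a classical stochastic input map, and the post-channel state $\mathcal{N}(\mathcal{E}(\rho^{(x)}_{AB})) = \sum_y |y\rangle\!\langle y|_A \otimes \tau^{(x)}_y$ is block-diagonal in $A$. Its eigenvectors are therefore products $|y\rangle_A \otimes |\mathrm{eig}(\tau^{(x)}_y)\rangle_B$, so the optimal rank-one measurement factorizes and the Ky-Fan-$x$ norm becomes an optimal \emph{pooled} top-$x$ selection over the eigenvalues of the $\tau^{(x)}_y$. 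Writing this pooled selection as a maximum over budget allocations and over (channel-independent) bases in which to read out $B$, I would identify $R_\rho(\mathcal{N})$ with $\max_s \operatorname{Prob}_{T_s}(\mathcal{N})$ for a family of sources $T_s$ determined by $\rho$ and the chosen classical strategy $s$ alone.

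The main obstacle is precisely this backward reduction, and specifically the channel-independence of the family $\{T_s\}$. The subtlety is that the quantum game permits the joint $AB$ measurement and the pre-processing to exploit coherence and $A$--$B$ correlations, and the \emph{pooled} single-budget guessing of the Ky-Fan-$x$ norm must be reconciled with the \emph{per-side-information} guessing of $\operatorname{Prob}_T$. The structural fact that a classical channel renders the output block-diagonal in $A$ — so that its optimal measurement factorizes into product eigenvectors and the pre-processing collapses to a stochastic map — is what forces the optimal quantum strategy to be effectively classical; carefully packaging the resulting pooled optimization as a channel-independent maximum of per-$z$ classical games is the crux, after which both implications are immediate.
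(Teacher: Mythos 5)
Your overall architecture --- sandwiching the quantum reward between classical game values chosen independently of the channel being evaluated --- matches the paper's Appendix~B, but both of your reductions have problems, and one of them is a genuine error. The forward direction fails as stated: $\operatorname{Prob}_T(\mathcal{N})=\sum_z\max_x\sum_w t_{wz}\|\mathbf{p}_{\cdot|x}\|_{(w)}$ is \emph{not} linear in $T$; because the maximum over the input $x$ sits inside the sum over $z$ but outside the combination over $w$, it is only convex in each column of $T$. Decomposing a source into deterministic-$w$ sources $T_i$ therefore gives $\operatorname{Prob}_T(\mathcal{M})\le\sum_i\mu_i\operatorname{Prob}_{T_i}(\mathcal{M})\le\sum_i\mu_i\operatorname{Prob}_{T_i}(\mathcal{N})$, an upper bound on the wrong side of $\operatorname{Prob}_T(\mathcal{N})$, so the ordering does not transfer. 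The failure is real, not an artifact of the argument: with output alphabet $\{1,2,3\}$, let $\mathcal{M}$ send every input to $(0.75,0.23,0.02)$ and let $\mathcal{N}$ have two inputs producing $(0.5,0.5,0)$ and $(0.8,0.1,0.1)$. Every deterministic-budget game is ordered, since $0.75\le 0.8$, $0.98\le 1$, $1\le 1$, yet for the budget distribution $\mathbf{t}=(0.2,0.8,0)$ one gets $0.2(0.75)+0.8(0.98)=0.934$ for $\mathcal{M}$ versus $\max\{0.2(0.5)+0.8(1),\;0.2(0.8)+0.8(0.9)\}=0.9$ for $\mathcal{N}$. So realizing only deterministic sources as quantum games cannot yield the full classical ordering. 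The correct reduction is the opposite one, and it is what the paper uses: $\operatorname{Prob}_T=\sum_z\Pr(z)\operatorname{Prob}_{T_z}$ decomposes \emph{linearly over the side information} $z$ (keeping the full budget distribution $t_{\cdot|z}$ intact), and each constant-$z$ game is realized exactly by $\rho_{ABX}=\ket{0}\!\bra{0}_A\otimes\rho_X$ with $\rho_X$ carrying $t_{\cdot|z}$; it is the randomness over $w$ in register $X$, not over $z$, that cannot be eliminated.

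Your backward direction points in the right direction but stops exactly where the work is. After block-diagonalization, a fixed strategy (stochastic preprocessing, budget allocation $\{x_y\}$, readout bases) yields a value $\sum_{y,a}p_{y|a}c_{y,a}$ that is linear in the channel, but such a generic functional is \emph{not} a classical game value: $\operatorname{Prob}_T$ is built from sorted top-$w$ sums maximized over inputs, whereas your coefficients $c_{y,a}$ carry fixed output labels and a forced average over the input $a$. The paper closes precisely this gap with two ingredients you would need to supply or replace: (i) the optimal preprocessing for a classical channel is reduced to a replacement channel preparing a single classical input, so the output has tensor-product eigenvalue structure $\lambda_j^{\mathcal{M}}\lambda_{k,x}^{B}$; and (ii) the optimal Ky-Fan selection pattern yields coefficients $C_j=\sum_{(x,k)\in S(j)}p_x\lambda_{k,x}^{B}$ that are decreasing in $j$ (since $S(j')\subseteq S(j)$ for $j'>j$), so Abel summation converts $\sum_j\lambda_j^{\mathcal{M}}C_j$ into a nonnegative combination of top-$w$ sums, i.e.\ a genuine classical game $\tilde{p}$. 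This gives the sandwich $R_{\rho}(\mathcal{M})=\alpha R_{\tilde{p}}(\mathcal{M})\le\alpha R_{\tilde{p}}(\mathcal{N})\le R_{\rho}(\mathcal{N})$, where the last inequality holds because a fixed selection pattern can only underestimate $\mathcal{N}$'s optimal selection; channel-independence of the resulting family comes for free by letting $\mathcal{M}$ range over all classical channels. Without (i) and (ii), the ``packaging'' step you yourself identify as the crux remains unproven.
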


\begin{proof}
See Appendix~\ref{appendixB}. 
\end{proof} 

We now provide an operational interpretation for quantum channel majorization. 
\begin{theorem}
\label{channel_char}
If $\mathcal{M} = \sum_{z=1}^{s} p_{z} \mathcal{V}_{z} \circ \mathcal{N} \circ \mathcal{E}_{z}$ for some set of isometries $\mathcal{V}_{z}$, some set of preprocessing quantum processes $\mathcal{E}_{z}$, and some probability distribution $\{p_{1}, \ldots p_{s} \}$, then:
\begin{align}
    \mathcal{M} \precsim \mathcal{N}
\end{align}
\end{theorem}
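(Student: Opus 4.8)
The plan is to unfold the definition of channel majorization and verify the defining inequality directly: by Definition of $\precsim_q$ it suffices to show $R_{\rho_{ABX}}(\mathcal{M}) \le R_{\rho_{ABX}}(\mathcal{N})$ for every tripartite state $\rho_{ABX}$. Starting from the reward formula $R_{\rho_{ABX}}(\mathcal{M}) = \max_{\mathcal{E}} \sum_x p_x \| \mathcal{M}(\mathcal{E}(\rho_{AB}^{(x)})) \|_{(x)}$, I would substitute the hypothesized decomposition $\mathcal{M} = \sum_z p_z \mathcal{V}_z \circ \mathcal{N} \circ \mathcal{E}_z$ inside the Ky-Fan norm, writing $\mathcal{M}(\mathcal{E}(\rho_{AB}^{(x)})) = \sum_z p_z (\mathcal{V}_z \otimes \mathcal{I}_B)(\mathcal{N}(\mathcal{E}_z \circ \mathcal{E}(\rho_{AB}^{(x)})))$, where every channel acts only on the first tensor factor while the spectator system $B$ is carried along untouched.

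I would then apply three ingredients in order. First, subadditivity and positive homogeneity of the Ky-Fan $w$-norm pull the convex combination outside, giving $\| \mathcal{M}(\mathcal{E}(\rho_{AB}^{(x)})) \|_{(x)} \le \sum_z p_z \| (\mathcal{V}_z \otimes \mathcal{I}_B)(\mathcal{N}(\mathcal{E}_z \circ \mathcal{E}(\rho_{AB}^{(x)}))) \|_{(x)}$. Second, isometric invariance of the Ky-Fan norm: since each $\mathcal{V}_z$ is an isometry, conjugation by $\mathcal{V}_z \otimes \mathcal{I}_B$ leaves the nonzero eigenvalues, hence the singular values, of each positive semidefinite operator unchanged, so the factor $\mathcal{V}_z$ can be dropped term by term. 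Third, I absorb the fixed preprocessing into the optimization: for each fixed $z$ the composite $\mathcal{E}_z \circ \mathcal{E}$ is itself a legitimate preprocessing channel feeding into $\mathcal{N}$, so that $\sum_x p_x \| \mathcal{N}(\mathcal{E}_z \circ \mathcal{E}(\rho_{AB}^{(x)})) \|_{(x)} \le R_{\rho_{ABX}}(\mathcal{N})$.

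Combining these, summing over $x$ against the weights $p_x$ and swapping the two finite sums yields $\sum_x p_x \| \mathcal{M}(\mathcal{E}(\rho_{AB}^{(x)})) \|_{(x)} \le \sum_z p_z R_{\rho_{ABX}}(\mathcal{N}) = R_{\rho_{ABX}}(\mathcal{N})$, where the last equality uses $\sum_z p_z = 1$. Taking the maximum over the remaining free preprocessing $\mathcal{E}$ gives $R_{\rho_{ABX}}(\mathcal{M}) \le R_{\rho_{ABX}}(\mathcal{N})$ for arbitrary $\rho_{ABX}$, which is exactly $\mathcal{M} \precsim_q \mathcal{N}$.

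I expect the main obstacle to be bookkeeping rather than conceptual: keeping the input and output dimensions consistent (with $\mathcal{E}_z \colon C \to C'$ and the isometries $\mathcal{V}_z \colon D' \to D$, so that the Ky-Fan indices $(x)$ always refer to compatible operators) and tensoring the spectator system $B$ correctly at each step. The one genuinely structural point is the second ingredient: a general channel in place of $\mathcal{V}_z$ could alter the singular-value spectrum, so it is precisely the \emph{isometry} assumption that guarantees exact invariance of the Ky-Fan norm and lets the chain of inequalities close.
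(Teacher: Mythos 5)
Your proposal is correct and follows essentially the same route as the paper's proof: substitute the decomposition into the reward function, use convexity/subadditivity of the Ky-Fan norm to handle the mixture over $z$, invoke isometric invariance to drop $\mathcal{V}_z$, and absorb $\mathcal{E}_z \circ \mathcal{E}$ into the optimization over preprocessings. The only cosmetic difference is that the paper bounds the convex combination by a maximum over $z$ while you keep the weights $p_z$ and sum them to one at the end; the ingredients and their logic are identical.
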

\begin{proof}
We can demonstrate the theorem statement by substituting $\mathcal{M} = \sum_{z=1}^{s} p_{z} \mathcal{V}_{z} \circ \mathcal{N} \circ \mathcal{E}_{z}$ into the expression for the reward function as follows:
\begin{align}
    & R_{\rho_{ABX}}\left(\mathcal{M}\right) = \max_{\mathcal{E}} \left( \sum_{x} p_{x} \Big\Vert \mathcal{M} \circ \mathcal{E}\left(\rho_{AB}^{\left(x\right)}\right) \Big\Vert_{x} \right) \\
    &= \max_{\mathcal{E}} \left( \sum_{x} p_{x} \Big\Vert \sum_{z=1}^{s} p_{z} \mathcal{V}_{z} \circ \mathcal{N} \circ \mathcal{E}_{z}\left(\mathcal{E}\left(\rho_{AB}^{\left(x\right)}\right)\right) \Big\Vert_{x} \right)   \\
    & \leq \max_{\mathcal{E}, z} \left( \sum_{x} p_{x} \Big\Vert \mathcal{V}_{z} \circ \mathcal{N} \circ \mathcal{E}\left(\rho_{AB}^{\left(x\right)}\right) \Big\Vert_{x} \right)   \\
    & = \max_{\mathcal{E}, z} \left( \sum_{x} p_{x} \Big\Vert \mathcal{N} \circ \mathcal{E}\left(\rho_{AB}^{\left(x\right)}\right) \Big\Vert_{x} \right)   \\
    &= R_{\rho_{ABX}}\left(\mathcal{N}\right)  
\end{align}
This concludes the proof.
\end{proof}

% \begin{lemma} Let $\mathcal{U}$ be a unitary channel such that  $\mathcal{U} \circ \mathcal{N}\left(\rho\right) \triangleq U \mathcal{N}\left(\rho\right) U^{\dag}$. Then 
% \begin{align}
%     R_{\rho_{ABX}}\left(\mathcal{N}\right) = R_{\rho_{ABX}}\left(\mathcal{U} \circ \mathcal{N}\right)
% \end{align}
% for all games.
% \end{lemma}
% \begin{proof}
% The proof follows immediately, as 
% \begin{align}
%     R_{\rho_{ABX}}\left(\mathcal{N}\right) &= \max_{\mathcal{E}} \left( \sum_{x} p_{x} \Big\Vert \mathcal{N}\left(\mathcal{E}\left(\rho_{AB}^{\left(x\right)}\right)\right) \Big\Vert_{x} \right) \\
% &= \max_{\mathcal{E}} \left( \sum_{x} p_{x} \Big\Vert U \mathcal{N}\left(\mathcal{E}\left(\rho_{AB}^{\left(x\right)}\right)\right) U^{\dag} \Big\Vert_{x} \right)   \\
%     &=  R_{\rho_{ABX}}\left(\mathcal{U} \circ \mathcal{N}\right)  
% \end{align}
% This concludes the proof.
% \end{proof}

It follows immediately from the above Theorem that two channels which are equivalent up to a unitary will have equal performance for any game of chance. We now show that for a fixed input and output system, unitaries are the ``least noisy'' channel. 
% It additionally follows from the  lemma below that the $d$-dimensional quantum identity channel strictly majorizes the $d'$-dimensional quantum identity channel as long as~$d' < d$.

\begin{lemma}
 Let $\mathcal{N} _{A \rightarrow A}$ be a quantum channel, and let $\mathcal{U} _{A \rightarrow A}$ be any unitary channel. Then 
\begin{align}
    R_{\rho_{ABX}}\left(\mathcal{U}\right) \geq R_{\rho_{ABX}}\left(\mathcal{N}\right) \ \ \forall \rho_{ABX}
\end{align}
Additionally, $\mathcal{U}_{C \rightarrow \text{C}} \prec \mathcal{U}_{C' \rightarrow C'}$ if $|C| \leq |C'|$.
\end{lemma}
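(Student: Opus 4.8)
The plan is to treat the two assertions of the lemma separately, since they require genuinely different arguments. For the first claim — that a unitary channel $\mathcal{U}_{A\to A}$ is at least as rewarding as an arbitrary channel $\mathcal{N}_{A\to A}$ for every $\rho_{ABX}$ — I would derive it as an immediate corollary of Theorem~\ref{channel_char}, which already guarantees $\mathcal{M}\precsim_q\mathcal{N}$ whenever $\mathcal{M}=\sum_z p_z\,\mathcal{V}_z\circ\mathcal{N}\circ\mathcal{E}_z$ for isometries $\mathcal{V}_z$, preprocessing channels $\mathcal{E}_z$, and a probability distribution $\{p_z\}$. The key point is that a unitary channel is invertible, so I can exhibit exactly such a decomposition of $\mathcal{N}$ with $\mathcal{U}$ in the role of the reference channel.

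Concretely, I would use the trivial single-term decomposition $\mathcal{N}=\mathcal{I}_{A\to A}\circ\mathcal{U}\circ(\mathcal{U}^{-1}\circ\mathcal{N})$, i.e.\ $s=1$, $p_1=1$, $\mathcal{V}_1=\mathcal{I}$ the identity isometry, and preprocessing map $\mathcal{E}_1=\mathcal{U}^{-1}\circ\mathcal{N}$, which is a legitimate CPTP map as a composition of CPTP maps. Theorem~\ref{channel_char} then yields $\mathcal{N}\precsim_q\mathcal{U}$, which by the definition of $\precsim_q$ is precisely $R_{\rho_{ABX}}(\mathcal{N})\leq R_{\rho_{ABX}}(\mathcal{U})$ for all $\rho_{ABX}$.

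For the second claim, $\mathcal{U}_{C\to C}\prec\mathcal{U}_{C'\to C'}$ when $|C|\leq|C'|$, I would argue directly from the reward function rather than through Theorem~\ref{channel_char}: that theorem would require an output isometry carrying $\mathcal{U}_{C'\to C'}$'s output $C'$ into $\mathcal{U}_{C\to C}$'s output $C$, which cannot exist once $|C'|\geq|C|$. Instead I would first invoke unitary invariance of the singular values, hence of every Ky-Fan norm $\|\cdot\|_{(x)}$, to write $R_{\rho_{ABX}}(\mathcal{U}_{C\to C})=\max_{\mathcal{E}}\sum_x p_x\big\|\mathcal{E}(\rho_{AB}^{(x)})\big\|_{(x)}$ over preprocessings $\mathcal{E}\colon A\to C$, and analogously for $C'$. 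Given any $\mathcal{E}\colon A\to C$, I would compose it with the channel $\mathcal{W}(\cdot)=W(\cdot)W^{\dagger}$ induced by an isometric embedding $W\colon C\hookrightarrow C'$, obtaining $\mathcal{E}'=\mathcal{W}\circ\mathcal{E}\colon A\to C'$; since an isometric embedding only appends zero singular values, $\|\mathcal{E}'(\rho_{AB}^{(x)})\|_{(x)}=\|\mathcal{E}(\rho_{AB}^{(x)})\|_{(x)}$. Hence every value attainable with $\mathcal{U}_{C\to C}$ is attainable with $\mathcal{U}_{C'\to C'}$, giving $R_{\rho_{ABX}}(\mathcal{U}_{C'\to C'})\geq R_{\rho_{ABX}}(\mathcal{U}_{C\to C})$ for all $\rho_{ABX}$.

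The main obstacle is recognizing that the two halves cannot be handled uniformly: the first is essentially a one-line application of Theorem~\ref{channel_char} enabled by the invertibility of $\mathcal{U}$, whereas the second is blocked from using that theorem because the required embedding would have to \emph{shrink} the output dimension. The crux of part two is therefore the (routine but essential) observation that isometrically embedding the preprocessed state into the larger output space preserves the top-$x$ singular-value sums, so enlarging the output system can only help the player. I would also note that combining the first claim with the unitary-equivalence corollary remarked just before the lemma lets one replace every unitary channel by the identity throughout, which streamlines the bookkeeping in the second part.
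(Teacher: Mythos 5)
Your proposal is correct, but it reaches both halves of the lemma by routes that differ from the paper's. For the first claim, the paper argues directly on the reward function: since $\mathcal{N}\circ\mathcal{E}$ is itself a legitimate preprocessing, $R_{\rho_{ABX}}(\mathcal{N})\leq\max_{\tilde{\mathcal{E}}}\sum_x p_x\Vert\tilde{\mathcal{E}}(\rho_{AB}^{(x)})\Vert_{(x)}$, and by unitary invariance of the Ky-Fan norms this maximum equals $R_{\rho_{ABX}}(\mathcal{U})$. Your reduction to Theorem~\ref{channel_char} via the decomposition $\mathcal{N}=\mathcal{I}\circ\mathcal{U}\circ(\mathcal{U}^{-1}\circ\mathcal{N})$ is equally valid (the inverse unitary channel is CPTP, so $\mathcal{U}^{-1}\circ\mathcal{N}$ is an admissible preprocessing) and is more modular, though the two arguments rest on the same two ingredients --- absorption into the preprocessing and unitary invariance --- since the theorem's own proof uses exactly these. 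For the second claim the divergence is more substantive. The paper does \emph{not} prove the all-games inequality at all: it fixes the single game $\rho_{AB}=\ket{\Phi^{+}_{AB}}\!\bra{\Phi^{+}_{AB}}$ with $|A|=|B|=|C'|$, observes that $\mathcal{U}_{C'\to C'}$ preserves purity and hence wins with certainty, and asserts that $\mathcal{U}_{C\to C}$ cannot, because the required preprocessing $\mathcal{E}$ into the smaller system is ``necessarily entanglement breaking'' (a justification that is, strictly speaking, false --- e.g.\ a partial trace from $C'$ to $C$ is not entanglement breaking --- although the conclusion $R<1$ for that game can be salvaged by a rank argument on states whose $B$-marginal is maximally mixed). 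This exhibits a separation in one game, which is what one would need for strictness, but it does not establish $R_{\rho_{ABX}}(\mathcal{U}_{C\to C})\leq R_{\rho_{ABX}}(\mathcal{U}_{C'\to C'})$ for \emph{every} $\rho_{ABX}$, which is what the ordering actually requires. Your isometric-embedding argument --- every preprocessing $\mathcal{E}\colon A\to C$ lifts to $\mathcal{W}\circ\mathcal{E}\colon A\to C'$ with identical Ky-Fan norms, so the larger output space can only enlarge the feasible set --- supplies precisely that missing all-games inequality, and is consistent with the non-strict reading of $\prec$ forced by the hypothesis $|C|\leq|C'|$ (which permits $|C|=|C'|$, where strictness is impossible). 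In short: your part one is a cleaner repackaging of the paper's argument, and your part two is both different from and more complete than the paper's sketch.
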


\begin{proof}
We first show that unitary channels outperform any other channel of the same dimension. 
\begin{align}
R_{\rho_{ABX}}\left(\mathcal{N}\right) &= \max_{\mathcal{E}} \left( \sum_{x} p_{x} \Big\Vert \mathcal{N} \circ  \mathcal{E}\left(\rho_{AB}^{\left(x\right)}\right) \Big\Vert_{x} \right)  \\
& \leq \max_{\tilde{\mathcal{E}}} \left( \sum_{x} p_{x} \Big\Vert \tilde{\mathcal{E}}\left(\rho_{AB}^{\left(x\right)}\right) \Big\Vert_{x} \right)    \\
& = \max_{\tilde{\mathcal{E}}} \left( \sum_{x} p_{x} \Big\Vert U \tilde{\mathcal{E}}\left(\rho_{AB}^{\left(x\right)}\right) U^{\dag} \Big\Vert_{x} \right)    \\
&= R_{\rho_{ABX}}\left(\mathcal{U}\right)   .
\end{align}
The statement that higher dimensional unitary channels outperform lower dimensional unitary channels follow by setting $\rho_{AB} = \ket{\Phi_{AB}^{+}}\!\!\bra{\Phi_{AB}^{+}}$ where $|A| = |B| = |C'|$ (namely, the dimensions of systems $A$ and $B$ are set to be equal to the input dimension of the higher dimensional unitary). 

Thus, $\mathcal{U}_{C' \rightarrow C'}\left(\rho_{AB}\right)$ will result in a pure state and perfect reward for any game. On the other hand, any game played with $\mathcal{U}_{C \rightarrow C}$ will require some choice of preprocessing channel $\mathcal{E}_{C' \rightarrow C}$, which is necessarily entanglement breaking. 
\end{proof}

\section{Uniqueness of the Quantum Channel Gambling Game}

In principle, one could consider more general games with additional resources. In  Appendix~C, we demonstrate that generalizations of the quantum gambling game do not yield any reasonable interpretation related to entropy. \\
First, we show that allowing the player to pass side information through a classical wire from $\mathcal{E}$ to the final measurement would enable the player to win any game with the classical identity channel. Given that quantum entanglement is a resource, we would expect the quantum identity channel to strictly outperform the classical identity channel, and so the gambling game with an added classical communication wire does not lead to a reasonable ordering between channels. 

\begin{figure}[h!]
  \begin{overpic}[scale=0.42]{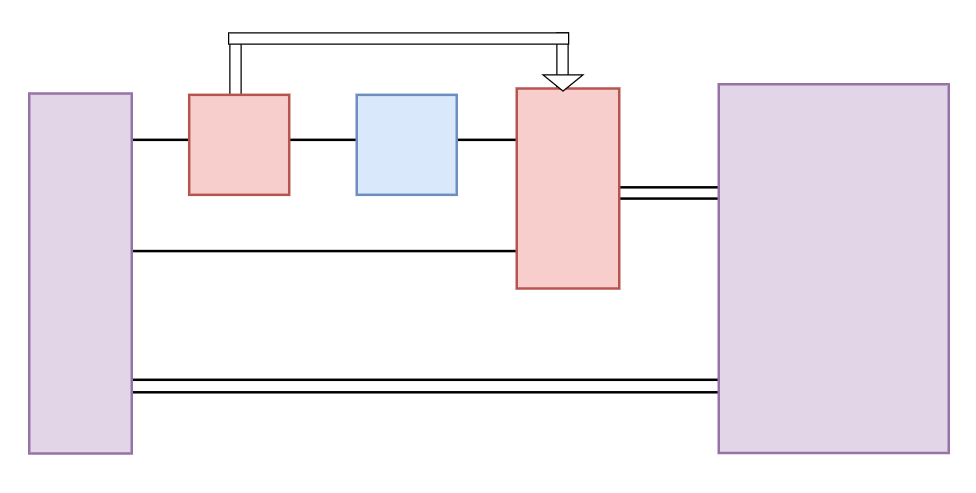}
  \put(100, 33){$x$}
  \put(10, 60){$\rho_{ABX}$}
  \put(165, 82){$y$}
  \put(60, 83){$\mathcal{E}$}
  \put(100, 83){$\mathcal{N}$}
  \put(190, 75){Win if}
  \put(195, 65){ $y \leq x$}
  \put(195, 35){Lose else}
  \put(140, 78){$\hat{\Pi}_{w}$}
  \put(100, 120){$w$}
  \large
\end{overpic}
\caption{Gambling game with added communication wire.}
\end{figure}

One could additionally consider a game with quantum combs. In Appendix~\ref{appendixD}, we discuss why allowing arbitrary quantum combs $C_{j}$ would again prevent a reasonable channel ordering. Specifically, if $\{C_{j}\}$ are allowed to be arbitrary bipartite channels, it is not possible to obtain an ordering between channels $\mathcal{N}$ and $\mathcal{U} \circ \mathcal{N}$ where $\mathcal{U}$ is any unitary and $\mathcal{N}$ is a pure state replacement channel. Given that we expect two channels which are related by a unitary to have equivalent entropy, this game must again be excluded.

\begin{figure}[h!]
  \begin{overpic}[scale=0.38]{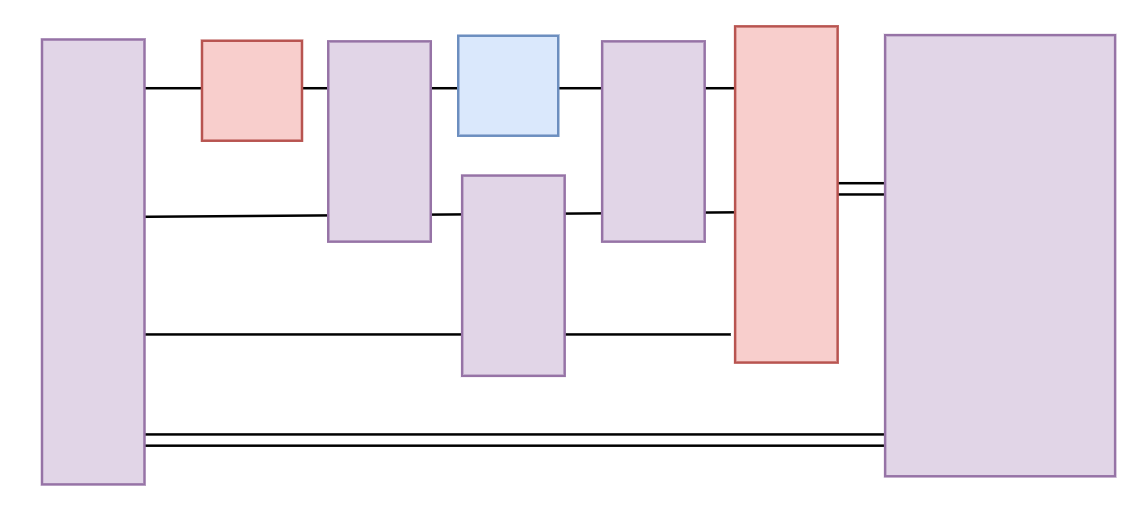}
  \small
  \put(100, 22){$x$}
  \put(113, 52){$C_{2}$}
  \put(144, 82){$C_{3}$}
  \put(82, 82){$C_{1}$}
  \put(10, 60){$\rho_{ABC}^{X}$}
  \put(195, 80){$y$}
   \put(175, 75){$\hat{\Pi}$}
  \put(55, 95){$\mathcal{E}$}
  \put(113, 95){$\mathcal{N}$}
  \put(210, 85){Win if}
  \put(215, 75){ $y \leq x$}
  \put(210, 45){Lose else}
  \large
\end{overpic}
\end{figure}

\section{Noisy Channels}
We include a table summarizing the reward function of special qubit channels for two key games involving the maximally entangled state and a pure tensor product state respectively. More specifically, in the first game we consider, we define $\rho_{AB} = \phi_{AB}^{+}$ and let $X$ be drawn according to the classical state $\rho_{X}$ s.t. $\rho_{ABX} = \phi_{AB}^{+} \otimes \rho_{X}$. In the second game we consider, system $B$ is set to be trivial such that $\rho_{AB} = \rho_{A} = \ket{0}\!\!\bra{0}$ and $\rho_{ABX} = \ket{0}\!\!\bra{0} \otimes \rho_{X}$. In both cases, we denote the ordered diagonal elements of $\rho_{X}$ as $[p_{j}]$. 
\begin{center}
 \begin{tabular}{||c c c||} 
 \hline
 $\mathcal{N}$ & $R_{\phi^{+}_{AB} \otimes \rho_{X}}\left(\mathcal{N}\right)$ & $R_{\ket{0}\!\!\bra{0} _{A} \otimes \rho_{X}}\left(\mathcal{N}\right)$ \\ [0.5ex] 
 \hline\hline
  $\mathcal{U}$ & 1 & 1 \\
   \hline
 $\mathcal{I}_{\text{CL}}$ & $1- \frac{1}{2}p_{1}$ & 1 \\
   \hline
 $D_{\gamma}$ & $\left(1-\gamma\right) +  \frac{\gamma \sum_{x} p_{x} x}{4}$ & $1-\gamma \frac{p_{1}}{2}$ \\ 
 \hline
 $\mathcal{N}_{\hat{\Pi}}$ & $1 - \frac{1}{2} p_{1}$ & 1 \\
 \hline
 $\mathcal{A}_{\gamma}$ & $(1-\frac{\gamma}{2})p_{1} + \frac{\gamma}{2}$ & 1 \\
 \hline
 $\mathcal{R}_{\sigma}$ & $\sum_{x} p_{x} \| \sigma \otimes \frac{I}{2} \|_{\left(x\right)}$ & $\sum_{x} p_{x} \| \sigma \|_{\left(x\right)}$ \\
 \hline
 $\mathcal{F}_{\gamma}$ & $\left(1-\gamma\right) + \gamma\left(1- \frac{1}{2} p_{1}\right)$ & 1 \\ [1ex] 
 \hline
\end{tabular}
\end{center}

We now define all channels included in the above table. 
\begin{enumerate}
    \item $\mathcal{U}$ is a unitary channel with corresponding unitary $U$ such that $\mathcal{U}(\rho) = U \rho U^{\dag}$ for all $\rho \in \mathcal{D}(\mathcal{H}_{2})$.
    \item $\mathcal{I}_{\text{CL}}$ is a classical identity channel such that for some basis $\{\ket{j}\}$ then
    \begin{align}
        \mathcal{I}_{\text{CL}}(\rho) = \sum_{j} \bra{j} \! \rho \! \ket{j} \ket{j}\!\!\bra{j} 
    \end{align}
    \item $\mathcal{D}_{\gamma}$ is the qubit depolarizing channel with noise parameter $\gamma$ defined as 
    \begin{align}
        \mathcal{D}_{\gamma}(\rho) = (1-\gamma)\rho + \gamma \mathbf{u}_{2}
    \end{align}
    \item $\mathcal{N}_{\hat{\Pi}}$ is the channel implementing POVM $\hat{\Pi} = \{\Pi_{j}\}$ as 
    \begin{align}
        \mathcal{N}_{\hat{\Pi}}(\rho) = \sum_{j} \text{Tr}\left[ \Pi_{j} \rho \right] \ket{j}\!\!\bra{j}
    \end{align}
    \item $\mathcal{A}_{\gamma}$ is the amplitude damping channel with noise parameter $\gamma$ defined as
    \begin{align}
    \mathcal{A}_{\gamma}\left(\begin{pmatrix}
    \rho_{00} & \rho_{01} \\
    \rho_{10} & \rho_{11}
    \end{pmatrix}\right) & \triangleq \begin{pmatrix}
    \rho_{00} + \gamma \rho_{11} & \sqrt{1-\gamma} \rho_{01} \\
    \sqrt{1-\gamma} \rho_{10} & \left(1-\gamma\right) \rho_{11}.
    \end{pmatrix}
\end{align}
\item The replacement channel $\mathcal{R}_{\sigma}$ discards the input and outputs state $\sigma$ such that $\mathcal{R}_{\sigma}(\rho) = \sigma$ for all $\rho$.
\item The dephasing channel $\mathcal{F}_{\gamma}$ with parameter $\gamma$ acts as
\begin{align}
    \mathcal{F}_{\gamma}(\rho) = (1-\gamma) \rho + \gamma I_{\text{CL}}(\rho)
\end{align}
\end{enumerate}
Details for how the results in the table are obtained are included in Appendix~\ref{Example_channels}.

Finally, we note that the maximal output state purity attainable with a channel is monotonic under the ordering induced by games of chance. 

\begin{lemma} Suppose that $\mathcal{M} \precsim \mathcal{N}$. Then \begin{align}
    \underset{\rho}{\operatorname{sup}} \left( \operatorname{Tr}\Big[ \mathcal{N}\left(\rho\right)^{2} \Big] \right) \geq   \underset{\rho}{\operatorname{sup}} \left( \operatorname{Tr}\Big[ \mathcal{M}\left(\rho\right)^{2} \Big] \right) 
\end{align}
\end{lemma}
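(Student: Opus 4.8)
The plan is to reduce the channel ordering $\mathcal{M}\precsim_q\mathcal{N}$ to a family of scalar inequalities involving weighted sums of output eigenvalues, and then to close the argument with a single application of the Cauchy--Schwarz inequality.

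First I would specialize the gambling game so that its reward reads off a weighted Ky--Fan functional of the best output state. Take $B$ trivial and choose $\rho_{ABX}$ whose conditional states $\rho_A^{(x)}$ are all equal (say to $\op{0}{0}$), so that the preprocessing map $\mathcal{E}$ produces a single state $\sigma=\mathcal{E}(\op{0}{0})$ that ranges over all inputs of the channel as $\mathcal{E}$ varies (e.g.\ via replacement maps). Writing $c_i\eqdef\sum_{x\geq i}p_x$ for the tail sums of the $X$-distribution (a nonincreasing sequence with $c_1=1$), the reward collapses to
\begin{align}
    R_{\rho_{ABX}}(\mathcal{N})=\max_{\sigma}\sum_i c_i\,\lambda^{\downarrow}_i\!\left(\mathcal{N}(\sigma)\right)\eqdef F_{\mathcal{N}}(c),
\end{align}
where $\lambda^{\downarrow}_i$ denotes the $i$-th largest eigenvalue. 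Since every nonincreasing $c$ with $c_1=1$ arises from an admissible distribution $p_x=c_x-c_{x+1}$, the hypothesis $\mathcal{M}\precsim_q\mathcal{N}$ yields $F_{\mathcal{M}}(c)\leq F_{\mathcal{N}}(c)$ for all such $c$.

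Next I would pick out the optimal state. Let $\rho^{*}$ attain $\sup_\rho\operatorname{Tr}[\mathcal{M}(\rho)^2]$ (attained by compactness), and let $\mu$ be the eigenvalue vector of $\mathcal{M}(\rho^{*})$, so that $\operatorname{Tr}[\mathcal{M}(\rho^{*})^2]=\|\mu\|_2^2$ and $\mu_1>0$. I would feed the admissible weight $c=\mu/\mu_1$ into the inequality above. On the $\mathcal{M}$ side, evaluating at $\sigma=\rho^{*}$ gives the lower bound $F_{\mathcal{M}}(c)\geq\langle c,\mu\rangle=\|\mu\|_2^2/\mu_1$. On the $\mathcal{N}$ side, letting $\nu$ be the eigenvalue vector of the maximizing output $\mathcal{N}(\sigma^{*})$, we have $F_{\mathcal{N}}(c)=\langle\mu,\nu\rangle/\mu_1$.

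Chaining $F_{\mathcal{M}}(c)\leq F_{\mathcal{N}}(c)$ and clearing $\mu_1$ gives $\|\mu\|_2^2\leq\langle\mu,\nu\rangle$, and then Cauchy--Schwarz together with $\|\nu\|_2^2=\operatorname{Tr}[\mathcal{N}(\sigma^{*})^2]\leq\sup_\rho\operatorname{Tr}[\mathcal{N}(\rho)^2]$ gives $\langle\mu,\nu\rangle\leq\|\mu\|_2\|\nu\|_2\leq\|\mu\|_2\,\big(\sup_\rho\operatorname{Tr}[\mathcal{N}(\rho)^2]\big)^{1/2}$. Dividing through by $\|\mu\|_2=\big(\sup_\rho\operatorname{Tr}[\mathcal{M}(\rho)^2]\big)^{1/2}>0$ and squaring yields the claim. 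The one step needing genuine care is the first: verifying that the game with trivial $B$ and constant conditional states really collapses the reward to the weighted Ky--Fan maximum $F_{\mathcal{N}}(c)$, and that the weight $c=\mu/\mu_1$ is realized by a legitimate (normalized, nonincreasing) choice of the $X$-distribution. Once that identification is secured, the remainder is just the Cauchy--Schwarz bookkeeping above.
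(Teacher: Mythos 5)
Your proposal is correct and follows essentially the same route as the paper's proof: the same specialization of the game (trivial $B$, all conditional states equal to $\op{0}{0}$), the same choice of weights proportional to the ordered eigenvalues of $\mathcal{M}(\rho^{*})$ where $\rho^{*}$ maximizes the output purity of $\mathcal{M}$, and the same Cauchy--Schwarz/H\"older step to convert $\langle\mu,\nu\rangle\geq\|\mu\|_2^2$ into the purity comparison. Your only (minor) refinement is that you use just the lower bound $F_{\mathcal{M}}(c)\geq\langle c,\mu\rangle$ rather than the paper's claim that the maximum in $R(\mathcal{M})$ is attained exactly at $\rho^{*}$, which slightly streamlines the argument without changing its substance.
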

\begin{proof} See Appendix~\ref{appendixE} 
\end{proof}

% We now recall the quantum channel entropy defined in Gour et al. is given by
% \begin{align}
%     H\left(\mathcal{N} _{A \rightarrow B}\right) &= \text{log}|B| - D\left(\mathcal{N} \Big\Vert \mathcal{R}\right)
% \end{align}
% where 
% \begin{align}
%     D\left(\mathcal{N} \Big\Vert \mathcal{R} \right) & \triangleq \sup_{\rho_{RA}} \operatorname{Tr}\Big[ \mathcal{N}\left(\rho_{RA}\right) \Big( \text{log}\left(\mathcal{N}\left(\rho_{RA}\right)\right) \\
%     & - \text{log}\left(\mathcal{R}\left(\rho_{RA}\right)\right) \Big) \Big]
% \end{align}

% \begin{theorem} Let $\mathcal{M} _{A \rightarrow B}$ be a CQ and $\mathcal{N} _{A \rightarrow B}$ be an arbitrary quantum channel. Then if $\mathcal{M} \precsim \mathcal{N}$, $H\left(\mathcal{N}\right) \leq H\left(\mathcal{M}\right)$.
% \end{theorem}
% \begin{proof} See Appendix D for proof.
% \end{proof}

\section{Conclusions}
In this work, we demonstrate that the set of \emph{all} entropy non-decreasing channels is equivalent to the set of conditionally unital, semi-causal (CUSC) channels. We then require quantum conditional entropy to be monotonic under the action of CUSC channels as well as satisfying simple axioms such as additivity and normalisation. This minimalist approach is sufficient to guarantee negativity of conditional entropy for maximally entangled states, and additionally is sufficient demonstrate that conditional entropy is non-negative for separable bipartite states. We discuss applications of this axiomatic approach to classical bipartite states and demonstrate that the resulting conditional majorisation matches the classical conditional majorisation first defined in~\cite{Gour}. Finally, we develop an operational approach for characterising entropy via games of chance, and apply this approach to both quantum bipartite states and quantum channels respectively.

\section{Acknowledgments}
The authors would like to thank Henry Pfister for helpful discussions. GG and IG acknowledge support from the Natural Sciences and Engineering Research Council of Canada (NSERC). SB acknowledges support from the National Science Foundation (NSF) under Grant No. 1908730 and 1910571. Any opinions, findings, conclusions, and recommendations expressed in this material are those of the authors and do not necessarily reflect the views of these sponsors. 

\bibliographystyle{ieeetr}
\bibliography{ref}

\pagebreak

\appendix

\section{Quantum Gambling Games on Bipartite Classical States}
\label{appendixC}

In this appendix, we demonstrate that the majorisation relation $\precsim_{(A)}$ is equivalent to the majorisation relation $\precsim_{C}$ on classical states. We here denote the expected success probability of playing a $\mathcal{T}$-game with classical bipartite state $\rho_{AB}$ as
\begin{align}
\operatorname{Prob}_{\mathcal{T}}\left(\rho_{AB}\right) &= \sum_{x} p_{x} \max_{z'} \left( \sum_{w} t_{w|z'} \left\| \mathbf{p}_{x} \right\|_{\left(w\right)}\right),
\end{align} %\sum_{w} \operatorname{Tr}[\mathcal{T}\left(\ket{z}\!\!\bra{z}\right) \ket{w}\!\!\bra{w}]  \big\| \big\|
where $\mathbf{p}_{x} = \{p_{y|x}\}_{x}$ is the probability that the output of $\rho_{AB}$ on system $B$ is $Y=y$, given that the output on system $A$ is $X=x$. \begin{lemma}
For any classical bipartite states $\rho_{AB} = \{q_{x, y}\}_{x,y}$ and $\sigma_{AB} = \{p_{x, y} \}_{x,y}$, the following inequality holds
\begin{align}
 \operatorname{Prob}_{\mathcal{T}}\left(\sigma_{AB}\right) \leq \operatorname{Prob}_{\mathcal{T}}\left(\rho_{AB}\right) \ \forall \mathcal{T}
 \end{align} 
 if and only if 
 \begin{align}
  R_{\mathcal{T}, p}\left(\sigma_{AB}\right) \leq R_{\mathcal{T}, p}\left(\rho_{AB}\right) 
 \end{align}
for all distributions $\mathcal{T}$ and all $0 \leq p \leq 1$
\end{lemma}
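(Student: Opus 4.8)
The plan is to reduce everything to the two extreme games $p\in\{0,1\}$ and to evaluate the quantum reward $R_{\mathcal{T},p}$ explicitly on classical states. Since $R_{\mathcal{T},p}=p\,R_{\mathcal{T},1}+(1-p)\,R_{\mathcal{T},0}$ is affine in $p$, the condition $R_{\mathcal{T},p}(\sigma_{AB})\le R_{\mathcal{T},p}(\rho_{AB})$ for all $p\in[0,1]$ is equivalent to the conjunction of the two endpoint inequalities. I would therefore prove two facts: (i) $R_{\mathcal{T},0}(\rho_{AB})=\operatorname{Prob}_{\mathcal{T}}(\rho_{AB})$ for every classical state, and (ii) $R_{\mathcal{T},1}(\rho_{AB})$ is a constant depending only on $\mathcal{T}$ and $|A|$, not on the classical state. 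Granting these, the forward direction is immediate: if $\operatorname{Prob}_{\mathcal{T}}(\sigma)\le\operatorname{Prob}_{\mathcal{T}}(\rho)$ for all $\mathcal{T}$, then the $p=0$ inequality holds by (i), the $p=1$ inequality holds with equality by (ii), and affinity in $p$ gives the inequality for all $p$; the reverse direction is just the $p=0$ specialisation combined with (i).

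For (i), I would first note that when Bob measures system $B$ in the computational basis and chooses $f$ optimally, the post-measurement reduced states on $A$ are diagonal with entries $\{p_{yz}\}_y$, so this strategy already achieves $\operatorname{Prob}_{\mathcal{T}}(\rho_{AB})$; this is the lower bound. For the matching upper bound, let Bob use an arbitrary rank-one projective measurement $\{\ket{\phi_{z'}}\!\bra{\phi_{z'}}\}$ and set $q_{z'z}=|\langle\phi_{z'}|z\rangle|^2$. Because $\rho_{AB}$ is classical, the unnormalised conditional state on $A$ given outcome $z'$ is again diagonal with column vector $\sum_z q_{z'z}\mathbf{p}_z$, where $\mathbf{p}_z=\{p_{yz}\}_y$. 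Subadditivity and homogeneity of the Ky-Fan norm give $\|\sum_z q_{z'z}\mathbf{p}_z\|_{(w)}\le\sum_z q_{z'z}\|\mathbf{p}_z\|_{(w)}$; substituting this, reordering the sums, and using that $\{q_{z'z}\}_{z'}$ is a probability distribution for each fixed $z$ (completeness of the measurement), the reward is bounded by a convex combination over $z'$ of quantities each at most $\max_{z'}\sum_w t_{w|z'}\|\mathbf{p}_z\|_{(w)}$. Summing over $z$ recovers $\operatorname{Prob}_{\mathcal{T}}(\rho_{AB})$, establishing the upper bound and hence (i).

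For (ii), I would describe the adversary's pinching of a classical state: a rank-one projective measurement $\{\ket{\alpha_j}\!\bra{\alpha_j}\}$ on $A$ maps $\rho_{AB}$ to the classical state whose joint matrix is $RP$, where $P=(p_{yz})$ and $R=(|\langle\alpha_j|y\rangle|^2)$ is unistochastic, in particular doubly stochastic. By (i), $R_{\mathcal{T},1}(\rho_{AB})=\min_{R}\operatorname{Prob}_{\mathcal{T}}(RP)$, the minimum over admissible $R$. The key observation is that the flat matrix $R_0$ with all entries $1/|A|$, realised by the Fourier (mutually unbiased) basis, sends every column $\mathbf{p}_z$ to the uniform vector $\tfrac{p_z}{|A|}\mathbf{1}$. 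Since the uniform vector is majorised by any vector of the same sum and the Ky-Fan norm is Schur-convex, $R_0$ simultaneously minimises $\|R\mathbf{p}_z\|_{(w)}$ over all admissible $R$, for every $z$ and every $w$. Because the reward is a nonnegative combination of these norms followed by a maximisation over the channel input, $R_0$ is the adversary's optimum, and $\operatorname{Prob}_{\mathcal{T}}(R_0P)=\tfrac{1}{|A|}\max_{z'}\sum_w t_{w|z'}\,w$, which indeed depends only on $\mathcal{T}$ and $|A|$. This proves (ii).

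The main obstacle is establishing the two optimality claims: that the computational basis is Bob's best measurement in (i), and that the flat Fourier basis is the adversary's best choice in (ii). Both rest on the same structural input, namely the Schur-convexity of the Ky-Fan norms together with the fact that (doubly) stochastic mixing of the conditional $A$-distributions can only move them down in the majorisation order. The only genuinely quantum point to verify is that the pinching and the measurement, acting on a classical (hence diagonal) state, really do reduce to the stated (uni)stochastic matrices acting on the columns of $P$, which is a direct consequence of the diagonal form of $\rho_{AB}$.
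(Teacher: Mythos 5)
Your proposal is correct and takes essentially the same route as the paper's proof: reduce to the endpoint games $p\in\{0,1\}$ by affinity, identify $R_{\mathcal{T},0}$ with $\operatorname{Prob}_{\mathcal{T}}$ on classical states, and argue that the adversarial ($p=1$) reward is state-independent because a mutually-unbiased-basis pinching flattens the $A$-marginal. The only difference is that you explicitly justify the two optimality claims (Ky-Fan subadditivity for bounding Bob's measurement, Schur-convexity for showing the flat pinching is the adversary's optimum) which the paper's computation asserts without proof; this is a welcome tightening rather than a different approach.
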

\begin{proof}
First, we show that for all quantum gambling games with $\mathbf{p} = \{1, 0\}$ (namely, games where Alice chooses the first measurement), the following equality holds 
\begin{align}
    R_{G}\left(\sigma_{AB}\right) = \operatorname{Prob}_{\mathcal{T}}\left(\sigma_{AB}\right).
\end{align}
Recall that any classical bipartite state $\rho_{AB}$ can be written in quantum notation as 
 \begin{align}
     \rho_{AB} &= \sum_{x, y} q_{x, y} \ket{x}\!\!\bra{x} \otimes \ket{y}\!\!\bra{y}, \\
     \sigma_{AB} &= \sum_{x, y} p_{x, y} \ket{x}\!\!\bra{x} \otimes \ket{y}\!\!\bra{y}.
 \end{align}
First consider games where $p = 0$. Then 
\begin{widetext}
\begin{align}
    R_{\mathcal{T}, 0}\left(\rho_{AB}\right) &=
     \underset{\hat{\Pi}^{'},  f }{\text{max}} \left( \sum_{z} \sum_{w} t_{w|f(z)} \left\|  \text{Tr}_{B}\left[\Pi_{\text{B}}^{\prime (z)}  \rho_{AB} \Pi_{\text{B}}^{'(z)}\right\|_{(w)}\right] \right) \\
     &= 
     \underset{\hat{\Pi}^{'},  f }{\text{max}} \left( \sum_{z} \sum_{w} t_{w|f(z)} \left\| \sum_{x}  q_{x, y} \ket{x}\!\!\bra{x} \text{Tr}\left[ \Pi_{\text{B}}^{\prime (z)} \ket{y}\!\!\bra{y}  \Pi_{\text{B}}^{'(z)}\right]\right\|_{(w)} \right) \\
     &= 
     \underset{f }{\text{max}} \left( \sum_{z} \sum_{w} t_{w|f(z)} \left\| \sum_{x}  q_{x, y} \ket{x}\!\!\bra{x} \text{Tr}\left[ \ket{z}\!\! \braket{z|y}\!\!\braket{y|z}\!\!\bra{z}  \right]\right\|_{(w)} \right) \\
     &= 
     \underset{f }{\text{max}} \left(\sum_{z} \sum_{w} t_{w|f(z)} \left\| \sum_{x}  q_{x, z} \ket{x}\!\!\bra{x} \right\|_{(w)} \right) \\
\end{align}
\end{widetext}
Upon relabelling variables $x$ to $k$ for consistency with the convention used for classical gambling games, we have
\begin{align}
 R(\rho_{AB})
&=  \sum_{z} \max_{z'} \sum_{w} t_{w|z'} \Big \| \sum_{k}  q_{k, z} \ket{k}\!\!\bra{k}  \Big \|_{\left(w\right)}   \\
&=  \sum_{z} q_{z} \max_{z'} \sum_{w} t_{w|z'} \Big \| \sum_{k}  q_{k|z} \ket{k}\!\!\bra{k}  \Big \|_{\left(w\right)}   \\
&=  \sum_{z} q_{z} \max_{z'} \sum_{w} t_{w|z'} \Big \| \mathbf{q}_{z} \Big \|_{\left(w\right)}   \\
&= \operatorname{Prob}_{\mathcal{T}}\left(\rho_{AB}\right)  .
\end{align}
It immediately follows that $\operatorname{Prob}_{\mathcal{T}}\left(\sigma_{AB}\right) \leq \operatorname{Prob}_{\mathcal{T}}\left(\rho_{AB}\right)$ for all $\mathcal{T}$ if $R_{G}\left(\sigma_{AB}\right) \leq R_{G}\left(\rho_{AB}\right)$ for all $G$.

Now, we demonstrate that the additional games allowed in the quantum case (i.e., games where $\mathbf{p} = 1$) does not affect the ordering. This follows immediately upon noting that the adversary can measure in a mutually unbiased basis, such that the state $\rho_{AB}$ transforms to $\mathbf{u}_{A} \otimes \sigma_{B}$ for some $\sigma_{B}$. 

The adversarial measurement will evidently be a measurement in a basis $\{\ket{\phi_{j}}\}_j$, which is mutually unbiased with respect to  $\{\ket{j}\}_j$, such that  $|\braket{j|\phi_{j}}|^{2} = \frac{1}{|A|}$. Consider the following chain of equalities:
\begin{widetext}
\begin{align}
R_{G}\left(\rho_{AB}\right) &= \min_{\hat{\Pi}}\max_{\hat{\Pi'}}  \sum_{z} \max_{z'} \sum_{w} t_{w|z'} \left \| \operatorname{Tr}_{B} \left(  \sum_{k} \Pi_{z}' \otimes \Pi_{k}  \rho_{AB} \left(\Pi_{z}' \otimes \Pi_{k}\right) \right) \right \|_{\left(w\right)} \\
&= \max_{\hat{\Pi'}}  \sum_{z} \max_{z'} \sum_{w} t_{w|z'} \left \| \operatorname{Tr}_{B} \left(  \sum_{k} \ket{\phi_{k}}\!\!\bra{\phi_{k}} \otimes \Pi_{z}'   \left(\sum_{x,y} q_{x,y} \ket{x}\!\!\bra{x} \otimes \ket{y}\!\!\bra{y}\right) \left( \ket{\phi_{k}}\!\!\bra{\phi_{k}} \otimes  \Pi_{z}'\right) \right) \right \|_{\left(w\right)} \\
&= \max_{\hat{\Pi'}}  \sum_{z} \max_{z'} \sum_{w} t_{w|z'} \left \| \operatorname{Tr}_{B} \left( \sum_{k} \left(I \otimes \Pi_{z'}\right) \sum_{x, y} \frac{1}{d} q_{x, y} \ket{\phi_{k}}\!\!\bra{\phi_{k}} \otimes \ket{y}\!\!\bra{y} \left(I \otimes \Pi_{z'}\right) \right) \right \|_{\left(w\right)} \\
&= \max_{\hat{\Pi'}}  \sum_{z} \max_{z'} \sum_{w} t_{w|z'}  \left\| \operatorname{Tr}_{B} \left(  \left(I \otimes \Pi_{z'}\right)  \frac{I}{d} \otimes \sum_{y}  q_{y} \ket{y}\!\!\bra{y}  \left( I \otimes \Pi_{z}'\right) \right) \right\|_{\left(w\right)} \\
&= \max_{\hat{\Pi'}}  \sum_{z} \max_{z'} \sum_{w} t_{w|z'}  \left\| \frac{I}{d} \right\|_{\left(w\right)} .
\end{align}
\end{widetext}
Since this reward is independent of the state, $R_{G}\left(\rho_{AB}\right) = R_{G}\left(\sigma_{AB}\right)$ for all games $G$ with $\mathbf{p} = \{0, 1\}$ and therefore the additional games do not induce any new ordering. 
\end{proof}

We finally note that it is sufficient to consider a restricted set of games for ordering between two classical states. 

\begin{lemma}
If $\rho_{AB}$ and $\sigma_{AB}$ are two classical states, it is sufficient to restrict to games where $w$ is determined. Namely, 
\begin{align}
    \operatorname{Prob}_{G}\left(\sigma_{AB}\right) \leq \operatorname{Prob}_{G}\left(\rho_{AB}\right)
\end{align}
for all games $G$ if and only if
\begin{align}
    \operatorname{Prob}_{(w, \mathbf{p})}(\sigma_{AB}) \leq \operatorname{Prob}_{\left(w, \mathbf{p}\right)}\left(\rho_{AB}\right),
\end{align}
for all $G$ where $w$ is fixed (i.e.$\mathcal{T} = \{\delta_{w|z'}\}|_{z'}$) and where $\mathbf{p} = \{1,0\}$, so that an adversarial measurement never occurs. 
\end{lemma}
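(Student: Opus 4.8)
The forward implication is immediate: the fixed-$w$ games are a subclass of all games $G$ (take $\mathcal{T}=\{\delta_{w|z'}\}$ together with $\mathbf{p}=\{1,0\}$), so if $\operatorname{Prob}_{G}(\sigma_{AB})\le\operatorname{Prob}_{G}(\rho_{AB})$ holds for every $G$ it holds in particular for these. All the content is in the converse, and the plan is as follows. First I would reduce to the non-adversarial games: by the preceding lemma the adversarial games ($p>0$) assign equal reward to $\rho_{AB}$ and $\sigma_{AB}$, so the ordering by all $G$ coincides with the ordering induced by $\operatorname{Prob}_{\mathcal{T}}$ over the $p=0$ games for all $\mathcal{T}$. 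Writing $\mathbf{c}_z$ for the $z$-th column of the joint distribution matrix with entries sorted non-increasingly, the reward expression derived earlier reads $\operatorname{Prob}_{\mathcal{T}}(\rho_{AB})=\sum_z\max_{z'}\sum_w t_{w|z'}\,\|\mathbf{c}_z\|_{(w)}$, and a fixed-$w$ game collapses (since $\|\mathbf{c}_z\|_{(w)}$ is non-decreasing in $w$) to $F_w:=\sum_z\|\mathbf{c}_z\|_{(w)}$. Thus the hypothesis is $F_w(\sigma_{AB})\le F_w(\rho_{AB})$ for every $w\in\{1,\dots,|A|\}$, and the goal is $\operatorname{Prob}_{\mathcal{T}}(\sigma_{AB})\le\operatorname{Prob}_{\mathcal{T}}(\rho_{AB})$ for every $\mathcal{T}$.

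The structural observation I would exploit is that each row $t_{\cdot|z'}$ of $\mathcal{T}$ re-weights the sorted partial sums $\|\mathbf{c}_z\|_{(w)}$ by the coefficients $t_{w|z'}$; by a summation-by-parts (survival-function) rewriting this is equivalent to weighting the sorted column entries by a non-increasing sequence $\beta\in[0,1]^{|A|}$ with $\beta_1=1$, and any such $\beta$ is a convex combination of the step sequences $\mathbf{1}_{\{k\le w\}}$ that define the fixed-$w$ games. Letting $\beta_z$ denote the $\sigma$-optimal weighting for column $z$, I would use that $\beta_z$ is an admissible (possibly suboptimal) strategy for $\rho_{AB}$ to get $\operatorname{Prob}_{\mathcal{T}}(\rho_{AB})\ge\sum_z\langle\beta_z,\mathbf{c}_z(\rho)\rangle$, and compare this term by term against $\operatorname{Prob}_{\mathcal{T}}(\sigma_{AB})=\sum_z\langle\beta_z,\mathbf{c}_z(\sigma)\rangle$. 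Because each $\mathbf{c}_z$ is sorted, the concavity of $w\mapsto\|\mathbf{c}_z\|_{(w)}$ yields per-column lower bounds on these weighted rewards in terms of the fixed-$w$ values, and aggregating them with $F_w(\sigma_{AB})\le F_w(\rho_{AB})$ is what would close the inequality.

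The main obstacle is precisely that the optimal strategy $z'=f(z)$ is chosen \emph{per column}, so the weighting $\beta_z$ generally differs across $z$; expanding $\beta_z=\sum_w\lambda^{(z)}_w\mathbf{1}_{\{\cdot\le w\}}$ reduces the term-by-term difference to $\sum_{z,w}\lambda^{(z)}_w\big(\|\mathbf{c}_z(\sigma)\|_{(w)}-\|\mathbf{c}_z(\rho)\|_{(w)}\big)$, and with $z$-dependent weights this does \emph{not} follow from the aggregate inequalities $\sum_z(\cdots)\le0$. Hence $\operatorname{Prob}_{\mathcal{T}}$ is genuinely not a function of $\{F_w\}$ alone, and a naive linearity argument fails; overcoming this requires essentially using the sortedness (concavity) of the columns, which is exactly what prevents an adversarially misaligned $\sigma_{AB}$ from outperforming $\rho_{AB}$ despite larger aggregate monotones.

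An alternative and arguably cleaner route is to pass through classical conditional majorization $\precsim_{C}$. By the games-of-chance characterization of~\cite{brandsen2021entropy} the $\{\operatorname{Prob}_{\mathcal{T}}\}$-ordering coincides with $\precsim_{C}$, so the lemma is equivalent to the statement that the fixed-$w$ family $\{F_w\}$ is a \emph{complete} set of monotones for $\precsim_{C}$. One direction is straightforward from the decomposition $\mathbf{q}_w=\sum_y t_{yw}D_{(y,w)}\mathbf{p}_y$ of Lemma~3 of~\cite{Gour}: since $\|D\mathbf{p}\|_{(w)}\le\|\mathbf{p}\|_{(w)}$ for doubly stochastic $D$ and $\sum_w t_{yw}=1$, the triangle inequality gives $F_w(Q)\le F_w(P)$ whenever $Q\precsim_{C}P$. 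The converse—that $F_w(Q)\le F_w(P)$ for all $w$ forces $Q\precsim_{C}P$—is the genuine crux, and I would prove it as the conditional analogue of the Hardy–Littlewood–Pólya theorem, recovering the doubly stochastic maps $D_{(y,w)}$ and the row-stochastic recombination $T$ from the slacks in the $F_w$ inequalities.
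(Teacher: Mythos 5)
Your proposal is not a complete proof: both of your routes stop exactly at the hard direction. In the first route you set up the same reduction the paper uses (discard the adversarial games via the preceding lemma, rewrite everything in terms of the fixed-$w$ monotones $F_w=\sum_z\|\mathbf{c}_z\|_{(w)}$), but you then concede that the per-column choice of $z'$ makes the weights $z$-dependent, so the desired inequality ``does not follow from the aggregate inequalities,'' and you never supply the argument you allude to (``using the sortedness/concavity of the columns''). In the second route you prove only the easy implication via Lemma 3 of~\cite{Gour}; the converse --- that the $F_w$ are a \emph{complete} monotone family --- is exactly the statement to be proven, and ``I would prove it as the conditional analogue of Hardy--Littlewood--P\'olya'' is a restatement of the goal, not an argument. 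So, as submitted, this is a proof plan with a genuine gap at its core.

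That said, your diagnosis of the obstacle is correct, and it is in fact fatal: the fixed-$w$ monotones are \emph{not} complete for the reward function as defined in the paper's appendix, and the paper's own proof breaks at precisely the step you flagged. The paper writes $\operatorname{Prob}_{\mathcal{T}}(\sigma)=\sum_w c_w\operatorname{Prob}_w(\sigma)$ with $c_w=\sum_{x'}\tilde p_{x'}\,t_{w|z^*_{x'}}$ and then claims $\sum_w c_w\operatorname{Prob}_w(\rho)\le\operatorname{Prob}_{\mathcal{T}}(\rho)$ by treating $\{\tilde p_{x'}\}$ as a convex combination over $x'$; but $\tilde p_{x'}$ is built from $\|\mathbf{p}_{x'}\|_{(w)}$ and therefore depends on $w$, which invalidates that step --- the very $z$-dependence (here $x'$-dependence) you identified. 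A concrete counterexample with $|A|=3$: let $\rho$ have trivial $B$ with $A$-distribution $\left(\tfrac12,\tfrac{5}{12},\tfrac{1}{12}\right)$, and let $\sigma$ have $|B|=2$ with unnormalized columns $\left(\tfrac16,\tfrac{1}{12},\tfrac{1}{12}\right)$ and $\left(\tfrac13,\tfrac13,0\right)$. Then $F_1=\tfrac12$, $F_2=\tfrac{11}{12}$, $F_3=1$ for \emph{both} states, so your hypothesis holds (with equality); yet for the game with two inputs $z'$, $t_{\cdot|1}=(0,1,0)$ and $t_{\cdot|2}=\left(\tfrac14,0,\tfrac34\right)$, one computes
\begin{align}
\operatorname{Prob}_{\mathcal{T}}(\sigma)&=\max\!\left(\tfrac14,\tfrac{7}{24}\right)+\max\!\left(\tfrac23,\tfrac{7}{12}\right)=\tfrac{7}{24}+\tfrac{16}{24}=\tfrac{23}{24}\ ,\\
\operatorname{Prob}_{\mathcal{T}}(\rho)&=\max\!\left(\tfrac{11}{12},\tfrac78\right)=\tfrac{22}{24}\ ,
\end{align}
so $\operatorname{Prob}_{\mathcal{T}}(\sigma)>\operatorname{Prob}_{\mathcal{T}}(\rho)$ (the same holds if $\rho$ is given two identical columns $\left(\tfrac14,\tfrac{5}{24},\tfrac{1}{24}\right)$, so nontriviality of $B$ is not the issue). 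The point is that $\sigma$'s two columns profit from \emph{different} choices of $z'$, which no fixed-$w$ game can detect. Consequently your hope that sortedness/concavity rescues the term-by-term comparison cannot be realized --- the columns above are all sorted --- and neither of your routes, nor the paper's argument, can be completed without modifying the statement itself.
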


\begin{proof}
It follows immediately from the previous proof that it is sufficient to restrict to games where $\mathbf{p} = \{1,0\}$, as for an arbitrary $\mathbf{p} = \{p, 1-p\}$, the reward for any bipartite state $\rho_{AB}$ can be written as
\begin{align}
 R_{\left(\mathcal{T}, \{p, 1-p\}\right)}\left(\rho_{AB}\right) &= p  R_{\left(\mathcal{T}, \{1, 0 \}\right)}\left(\rho_{AB}\right) \\
 & + \left(1-p\right) R_{\left(\mathcal{T}, \{0, 1 \}\right)}\left(\rho_{AB}\right).
\end{align}
We demonstrated in the previous case that for a classical state $\sigma_{AB} = \sum_{x,y} p_{x,y} \ket{x}\!\!\bra{x} \otimes \ket{y}\!\!\bra{y}$, the following equality holds 
\begin{align}
    R_{\left(\mathcal{T}, \{1, 0 \}\right)}\left(\sigma_{AB}\right) = \operatorname{Prob}_{T}\left(\sigma_{AB}\right)
\end{align}
Finally, we demonstrate that this can be rewritten as a convex combination of games where $w$ is fixed:
\begin{align}
\operatorname{Prob}_{\mathcal{T}}\left(\sigma_{AB}\right) &= \sum_{x} p_{x} \max_{z} \left( \sum_{w} t_{w|z'} \| \mathbf{p}_{x} \|_{\left(w\right)}\right) \\
&= \sum_{x} p_{x} \left( \sum_{w} t_{w|z_{x}^{*}} \| \mathbf{p}_{x} \|_{\left(w\right)}\right) \\
&= \sum_{w} \sum_{x} t_{w|z_{x}^{*}} p_{x}  \| \mathbf{p}_{x} \|_{\left(w\right)}  \\
&= \sum_{w} c_{w} \sum_{x} p_{x}  \| \mathbf{p}_{x} \|_{\left(w\right)}  ,
\end{align}
where for each $x$, the variable $z\left(x\right)^{*}$ is defined to be the optimal choice and  for each $w$,
\begin{align}
    c_{w} = \frac{\sum_{x} t_{w|z_{x}^{*}} p_{x}  \| \mathbf{p}_{x} \|_{\left(w\right)}}{\sum_{x} p_{x}  \| \mathbf{p}_{x} \|_{\left(w\right)}}
\end{align}
is a positive constant. Finally, if $\rho_{AB}$ and $\sigma_{AB}$ are ordered for all games with a fixed $w$, then
\begin{align}
& \operatorname{Prob}_{\mathcal{T}}\left(\sigma_{AB}\right) \notag \\
&= \sum_{w} c_{w} \operatorname{Prob}_{w}\left(\sigma_{AB}\right) \\
& \leq \sum_{w} c_{w} \operatorname{Prob}_{w}\left(\rho_{AB}\right) \\
& = \sum_{w} \frac{\sum_{x} t_{w|z_{x}^{*}} p_{x}  \| \mathbf{p}_{x} \|_{\left(w\right)}}{\sum_{x} p_{x}  \| \mathbf{p}_{x} \|_{\left(w\right)}} \sum_{x} q_{x}  \| \mathbf{q}_{x} \|_{\left(w\right)} \\
& = \sum_{x}  q_{x} \sum_{w} \left( \sum_{x'} \tilde{p}_{x'} t_{w|z_{x'}^{*}} \right)  \| \mathbf{q}_{x} \|_{\left(w\right)} \\
& \leq \sum_{x}  q_{x} \max_{x'} \left( \sum_{w} t_{w|z_{x'}^{*}} \| \mathbf{q}_{x} \|_{\left(w\right)} \right) \\
& \leq \sum_{x}  q_{x} \max_{z'} \left( \sum_{w} t_{w|z'} \| \mathbf{q}_{x} \|_{\left(w\right)} \right) \\
&= \operatorname{Prob}_{\mathcal{T}}\left(\rho_{AB}\right),
\end{align}
where $\tilde{p}_{x} =  \frac{p_{x}  \| \mathbf{p}_{x} \|_{\left(w\right)}}{\sum_{x} p_{x}  \| \mathbf{p}_{x} \|_{\left(w\right)}}$. 
\end{proof}

\section{Proof of reduction for quantum replacement channels}
\label{appendixA}
The ``only if'' part follows trivially. We now prove the ``if'' part of the statement. First, recall that in~\cite{brandsen2021entropy}, it was demonstrated that for classical games of chance $\mathcal{M} \prec \mathcal{N}$ if and only if $\rho_{\mathcal{M}} \precsim \rho_{\mathcal{N}}$. Additionally, classical games of chance can be represented as quantum games of chance where $\rho_{ABX} = \ket{0}\!\!\bra{0}_{A} \otimes \rho_{X}$. Thus, the ``if'' direction of the theorem statement may be rewritten as

\begin{align}
   & R_{\rho_{ABX}}\left(\mathcal{M}\right)  \leq R_{\rho_{ABX}}\left(\mathcal{N}\right) \ \ \forall \ \rho_{ABX} 
\end{align}
if
\begin{align}
   & R_{\rho_{AX}}\left(\mathcal{M}\right)  \leq R_{\rho_{AX}}\left(\mathcal{N}\right) \ \ \forall \ \rho_{ABX} = \ket{0}\!\!\bra{0}_{A} \otimes \rho_{X}
   \end{align}

Clearly, since $\mathcal{M}$ is a replacement channel, then $\mathcal{M} \circ \mathcal{E} = \mathcal{M}$ for all $\mathcal{E}$. Hence, the choice of preprocessing channel does not affect the reward, so w.l.o.g. we may set $\mathcal{E} = \mathcal{I}$. It follows that the reward then becomes 
\begin{align}
    R_{\rho_{ABX}}\left(\mathcal{M}\right) &= \max_{\mathcal{E}} \left( \sum_{x=1}^{\ell} p_{x} \Big\Vert \mathcal{M} \circ \mathcal{E}\left(\rho_{AB}^{\left(x\right)}\right) \Big\Vert_{x} \right) \\
    &= \sum_{x=1}^{\ell} p_{x} \Big\Vert \mathcal{M}\left(\rho_{AB}^{\left(x\right)}\right) \Big\Vert_{x} \\
    &= \sum_{x=1}^{\ell} p_{x} \Big\Vert \rho_{M} \otimes \operatorname{Tr}_{A}[\rho_{AB}^{\left(x\right)}] \Big\Vert_{x}
\end{align}
 We now denote $\mathbf{\lambda}^{\downarrow}\left(\rho_{\mathcal{M}}\right) = [\lambda_{1}^{M}, \lambda_{2}^{M}, \ldots , \lambda_{d}^{M}]$ and likewise $\mathbf{\lambda}^{\downarrow}\left(\operatorname{Tr}_{A}[\rho_{AB}^{\left(x\right)}]\right) = [\lambda_{1, x}^{B}, \lambda_{2, x}^{B}, \ldots , \lambda_{d, x}^{B}]$. Then
\begin{align}
    \text{eig}\left( \rho_{\mathcal{M}} \otimes \operatorname{Tr}_{A}[\rho_{AB}^{\left(x\right)}] \right) &= \{ \lambda_{j}^{\left(M\right)} \lambda_{k, x}^{B} \}_{j, k} 
\end{align}
We define $f\left(j, k, x\right)$ to return the position of $\lambda_{j}^{\left(M\right)} \lambda_{k, x}^{B}$ in the full ordered set $\{ \lambda_{j}^{\left(M\right)} \lambda_{k, x}^{B} \}_{j, k}$ (for example, $f\left(1, 1, x\right) = 1$ since $\lambda_{1}^{\left(M\right)} \lambda_{1, x}^{B}$ is the largest eigenvalue and $f\left(d, d, x\right) = d^{2}$ since $\lambda_{d}^{\left(M\right)} \lambda_{d, x}^{B}$ is the smallest eigenvalue). Finally, we can define the set 
\begin{align}
    S\left(j\right) = \{x, k \ \big| \ \ \text{such that } \  f\left(j, k, x\right) \leq x \}
\end{align}
In other words, $S\left(j\right)$ provides all values of $x, k$ such that  the term $p_{x} \lambda_{j}^{\left(M\right)} \lambda_{k, x}^{B}$ appears in the sum. 

 We now rewrite the reward function as
\begin{align}
    R_{\rho_{ABX}}\left(\mathcal{M}\right) &= \sum_{x=1}^{\ell} p_{x} \sum_{j, k \ \text{s.t.} \ f\left(j, k, x\right) \leq x} \lambda_{j}^{\left(m\right)} \lambda_{k, x}^{B}\\
    &= \sum_{j=1}^{d} \lambda_{j}^{\left(M\right)} \left( \sum_{x, j \in S\left(j\right)}  p_{j} \lambda_{k, x}^{B} \right) 
\end{align}
In previous work, we proved that games of the form $\sum_{j} \lambda_{j}^{\left(M\right)} C_{j}$ for $C_{j} > 0$ can be rewritten in the form $\alpha R_{\tilde{p}}\left(\mathcal{M}\right)$. In this case, evidently 
\begin{align}
    C_{j} \triangleq \underset{{x, k \in S\left(j\right)}}{\sum}  p_{x} \lambda_{k, x}^{B} > 0
\end{align} 
so there exists some $\alpha$ and $\tilde{p}$ such that  $R_{\rho_{ABX}}\left(\mathcal{M}\right) = \alpha R_{\tilde{p}}\left(\mathcal{M}\right)$. 

Since we are assuming that $\mathcal{M} \precsim \mathcal{N}$, then $R_{\tilde{p}}\left(\mathcal{M}\right) \leq R_{\tilde{p}}\left(\mathcal{N}\right)$. It follows that
% \begin{align}
%     \mathbf{\lambda}^{\downarrow}\left(\mathcal{N}\left(\ket{z^{**}}\!\!\bra{z^{**}}\right)\right) = [\lambda_{1}^{N}, \lambda_{2}^{N}, \ldots , \lambda_{d}^{N}]
% \end{align}
% likewise, denote 
% \begin{align}
%     \tilde{\mathbf{p}} = \{ \sum_{\ell = 1}^{d^{2}} p_{\ell}, \sum_{\ell = 2}^{d^{2}} p_{\ell}, \ldots , \sum_{\ell = d^{2}}^{d^{2}} p_{\ell} \}
% \end{align}

Finally, we have 
\begin{align}
    R_{\rho_{ABX}}\left(\mathcal{M}\right) & = \alpha R_{\tilde{p}}\left(\mathcal{M}\right) \\
    & \leq \alpha R_{\tilde{p}}\left(\mathcal{N}\right) \\
    & =  \max_{z} \left( \sum_{k=1}^{d} \lambda_{k}^{\downarrow}\left( \rho_{\mathcal{N}} \right) \left( \sum_{j=1}^{d} \lambda_{j}^{B} \sum_{\ell = f_{k}\left(j\right)}^{d^{2}} p_{\ell} \right) \right) \\
    & \leq \max_{z} \left( \sum_{x} p_{x} \Big\Vert \rho_{\mathcal{N}} \otimes \operatorname{Tr}_{A}[\rho_{AB}^{\left(x\right)}] \Big\Vert\right) \\
    & = R_{\rho_{ABX}}\left(\mathcal{N}\right)
\end{align}
The statement then follows from recalling that classical gambling games are equivalent to quantum gambling games where $\rho_{ABX}$ has the form $\ket{0}\!\!\bra{0}_{A} \otimes \rho_{X}$.

\section{Proof of reduction for Classical Channels}
\label{appendixB}
The ``only if'' part follows immediately from noting that the quantum gambling games are a strict generalisation of the classical gambling games. We now prove the ``if'' part of the statement. For $\mathcal{M}$ classical, $\mathcal{E}$ can be reduced to selecting the optimal classical output $z$, such that  $\mathcal{E}\left(\rho\right) = \ket{z}\!\!\bra{z}$ for every $\rho$. It follows that the reward then becomes 
\begin{align}
    R_{\rho_{ABX}}\left(\mathcal{M}\right) &= \max_{\mathcal{E}} \left( \sum_{x=1}^{\ell} p_{x} \Big\Vert \mathcal{M}\left(\mathcal{E}\left(\rho_{AB}^{\left(x\right)}\right)\right) \Big\Vert_{x} \right) \\
    &= \max_{z} \left( \sum_{x=1}^{\ell} p_{x} \Big\Vert \mathcal{M}\left(\ket{x}\!\!\bra{x}\right) \otimes \operatorname{Tr}_{A}[\rho_{AB}^{\left(x\right)}] \Big\Vert_{x} \right) \\
    &= \sum_{x=1}^{\ell} p_{x} \Big\Vert \lambda_{j}^{\downarrow}\left( \mathcal{M}\left(\ket{z^{*}}\!\!\bra{z^{*}}\right) \right) \otimes \operatorname{Tr}_{A}[\rho_{AB}^{\left(x\right)}] \Big\Vert_{x}
\end{align}
where $z^{*}$ is the optimal value (i.e. the value of $z$ which maximises the reward). We now denote $\mathbf{\lambda}^{\downarrow}\left(\mathcal{M}\left(\ket{z^{*}}\!\!\bra{z^{*}}\right)\right) = [\lambda_{1}^{M}, \lambda_{2}^{M}, \ldots , \lambda_{d}^{M}]$ and likewise $\mathbf{\lambda}^{\downarrow}\left(\operatorname{Tr}_{A}[\rho_{AB}^{\left(x\right)}]\right) = [\lambda_{1, x}^{B}, \lambda_{2, x}^{B}, \ldots , \lambda_{d, x}^{B}]$. Then
\begin{align}
    \text{eig}\left( \mathcal{M}\left(\ket{z^{*}}\!\!\bra{z^{*}}\right) \otimes \operatorname{Tr}_{A}[\rho_{AB}^{\left(x\right)}] \right) &= \{ \lambda_{j}^{\left(M\right)} \lambda_{k, x}^{B} \}_{j, k} 
\end{align}
We define $f\left(j, k, x\right)$ to return the position of $\lambda_{j}^{\left(M\right)} \lambda_{k, x}^{B}$ in the full ordered set $\{ \lambda_{j}^{\left(M\right)} \lambda_{k, x}^{B} \}_{j, k}$ (for example, $f\left(1, 1, x\right) = 1$ since $\lambda_{1}^{\left(M\right)} \lambda_{1, x}^{B}$ is the largest eigenvalue and $f\left(d, d, x\right) = d^{2}$ since $\lambda_{d}^{\left(M\right)} \lambda_{d, x}^{B}$ is the smallest eigenvalue). Finally, we can define the set 
\begin{align}
    S\left(j\right) = \{x, k \ \big| \ \ \text{such that } \  f\left(j, k, x\right) \leq x \}
\end{align}
In other words, $S\left(j\right)$ provides all values of $x, k$ such that  the term $p_{x} \lambda_{j}^{\left(M\right)} \lambda_{k, x}^{B}$ appears in the sum. \\

   We now rewrite the reward function as
\begin{align}
    R_{\rho_{ABX}}\left(\mathcal{M}\right) &= \sum_{j=1}^{d} \lambda_{j}^{\left(M\right)} \left( \sum_{x, j \in S\left(j\right)}  p_{j} \lambda_{k, x}^{B} \right) 
\end{align}
However, this is of the form of a classical gambling game (possibly up to a scaling). In previous work, we proved that games of the form $\sum_{j} \lambda_{j}^{\left(M\right)} C_{j}$ for $C_{j} > 0$ can be rewritten in the form $\alpha R_{\tilde{p}}\left(\mathcal{M}\right)$. In this case, evidently 
\begin{align}
    C_{j} \triangleq \underset{{x, k \in S\left(j\right)}}{\sum}  p_{x} \lambda_{k, x}^{B} > 0
\end{align} 
so there exists some $\alpha$ and $\tilde{p}$ such that  $R_{\rho_{ABX}}\left(\mathcal{M}\right) = \alpha R_{\tilde{p}}\left(\mathcal{M}\right)$. Since we are assuming that $\mathcal{M} \precsim \mathcal{N}$, then $R_{\tilde{p}}\left(\mathcal{M}\right) \leq R_{\tilde{p}}\left(\mathcal{N}\right)$. It follows that
% \begin{align}
%     \mathbf{\lambda}^{\downarrow}\left(\mathcal{N}\left(\ket{z^{**}}\!\!\bra{z^{**}}\right)\right) = [\lambda_{1}^{N}, \lambda_{2}^{N}, \ldots , \lambda_{d}^{N}]
% \end{align}
% likewise, denote 
% \begin{align}
%     \tilde{\mathbf{p}} = \{ \sum_{\ell = 1}^{d^{2}} p_{\ell}, \sum_{\ell = 2}^{d^{2}} p_{\ell}, \ldots , \sum_{\ell = d^{2}}^{d^{2}} p_{\ell} \}
% \end{align}

Finally, we have 
\begin{align}
    R_{\rho_{ABX}} \left(\mathcal{M}\right) &= \alpha R_{\tilde{p}}\left(\mathcal{M}\right) \\
    & \leq \alpha R_{\tilde{p}}\left(\mathcal{N}\right) \\
    & =  \max_{z} \left( \sum_{k=1}^{d} \lambda_{k}^{\downarrow}\left(\mathcal{N}\left(\ket{z}\!\!\bra{z}\right) \right) \sum_{j=1}^{d} \lambda_{j}^{B} \sum_{\ell = f_{k}\left(j\right)}^{d^{2}} p_{\ell}  \right) \\
    & \leq \max_{z} \left( \sum_{x} p_{x} \Big\Vert \mathcal{N}\left(\ket{z}\!\!\bra{z}\right) \otimes \operatorname{Tr}_{A}[\rho_{AB}^{\left(x\right)}]\right) \\
    & = R_{\rho_{ABX}}\left(\mathcal{N}\right)
\end{align}
% where $\sigma$ denotes some operator which permutes the element of $\tilde{\mathbf{p}}$. 

\section{Games with Wires and Resources}
\label{appendixD}
\subsection{Game with Classical Wire}
We first consider the game with an additional classical wire, and demonstrate that one can win any game of the form $\rho_{ABX} = \ket{\psi_{AB}}\!\!\bra{\psi_{AB}} \otimes \rho_{X}$ with a qubit classical identity channel. Define $\mathcal{E}$ as
\begin{align}
    \mathcal{E}\left(\rho\right) &= \operatorname{Tr}[\ket{0}\!\!\bra{0} \rho] \ket{0}\!\!\bra{0} + \operatorname{Tr}[\ket{1}\!\!\bra{1} \rho] \ket{1}\!\!\bra{1}
\end{align}
Thus, $\mathcal{E}$ is a POVM and if outcome $0$ (respectively 1) is obtained the post measurement state is, 
\begin{align}
    \rho_{AB}^{\left(0\right)} & \triangleq \frac{\ket{0}\!\!\bra{0}_{A} \otimes I_{B} \left( \ket{\psi_{AB}}\!\!\bra{\psi_{AB}} \right) \ket{0}\!\!\bra{0}_{A} \otimes I_{B}}{ \operatorname{Tr}[\ket{0}\!\!\bra{0}_{A} \otimes I_{B} \left( \ket{\psi_{AB}}\!\!\bra{\psi_{AB}} \right)]} \\
    \rho_{AB}^{\left(1\right)} & \triangleq \frac{\ket{1}\!\!\bra{1}_{A} \otimes I_{B} \left( \ket{\psi_{AB}}\!\!\bra{\psi_{AB}} \right) \ket{1}\!\!\bra{1}_{A} \otimes I_{B}}{ \operatorname{Tr}[\ket{1}\!\!\bra{1}_{A} \otimes I_{B} \left( \ket{\psi_{AB}}\!\!\bra{\psi_{AB}} \right)]}
\end{align}
The most general expression for $\ket{\psi_{AB}}$ is
\begin{align}
    & \ket{\psi_{AB}} = \alpha_{00} \ket{00} + \alpha_{01} \ket{01} + \alpha_{10} \ket{10} + \alpha_{11} \ket{11} \\
    & \rightarrow \ket{0}\!\!\bra{0}_{A} \otimes I_{B} \ket{\psi_{AB}} = \alpha_{00} \ket{00} + \alpha_{01} \ket{01} \\
& \rightarrow \ket{1}\!\!\bra{1}_{A} \otimes I_{B} \ket{\psi_{AB}} = \alpha_{10} \ket{10} + \alpha_{11} \ket{11}
\end{align}
from which it follows that 
\begin{align}
    \rho_{AB}^{\left(0\right)} &= \ket{0}\!\!\bra{0}_{A} \otimes \ket{\psi^{\left(0\right)}}\!\!\bra{\psi^{\left(0\right)}}_{B} \\
     \rho_{AB}^{\left(1\right)} &= \ket{1}\!\!\bra{1}_{A} \otimes \ket{\psi^{\left(1\right)}}\!\!\bra{\psi^{(1)}}_{B}
\end{align}

We now show that the following measurements will allow the player to always win the game with $\rho_{AB}^{\left(0\right)}$ and $\rho_{AB}^{\left(1\right)}$ respectively:
\begin{align}
    \hat{\Pi}^{\left(0\right)} &= \Big\{ \ket{0}\!\!\bra{0} \otimes \ket{\psi^{\left(0\right)}}\!\!\bra{\psi^{\left(0\right)}}, \ket{0}\!\!\bra{0} \otimes \ket{\psi^{\left(0\right), \perp}}\!\!\bra{\psi^{\left(0\right), \perp}}, \\
    & \ket{1}\!\!\bra{1} \otimes \ket{\psi^{\left(0\right)}}\!\!\bra{\psi^{\left(0\right)}}, \ket{1}\!\!\bra{1} \otimes \ket{\psi^{\left(0\right), \perp}}\!\!\bra{\psi^{\left(0\right), \perp}} \Big\} \\
    \hat{\Pi}^{\left(1\right)} &= \Big\{ \ket{1}\!\!\bra{1} \otimes \ket{\psi^{\left(1\right)}}\!\!\bra{\psi^{\left(1\right)}}, \ket{1}\!\!\bra{1} \otimes \ket{\psi^{\left(1\right), \perp}}\!\!\bra{\psi^{\left(1\right), \perp}} \\
    & \ket{0}\!\!\bra{0} \otimes \ket{\psi^{\left(1\right)}}\!\!\bra{\psi^{\left(1\right)}}, \ket{0}\!\!\bra{0} \otimes \ket{\psi^{\left(1\right), \perp}}\!\!\bra{\psi^{\left(1\right), \perp}} \Big\}
\end{align}

It is sufficient to show that the above measurements allow the player to always win the game where $\text{diag}\left(\rho_{X}\right)= [1,0,\ldots,0]$ (namely, $x$ is always equal to one.) The corresponding reward function for the given choice of $\mathcal{E}$ then becomes:  
\begin{align}
    \text{R}\left(I_{CL}\right) &= \sum_{x \in \{0, 1\}} \operatorname{Tr}[\ket{x}\!\!\bra{x}_{A} \rho_{AB}] \\
    & \times \operatorname{Tr}\Big[ \left( \ket{x}\!\!\bra{x} \otimes \ket{\psi^{\left(x\right)}}\!\!\bra{\psi^{\left(x\right)}} \right) \rho_{AB}^{\left(1\right)} \Big] \\
    &= \sum_{x \in \{0, 1\}} \operatorname{Tr}[\ket{x}\!\!\bra{x}_{A} \rho_{AB}] \\
    & = 1
\end{align}

\subsection{Game with Quantum Combs}
We now consider the quantum comb-based game depicted in Figure (reference), and demonstrate that allowing any element $C_{j}$ to be an arbitrary quantum channel would result in a channel ordering that cannot correspond to channel entropy. \\
First, we demonstrate that $C_{3}$ cannot be an arbitrary bipartite channel. To do so, consider the following channels:
\begin{align}
   \mathcal{N}_{1}\left(\rho\right) &= \ket{0}\!\!\bra{0} \ \ \forall \rho, \ \ \ \mathcal{N}_{2}\left(\rho\right) = \ket{+}\!\!\bra{+} \ \ \forall \rho \\
   \end{align}
We now consider the performance of both channels in the game where $\rho_{ABC}^{\left(x\right)} = \ket{000}\!\!\bra{000}$ for all $x$, where combs $C_{1}$ and $C_{2}$ are taken to be trivial (identity) channels, and where
   \begin{align}
   C_{3}\left(\rho_{AB}\right) &= \operatorname{Tr}[\ket{0}\!\!\bra{0}_{A} \rho_{AB}] \ket{00}\!\!\bra{00} + \operatorname{Tr}[\ket{1}\!\!\bra{1}_{A} \rho_{AB}] \ket{11}\!\!\bra{11}
\end{align}
Additionally, suppose that $X$ is such that $x=1$ with certainty, or equivalently, the player wins only if they always obtain the first measurement outcomes. Evidently, in the above, $\mathcal{N}_{1}$ and $\mathcal{N}_{2}$ are equivalent pure state replacement channels (up to a unitary). However, 
\begin{align}
    R\left(\mathcal{N}_{1}\right) &= \Big\Vert C_{3} \left( \mathcal{N}_{1}\left(\ket{0}\!\!\bra{0}\right)_{A} \otimes \ket{00}\!\!\bra{00}_{BC} \right) \Big\Vert_{\left(1\right)} \\
    &= \left\Vert C_{3} \left(  \ket{000}\!\!\bra{000}_{ABC} \right) \right\Vert_{\left(1\right)} \\
    &= \Big\Vert \operatorname{Tr}[\ket{0}\!\!\bra{0} \ket{0}\!\!\bra{0}] \ket{00}\!\!\bra{00}_{AB} \otimes \ket{0}\!\!\bra{0}_{C} \\
    & +  \operatorname{Tr}[\ket{1}\!\!\bra{1} \ket{0}\!\!\bra{0}] \ket{11}\!\!\bra{11}_{AB} \otimes \ket{0}\!\!\bra{0}_{C}\Big\Vert_{\left(1\right)} \\
    &= \Big\Vert \operatorname{Tr}[\ket{000}\!\!\bra{000}_{ABC} \Big\Vert_{\left(1\right)} \\
    &= 1
\end{align}
whereas
\begin{align}
    R\left(\mathcal{N}_{2}\right) &= \Big\Vert C_{3} \left( \mathcal{N}_{2}\left(\ket{0}\!\!\bra{0}\right)_{A} \otimes \ket{00}\!\!\bra{00}_{BC} \right) \Big\Vert_{\left(1\right)} \\
    &= \Big\Vert C_{3} \left(  \ket{+00}\!\!\bra{+00}_{ABC} \right) \Big\Vert_{\left(1\right)} \\
    &= \Big\Vert \operatorname{Tr}[\ket{0}\!\!\bra{0} \ket{+}\!\!\bra{+}] \ket{00}\!\!\bra{00}_{AB} \otimes \ket{0}\!\!\bra{0}_{C} \\
    & +  \operatorname{Tr}[\ket{1}\!\!\bra{1} \ket{+}\!\!\bra{+}] \ket{11}\!\!\bra{11}_{AB} \otimes \ket{0}\!\!\bra{0}_{C}\Big\Vert_{\left(1\right)} \\
    &= \left\Vert \operatorname{Tr}\left[\frac{1}{2} \ket{000}\!\!\bra{000}_{ABC} + \frac{1}{2} \ket{110}\!\!\bra{110}_{ABC}\right]\right\Vert_{\left(1\right)} \\
    &= \frac{1}{2}
\end{align}
Thus, there is no ordering between channels that are equivalent up to a unitary. 

We now similarly demonstrate that $C_{1}$ cannot be an arbitrary bipartite channel. We consider the following channels and games:
\begin{align}
   \mathcal{N}_{1}\left(\rho\right) &= \operatorname{Tr}\Big[\ket{0}\!\!\bra{0} \rho_{A}\Big] \ket{0}\!\!\bra{0}  + \operatorname{Tr}\Big[\ket{1}\!\!\bra{1} \rho_{A}\Big] \ket{1}\!\!\bra{1}  \  \\
   \mathcal{N}_{2}\left(\rho\right) &= \operatorname{Tr}\Big[\ket{+}\!\!\bra{+} \rho_{A}\Big] \ket{+}\!\!\bra{+}  + \operatorname{Tr}\Big[\ket{-}\!\!\bra{-} \rho_{A}\Big] \ket{-}\!\!\bra{-}  \\ 
   \rho_{ABC}^{\left(x\right)} &= \ket{000}\!\!\bra{000} \ \ \forall x \\
   C_{1}\left(\rho_{AB}\right) &= \ket{00}\!\!\bra{00}_{AB} \ \ \forall \rho_{AB} \\
\end{align}
and where $C_{2}$ and $C_{3}$ are taken again to be the identity. Additionally, suppose that $X=1$ so the player wins only if they always obtain the first measurement outcomes. Evidently, in the above, $\mathcal{N}_{1}$ and $\mathcal{N}_{2}$ are both POVMS. However, 
\begin{align}
    R\left(\mathcal{N}_{1}\right) &= \Big\Vert  \mathcal{N}_{1} \circ C_{1}\left(\ket{00}\!\!\bra{00}\right)_{AB} \otimes \ket{0}\!\!\bra{0}_{C} \Big\Vert_{\left(1\right)} \\
    &= \Big\Vert \mathcal{N}_{1} \left(  \ket{0}\!\!\bra{0}_{A} \right) \otimes \ket{00}\!\!\bra{00}_{BC} \Big\Vert_{\left(1\right)} \\
    &= \Big\Vert  \ket{000}\!\!\bra{000}_{ABC} \Big\Vert_{\left(1\right)} \\
    &=  1
\end{align}
whereas
\begin{align}
    R\left(\mathcal{N}_{2}\right) &= \Big\Vert  \mathcal{N}_{2} \circ C_{1}\left(\ket{00}\!\!\bra{00}\right)_{AB} \otimes \ket{0}\!\!\bra{0}_{C} \Big\Vert_{\left(1\right)} \\
    &= \Big\Vert \mathcal{N}_{2} \left(  \ket{0}\!\!\bra{0}_{A} \right) \otimes \ket{00}\!\!\bra{00}_{BC} \Big\Vert_{\left(1\right)} \\
    &= \Big\Vert \frac{I}{2}_{A} \otimes \ket{00}\!\!\bra{00}_{BC} \Big\Vert_{\left(1\right)} \\
    &=  \frac{1}{2}
\end{align}
Thus, there is no ordering between channels that are equivalent up to a unitary.

From the above, evidently $C_{1}$ and $C_{3}$ need to be restricted. If we suppose that $C_{1}$ and $C_{3}$ are restricted to be the identity (or a unitary), then the remaining nontrivial channel is $C_{2}$.  
However, $C_{2}$ can be absorbed into the initial state preparation, so 
\begin{align}
    \rho_{ABC}^{\left(x\right)} \rightarrow C_{2} \left(\rho_{ABC}^{\left(x\right)}\right)
\end{align}

\section{Examples}
\label{Example_channels}
\subsection{Unitary Channel}
 Consider any unitary channel $\mathcal{U} _{A \rightarrow A}$ defined as $\mathcal{U}\left(\rho\right) = U \rho U^{\dag}$. Then if a game is described by $\rho_{ABX}$ where $\text{dim}\left(B\right) \leq \text{dim}\left(A\right)$, the reward is as follows:
\begin{align}
R_{\rho_{ABX}}\left(\mathcal{U}\right) &= \max_{\mathcal{E}} \left( \sum_{x} p_{x} \Big\Vert U \mathcal{E}_{A}\left(\rho_{AB}^{\left(x\right)}\right) U^{\dag} \Big\Vert_{x} \right) \\
&= \max_{\mathcal{E}} \left( \sum_{x} p_{x} \Big\Vert \mathcal{E}_{A}\left(\rho_{AB}^{\left(x\right)}\right) \Big\Vert_{x} \right) \\ 
\end{align}
% If $\rho_{AB}^{\left(x\right)} = \rho_{A}^{\left(x\right)} \otimes \rho_{B}^{\left(x\right)}$ for all $x$, then evidently the optimal choice is any $\mathcal{E}_{A}$ which outputs a pure state $\ket{\psi}\!\!\bra{\psi}_{A}$. Then the reward becomes:
% \begin{align}
% R_{\rho_{ABX}}\left(\mathcal{U}\right) &=  \sum_{x} p_{x} \Big\Vert \ket{\psi}\!\!\bra{\psi}_{A} \otimes \rho_{B}^{\left(x\right)} \Big\Vert_{x}  \\ 
% &=  \sum_{x} p_{x} \Big\Vert \rho_{B}^{\left(x\right)} \Big\Vert_{x}  \\ 
% \end{align}
Consider the game where $\rho_{AB}^{(x)} = \phi_{AB}^{+}$ for all $x$ and where $X$ is drawn according to distribution $\mathbf{p}$. Then 
\begin{align}
    R_{\rho_{ABX}}(\mathcal{U}) &= \max_{\mathcal{E}} \left( \sum_{x} p_{x} \Big\Vert \mathcal{E}_{A}\left(\phi_{AB}^{+}\right) \Big\Vert_{x} \right) \\ 
    & \leq  \sum_{x} p_{x} \Big\Vert \mathcal{I}_{A}\left(\phi_{AB}^{+} \right) \Big\Vert_{x} \\ 
    & =  \sum_{x} p_{x} \\
    & =1  \\ 
\end{align}

Clearly, for the game where $\rho_{AB}^{(x)} = \ket{0}\!\!\bra{0}$ we likewise have
\begin{align}
    R_{\rho_{ABX}}(\mathcal{U}) &= \max_{\mathcal{E}} \left( \sum_{x} p_{x} \Big\Vert \mathcal{E}_{A}\left(\ket{0}\!\!\bra{0} \right) \Big\Vert_{x} \right) \\ 
    & \leq  \sum_{x} p_{x} \Big\Vert \ket{0}\!\!\bra{0} \Big\Vert_{x} \\ 
    & =  \sum_{x} p_{x} \\
    & =1  \\ 
\end{align}

\subsection{Depolarizing Channel}

First, we consider the reward for $\mathcal{D}_{\gamma}$ in the game where $\rho_{AB}^{(x)} = \phi_{AB}^{+}$ for all $x$:
\begin{align}
    R_{\rho_{ABX}}(& \mathcal{D}_{\gamma})= \max_{\mathcal{E}} ( \sum_{x} p_{x}  \Big\Vert  \mathcal{D}_{\gamma} \circ \mathcal{E} (\phi_{AB}^{+}) \Big\Vert_{\left(x\right)} ) \\
    &= \max_{\mathcal{E}} ( \sum_{x} p_{x}  \Big\Vert (1-\gamma)  \mathcal{E} (\phi_{AB}^{+}) + \frac{\gamma I_{4}}{4}\Big\Vert_{\left(x\right)} ) \\
    &=  \sum_{x} p_{x}  \Big\Vert (1-\gamma) \phi_{AB}^{+} + \frac{\gamma I_{4}}{4}\Big\Vert_{\left(x\right)} \\
    &=  (1-\gamma) + \frac{\gamma}{4}(\sum_{x=1}^{4} x p_{x}) \\
\end{align}
Likewise, if $\rho_{AB}^{(x)} = \ket{0}\!\!\bra{0}_{A}$ for all $x$, we find:
\begin{align}
    R_{\rho_{ABX}}(& \mathcal{D}_{\gamma})= \max_{\mathcal{E}} ( \sum_{x} p_{x}  \Big\Vert  \mathcal{D}_{\gamma} \circ \mathcal{E} (\ket{0}\!\!\bra{0}) \Big\Vert_{\left(x\right)} ) \\
    &= \max_{\mathcal{E}} ( \sum_{x} p_{x}  \Big\Vert (1-\gamma)  \mathcal{E} (\ket{0}\!\!\bra{0}) + \frac{\gamma I_{2}}{2}\Big\Vert_{\left(x\right)} ) \\
    &=  \sum_{x} p_{x}  \Big\Vert (1-\gamma) \ket{0}\!\!\bra{0} + \frac{\gamma I_{2}}{2}\Big\Vert_{\left(x\right)} \\
    &=  (1-\gamma) + \frac{\gamma}{2}(p_{1} + 2p_{2}) \\
    &= (1-\gamma) + \gamma(1- \frac{p_{1}}{2})
\end{align}

\subsection{Amplitude Damping Channel}
Recall that the amplitude damping channel is defined as
\begin{align}
    \mathcal{A}_{\gamma}\left(\rho\right) & \triangleq \begin{pmatrix}
    \rho_{00} + \gamma \rho_{11} & \sqrt{1-\gamma} \rho_{01} \\
    \sqrt{1-\gamma} \rho_{10} & \left(1-\gamma\right) \rho_{11}.
    \end{pmatrix}
\end{align}
Then for the game where $\rho_{AB}^{(x)} = \ket{0}\!\!\bra{0}$ for all $x$, we have
\begin{align}
   R_{\rho_{ABX}}( \mathcal{A}_{\gamma}) &= \max_{\mathcal{E}} \Big( \sum_{x} p_{x}  \Big\Vert  \mathcal{A}_{\gamma} \circ \mathcal{E} (\ket{0}\!\!\bra{0}) \Big\Vert_{\left(x\right)} \Big) \\
   &= \sum_{x} p_{x}  \Big\Vert  \mathcal{A}_{\gamma} (\ket{0}\!\!\bra{0}) \Big\Vert_{\left(x\right)} ) \\
   &= \sum_{x} p_{x}  \big\Vert  \ket{0}\!\!\bra{0} \big\Vert_{\left(x\right)} ) \\
   &= 1
\end{align}
Likewise if $\rho_{AB}^{(x)} = \phi_{AB}^{+}$ for all $x$, we have \\
\begin{align}
   R_{\rho_{ABX}}&( \mathcal{A}_{\gamma}) = \max_{\mathcal{E}} ( \sum_{x} p_{x}  \Big\Vert  \mathcal{A}_{\gamma} \circ \mathcal{E} (\phi_{AB}^{+}) \Big\Vert_{\left(x\right)} ) \\
   &= \max_{\mathcal{E}} ( \sum_{x} p_{x}  \Big\Vert \sum_{y, z =0}^{1}  \mathcal{A}_{\gamma}\! \circ \mathcal{E} (\ket{y}\!\!\bra{z}) \otimes \ket{y}\!\!\bra{z} \Big\Vert_{\left(x\right)} ) \\
   &= \max_{\phi} ( \sum_{x} p_{x}  \Big\Vert \sum_{y, z \in\{\phi, \phi^{\perp}\}}  \!\!\! \mathcal{A}_{\gamma}(\ket{y}\!\!\bra{z}) \otimes \ket{y}\!\!\bra{z} \Big\Vert_{\left(x\right)} ) \\
   &=  \sum_{x} p_{x}  \Big\Vert \sum_{y, z =0}^{1} \! \mathcal{A}_{\gamma}(\ket{y}\!\!\bra{z}) \otimes \ket{y}\!\!\bra{z} \Big\Vert_{\left(x\right)}  \\
   &=  p_{1}(1-\frac{\gamma}{2}) +(1-p_{1}) \\
   &=  1-p_{1}\frac{\gamma}{2} \\
\end{align}

\subsection{Dephasing Channel:}
The dephasing channel $\mathcal{F}_{\gamma}$ can be written as
\begin{align}
    \mathcal{F}_{\gamma}\left(\rho\right) = \left(1-\gamma\right) \rho + \gamma I_{\text{CL}}\left(\rho\right)
\end{align}
where $I_{\text{CL}}$ is the classical identity channel (completely dephasing channel.) Evidently for $\rho_{AB}^{(x)} = \ket{0}\!\!\bra{0}$ for all $x$ we have
\begin{align}
    R_{\rho_{ABX}}(\mathcal{F}_{\gamma}) &= \max_{\mathcal{E}} ( \sum_{x} p_{x}  \Big\Vert  \mathcal{F}_{\gamma} \circ \mathcal{E} (\ket{0}\!\!\bra{0}) \Big\Vert_{\left(x\right)} ) \\
    &=  \sum_{x} p_{x}  \Big\Vert  \mathcal{F}_{\gamma} (\ket{0}\!\!\bra{0}) \Big\Vert_{\left(x\right)} \\
    &=  \sum_{x} p_{x}  \Big\Vert  \ket{0}\!\!\bra{0} \Big\Vert_{\left(x\right)} \\
    &=1
\end{align}

If $\rho_{AB}^{\left(x\right)} = \phi_{AB}^{+}$ for all $x$, 
\begin{align}
    R_{\rho_{ABX}}(& \mathcal{F}_{\gamma})= \max_{\mathcal{E}} ( \sum_{x} p_{x}  \Big\Vert  \mathcal{F}_{\gamma} \circ \mathcal{E} (\phi_{AB}^{+}) \Big\Vert_{\left(x\right)} ) \\
    &= \max_{\mathcal{E}} ( \sum_{x} p_{x}  \Big\Vert (1-\gamma)  \mathcal{E} (\phi_{AB}^{+}) + \frac{\gamma \ket{0}\!\!\bra{0} \otimes I_{2}}{2}\Big\Vert_{\left(x\right)} ) \\
    &=  \sum_{x} p_{x}  \Big\Vert (1-\gamma) \phi_{AB}^{+} + \frac{\gamma \ket{0}\!\!\bra{0} \otimes I_{2}}{2}\Big\Vert_{\left(x\right)} \\
    &=  (1-\gamma) + \frac{\gamma}{2}(p_{1}+2(1-p_{1})) \\
    &= 1-  \frac{\gamma}{2}p_{1} \\
\end{align}

Results for the classical identity channel follow immediately by setting $\gamma =1$

\subsection{Projective measurements}

Finally, we consider the case where $\mathcal{N}_{\hat{\Pi}}$ is a channel which implements the projective qubit measurement $\hat{\Pi}= \{\Pi_{0}, \Pi_{1}\}$ s.t. 
\begin{equation}
    \mathcal{N}_{\hat{\Pi}}\left(\rho\right)=\operatorname{Tr}\big[\Pi_{0} \rho\big] \times \Pi_{0} +\operatorname{Tr}\big[\Pi_{1} \rho\big] \times \Pi_{1} 
\end{equation}
where $\Pi_{0}$ and $\Pi_{1}$ are both rank-one measurement elements.

We now calculate the reward for the game where $\rho_{AB}^{(x)} = \ket{0}\!\!\bra{0}$ for all $x$
\begin{align}
    R_{\rho_{ABX}}\left(\mathcal{N}_{\hat{\Pi}}\right) &=\max_{\mathcal{E}}\sum_{x=1}^{2}p_{x}\Big\|\sum_{i=1}^{2} \text{Tr}\big[ \Pi_{i} \mathcal{E}(\ket{0}\!\!\bra{0}) \big] \times \Pi_{i} \Big\|_{(x)} \\
    &= \sum_{x=1}^{2}p_{x}\Big\|\sum_{i=1}^{2} \text{Tr}\big[ \Pi_{i} \Pi_{0} \big] \times \Pi_{i} \Big\|_{(x)} \\
    &= \sum_{x=1}^{2}p_{x}\Big\| \Pi_{0} \Big\|_{(x)} \\
    &= 1
\end{align}

Finally, we calculate the reward for the game where $\rho_{AB}^{(x)} = \phi_{AB}^{+}$ for all $x$
\begin{align}
    R_{\rho_{ABX}}\left(\mathcal{N}_{\hat{\Pi}}\right) &=\max_{\mathcal{E}}\sum_{x=1}^{2}p_{x}\Big\|\sum_{i=1}^{2} \text{Tr}\big[ \Pi_{i} \mathcal{E}(\phi_{AB}^{+}) \big] \times \Pi_{i} \Big\|_{(x)} \\
    &= \sum_{x=1}^{2}p_{x}\Big\|\sum_{i=1}^{2} \text{Tr}\big[ \Pi_{i} \big( \Pi_{0} \otimes \frac{I}{2} \big) \big] \times \Pi_{i} \Big\|_{(x)} \\
    &= \sum_{x=1}^{2}p_{x}\Big\|\sum_{i=1}^{2}  \Pi_{0} \otimes \frac{I}{2}  \Big\|_{(x)} \\
    &= \frac{1}{2}p_{1} + (1-p_{1})
\end{align}
where in the second line we note that the optimal choice of $\mathcal{E}$ is the replacement channel which always prepares output state $\Pi_{0}$ (or alternatively, which always prepares output state $\Pi_{1}$).

\section{Monotonicity Proof}
\label{appendixE}
Suppose that $\rho^{*}$ is the state which satisfies
\begin{align}
    \rho^{*} & \triangleq \underset{\rho}{\text{argmax}} \left( \operatorname{Tr}\Big[ \mathcal{M}\left(\rho\right)^{2} \Big] \right) 
\end{align}
Consider diagonal matrices U, V. Through Holder's inequality, we have:
\begin{align}
    \operatorname{Tr}[UV] & \leq \sqrt{\operatorname{Tr}[U^{2}]} \sqrt{\operatorname{Tr}[V^{2}]}
\end{align}
It follows that if $\operatorname{Tr}[UV] \geq \operatorname{Tr}[U^{2}]$, then
\begin{align}
   \operatorname{Tr}[V^{2}] \geq \operatorname{Tr}[UV] \geq \operatorname{Tr}[U^{2}]
\end{align}
% \begin{align}
%     \operatorname{Tr}[UV] &= \sqrt{ \vec{u}^{T} \vec{u} } \sqrt \left( \vec{v}^{T} \vec{v} \right) \cos{\theta} \\
%     &= \sqrt \left( \operatorname{Tr}[U^{2}] \right) \sqrt \left( \operatorname{Tr}[V^{2}] \right) \cos{\theta}
% \end{align}
% Thus, if $\operatorname{Tr}[UV] \geq \operatorname{Tr}[U^2]$, then 
% \begin{align}
%     \sqrt{\operatorname{Tr}[V^{2}]} &= \frac{\operatorname{Tr}[UV]}{\sqrt{\operatorname{Tr}[U^{2}]} \cos{\theta}} \\
%     & \geq \sqrt{\operatorname{Tr}[UV]} \\
%     & \geq \sqrt{\operatorname{Tr}[U^{2}]}
% \end{align}
Now consider the game where $\rho_{AB}^{\left(x\right)} = \ket{0}\!\!\bra{0}_{A}$ for all $x$ (namely, the system $B$ is trivial), and where $p_{x}$ satisfy 
\begin{align}
    \sum_{x=j}^{\ell} p_{x} = \alpha \lambda^{\downarrow}_{j}\left(\mathcal{M}\left(\rho^{*}\right)\right)
\end{align}
where $\alpha$ is some positive normalisation constant. Then the reward for $\mathcal{M}$ is simply 
\begin{align}
    R\left(\mathcal{M}\right) &=  \max_{\mathcal{E}} \sum_{x=1}^{\ell} p_{x} \sum_{j=1}^{x} \lambda_{j}^{\downarrow}\left( \mathcal{M}\left( \mathcal{E}\left(\ket{0}\!\!\bra{0}\right) \right) \right)
     \\
      &= \max_{\rho} \sum_{x} p_{x} \sum_{j=1}^{x} \lambda_{j}^{\downarrow}\left( \mathcal{M}\left( \rho\right) \right)  
     \\
     &= \max_{\rho} \sum_{j=1}^{\ell} \sum_{x=j}^{\ell} p_{x} \lambda_{j}^{\downarrow}\left( \mathcal{M}\left( \rho\right) \right)
       \\
     &= \alpha \max_{\rho} \sum_{j} \lambda_{j}^{\downarrow}\left( \mathcal{M}\left(\rho\right) \right) \lambda_{j}^{\downarrow}\left( \mathcal{M}\left(\rho^{*}\right) \right)   \\
    &= \alpha \sum_{j} \lambda_{j}^{\downarrow}\left( \mathcal{M}\left(\rho^{*}\right) \right) \lambda_{j}^{\downarrow}\left( \mathcal{M}\left(\rho^{*}\right) \right)   \\
    &= \alpha \operatorname{Tr}[\mathcal{M}\left(\rho^{*}\right)^{2}] ]
\end{align}
Evidently, since $\mathcal{M} \precsim \mathcal{N}$, 
\begin{align}
    R\left(\mathcal{N}\right) &= \alpha \max_{\rho} \sum_{j} \lambda_{j}^{\downarrow}\left( \mathcal{N}\left(\rho\right) \right) \lambda_{j}^{\downarrow}\left( \mathcal{M}\left(\rho^{*}\right) \right) \\
     &= \alpha \sum_{j} \lambda_{j}^{\downarrow}\left( \mathcal{N}\left(\sigma\right) \right) \lambda_{j}^{\downarrow}\left( \mathcal{M}\left(\rho^{*}\right) \right)   \\
    & \geq \alpha \operatorname{Tr}[\mathcal{M}\left(\rho^{*}\right)^{2}]  
\end{align}
where in the above we denote the state which attains the optimum for $\mathcal{N}$ as  $\sigma$. Evidently, upon rewriting the above we have
\begin{align}
\operatorname{Tr}\left[ \text{diag}\left(\mathcal{N}\left(\sigma\right)\right) \text{diag}\left(\mathcal{M}\left(\rho^{*}\right)\right) \right] & \geq \operatorname{Tr}\left[ \mathcal{M}^{*}\left(\rho\right) \right]
\end{align}
This implies  
\begin{align}
    \operatorname{Tr}[\mathcal{N}\left(\sigma\right)^{2}] &\geq \operatorname{Tr}[\mathcal{M}\left(\rho^{*}\right)^{2}] \\
    & \triangleq \sup_{\rho}\left(\operatorname{Tr} \big[ \mathcal{M}\left(\rho\right)^{2} \big] \right)  
\end{align}
and the statement immediately follows. 

% We can assume $\mathcal{E}$ breaks the entanglement $\mathcal{E}\left(\rho_{AB}^{\left(x\right)}\right)=|\phi_1\rangle\!\langle\phi_1|\otimes\operatorname{Tr}_{A}[\rho_{AB}^{\left(x\right)}]$ . The Kraus operators for $\mathcal{E}$ is $\{|\phi_1\rangle\!\langle 0|,|\phi_1\rangle\!\langle 1|,\ldots ,|\phi_1\rangle\!\langle d-1|\}$.

% Denote $\sigma_{B}^{\left(x\right)}:=|\phi_1\rangle\!\langle\phi_1|\otimes\operatorname{Tr}_{A}[\rho_{AB}^{\left(x\right)}]$. Therefore the reward function becomes
% \begin{align}
%     R_{\mathbf{p}}\left(\mathcal{N}\right)=\max_{\mathcal{E}}\sum_{x=1}^{2d}p_{x}\big\Vert\sigma_{B}^{\left(x\right)}\big\Vert_{x}\ .
% \end{align}
% When the input state is the maximally entangled state $\rho_{AB}^{x}=\Phi_{++}$ for all $x$, the reward is
% \begin{align}
%     R_{\mathbf{p}}\left(\mathcal{N}\right)&=\sum_{x=1}^{2d}p_{x}\big\Vert\sum_{i=1}^d\langle\phi_i|\left(|\phi_1\rangle\!\langle\phi_1|\right)|\phi_i\rangle|\phi_i\rangle\!\langle\phi_i|\otimes\frac{I}{2}\big\Vert_{x}\\
%     &=\sum_{x=1}^{2d}p_{x}\big\Vert|\phi_1\rangle\!\langle\phi_1|\otimes\frac{I}{2}\big\Vert_{x}\\
%     &=1-\frac{1}{2}p_1
% \end{align}

\vspace{.2in}

\onecolumngrid

\end{document}